\newtheorem{theorem}{Theorem} 
\newtheorem{lemma}{Lemma} 
\title{\textbf{
				Estimation and inference of domain means subject to shape constraints}}
\author{
	\large
	\textsc{Cristian Oliva\thanks{coliva@colostate.edu} , Mary C. Meyer\thanks{meyer@stat.colostate.edu} and Jean D. Opsomer\thanks{jopsomer@stat.colostate.edu}} \\ [2mm] 
	\normalsize Department of Statistics, Colorado State University, Fort Collins, Colorado, USA, 80523 \\ 
}
\date{ \vspace{3mm} Updated version by \today}
\begin{document}

\maketitle

\begin{abstract}

	\noindent Population domain means are frequently expected to respect shape or order constraints that arise naturally with survey data. For example, given a job category, mean salaries in big cities might be expected to be higher than those in small cities, but no order might be available to be imposed within big or small cities. A design-based estimator of domain means that imposes constraints on the most common survey estimators is proposed. Inequality restrictions that can be expressed with irreducible matrices are considered, as these cover a broad class of shapes and partial orderings. The constrained estimator is shown to be consistent and asymptotically normally distributed under mild conditions, given that the  shape is a reasonable assumption for the population. Further, simulation experiments demonstrate that both estimation and variability of domain means are improved by the constrained estimator in comparison with usual unconstrained estimators, especially for small domains. An application of the proposed estimator to the 2015 U.S. National Survey of College Graduates is shown. 
	
\end{abstract}

\section{Introduction}

Fine-scale domain estimates are frequently of interest for large-scale surveys, as these are highly useful for many data users in data-producing agencies. Although the overall sample size of such surveys might be very large, samples sizes for numerous domains are often too small for reliable estimates. For instance, the National Compensation Survey (\url{www.bls.gov/ncs}), produced by the U.S. Bureau of Labor Statistics, is designed to provide wage and salary estimates by occupation for many metropolitan areas and for the nation. However, for certain cities or regions, the sample sizes might not be large enough to produce estimates with acceptable precision.

Domain estimators that are based only on the domain-specific sample data (\textsl{direct estimators}) tend to lack adequate precision for small domains \citep{rao03}. One possible approach to avoid such a problem could be to aggregate small domains into bigger scales so that more reliable direct estimators can be produced for those scales, leading to the generation of more aggregated information than the actual desired scale. An alternative to producing small domain estimates could be changing from a design-based to a model-based estimation methodology such as small area models. In this paper, we present an approach where domains are allowed to borrow information from their neighbors by imposing shape or order assumptions that are reasonable for the population.

Information regarding the \textsl{shape} of population domain means arises naturally in surveys. For instance, certain jobs might be expected to receive better salaries than others, or younger people are expected to have, on average, lower glucose level than older people. However, given that small domains tend to produce direct estimates with high variability, such shape constraints are often violated at the sample level. Recently, \cite{wu16} proposed a domain mean estimation methodology that relies on the assumption of monotone population domain means. By combining the monotonicity information of domain means and design-based estimators in the estimation stage, they proposed a \textsl{constrained} estimator that respects the monotone assumption. Such an estimator was shown to improve precision and variability of domain mean estimates in comparison with direct estimators, given that the assumption of monotonicity is reasonable.

Many other types of shape constraints beyond monotonicity may also be expected to hold in estimates of population domain means. In general, any set of constraints can be represented through a \textsl{constraint matrix}, where each of its rows defines a constraint. \cite{meyer99} introduced the concept of \textsl{irreducible} matrices to cover the possible case of having more constraints than dimensions. Intuitively, a constraint matrix is called irreducible when it does not contain redundant restrictions. For illustration of a constraint matrix, suppose the variable of interest is the annual average salary of faculty in certain university. Further, consider the 6 domains generated from the cross-classification of the variables job position ($x_1$; 1=Assistant and 2=Associate) and department ($x_2$; 1=Anthropology, 2=English and 3=Engineering). Under the assumptions that, within a discipline, professors with an associate rank have higher salaries than those with an assistant rank; and that, within a rank, Engineering faculty members are expected to have higher salaries than those in either the Anthropology or English departments, then we can express the corresponding restrictions as,
\begin{equation} \label{eq:Aexample}
\boldsymbol{A\mu} \boldsymbol{\geq 0}, \; \; \text{ where } \; \;  \boldsymbol{A}=\left( \begin{array}{rrrrrr}
-1 & 1 & 0 & 0 & 0 & 0 \\
0 & 0 & -1 & 1 & 0 & 0 \\
0 & 0 & 0 & 0 & -1 & 1 \\
-1 & 0 & 0 & 0 & 1 & 0 \\
0 & 0 & -1 & 0 & 1 & 0 \\
0 & -1 & 0 & 0 & 0 & 1 \\
0 & 0 & 0 & -1 & 0 & 1 
\end{array} \right),
\end{equation} 
$\boldsymbol{\mu}=(\mu_{11},\mu_{21}, \mu_{12},\mu_{22},\mu_{13},\mu_{23} )^{\top}$, with $\mu_{ij}$ representing the mean of the domain that corresponds to $x_1=i$ and $x_2=j$; $\boldsymbol{0}$ being the zero vector, and the inequality being element-wise. In this example, the constraint matrix $\boldsymbol{A}$ is irreducible.

This paper contains theoretical properties and applications of a new constrained estimator for population domain means that respect shape constraints that are expressed with irreducible matrices. Through combining design-based domain mean estimators with these shape constraints, we propose a broadly applicable estimator that improves precision and variability of the most common direct estimators. Moreover, we provide a design-based variance estimation method that depends on the sample-determined linear space where the constrained estimator lands. If the constraints correspond to partial orderings, as in Equation \ref{eq:Aexample}, then the proposed estimator is simply a design-based estimator computed after adaptively pooling domains to respect the imposed restrictions, and the variance estimator depends on the pooling chosen by the constrained estimator. As monotone constraints can be written as one particular case from the broad class of shapes covered by irreducible matrices, our proposed estimator is an extension of the monotone estimator developed by \cite{wu16}. Constrained estimators that respect constraints driven by irreducible matrices have been already proposed for non-survey data. For instance, \cite{meyer13} made use of them to perform convex regression or isotonic regression on partial orderings. However, this general class of shape constraints have not been considered yet for survey data.

This paper is organized as follows: in Section \ref{sec:constrained} we introduce the constrained estimator and propose a linearization-based method for variance estimation. This section also contains some scenarios of interest where shape constraints can naturally arise for survey data. Section \ref{sec:properties} states the main theoretical properties of the constrained estimator that guarantee its use for estimation and inference of population domain means. The necessary assumptions used in these theoretical derivations are also stated in this section. Proofs of main theorems and auxiliary lemmas are fully contained in the Appendix. Section \ref{sec:simulations} shows through simulations that the constrained estimator improves domain mean estimation and variability in comparison with the unconstrained estimator, even though the assumed shape holds only approximately at the population level. Section \ref{sec:nscg} demonstrates the advantages of the proposed methodology on real survey data through an application to the 2015 National Survey of College Graduates. Lastly, Section \ref{sec:conclusions} contains some potential research directions related to the proposed constrained methodology for survey data. The proofs of the theoretical results shown in this paper are included in Appendix \ref{sec:appendix2}.

\section{Constrained estimator for domain means} \label{sec:constrained}

\subsection{Notation and preliminaries}

Let $U_N$ be the set of elements in a population of size $N$. Consider a sample $s_N$ of size $n_N$ that is drawn from $U_N$ using a probability sampling design $p_N(\cdot)$. Denote $\pi_{k,N}=\text{Pr}(k \in s_N)$ and $\pi_{kl,N}=\text{Pr}(k \in s_N, l \in s_N)$ as the first and second order inclusion probabilities, respectively. Assume that $\pi_{k,N}>0, \pi_{kl,N}>0$ for $k,l\in U_N$. Denote $\{U_{d,N}\}_{d=1}^D$ as a domain partition of $U_N$, where $D$ is the fixed number of domains and each $U_{d,N}$ is of size $N_d$. Also, let $s_{d,N}$ be the subset of size $n_{d,N}$ of $s_N$ that belongs to $U_{d,N}$.

For any study variable $y$, denote $\boldsymbol{\overline{y}}_{U_N}=(\overline{y}_{U_{1,N}}, \dots, \overline{y}_{U_{D,N}})^{\top}$ to be the vector of population domain means, where
\begin{equation} \label{eq:popdomainmean}
\overline{y}_{U_{d,N}}=\frac{\sum_{k \in U_{d,N}}y_k}{N_d}.
\end{equation}
In addition, consider the Horvitz-Thompson (HT) and H\'ajek estimators of $\overline{y}_{U_{d,N}}$, respectively given by
\begin{equation} \label{eq:sampledomainmean}
\widehat{y}_{s_{d,N}}=\frac{\sum_{k \in s_{d,N}}y_k/\pi_k}{N_d}, \; \; \; \tilde{y}_{s_{d,N}}=\frac{\sum_{k \in s_{d,N}}y_k/\pi_k}{\widehat{N}_d}; 
\end{equation}
where $\widehat{N}_d=\sum_{k \in s_{d,N}} 1/\pi_k$. Denote $\boldsymbol{\widehat{y}}_{s_N}$ and $\boldsymbol{\tilde{y}}_{s_N}$ to be the vectors of HT and H\'ajek estimators, respectively. Taking into consideration that the H\'ajek estimator is more useful in practice since it does not require information about the population domain sizes $N_d$, then we exclusively focus this paper on properties based on it. However, all developed results can be adapted to the HT estimator by replacing $\widehat{N}_d$ with $N_d$. For simplicity in our notation, we will avoid using the subscript $N$ for the rest of this paper unless needed for clarification.

\subsection{Proposed estimator}

Assume there is information available regarding the shape of the population domain means that can be expressed with $m$ constraints through a $m \times D$ irreducible constraint matrix $\boldsymbol{A}$. A matrix $\boldsymbol{A}$ is irreducible if none of its rows is a positive linear combination of other rows, and if the origin is also not a positive linear combination of its rows \citep{meyer99}. To take advantage of $\boldsymbol{\tilde{y}}_s$ to obtain an estimator that respects these shape constraints, we propose the constrained estimator $\boldsymbol{\tilde{\theta}}_s=(\tilde{\theta}_{s_1}, \dots, \tilde{\theta}_{s_D})^{\top}$ to be the unique vector that solves the following constrained weighted least squares problem,
\begin{equation} \label{eq:opt}
\underset{\boldsymbol{\theta}}{\min} (\boldsymbol{\tilde{y}}_s-\boldsymbol{\theta})^{\top}\boldsymbol{W}_s (\boldsymbol{\tilde{y}}_s-\boldsymbol{\theta}) \; \; \text{ subject to } \; \; \boldsymbol{A\theta \geq 0};
\end{equation}
where $\boldsymbol{W}_s$ is the diagonal matrix with elements $\widehat{N}_1/\widehat{N}, \widehat{N}_2/\widehat{N}, \dots, \widehat{N}_D/\widehat{N}$, and $\widehat{N}=\sum_{d=1}^D\widehat{N}_d$. The constrained problem in Equation \ref{eq:opt} can be alternatively written as finding the unique vector $\boldsymbol{\tilde{\phi}}_s$ that solves
\begin{equation} \label{eq:opt2}
\underset{\boldsymbol{\phi}}{ \min}  ||\boldsymbol{\tilde{z}}_s-\boldsymbol{\phi}||^2 \; \; \text{ subject to } \; \; \boldsymbol{A}_s \boldsymbol{\phi \geq 0};
\end{equation}
where $\boldsymbol{\tilde{z}}_s=\boldsymbol{W}_s^{1/2}\boldsymbol{\tilde{y}}_s$, $\boldsymbol{\phi}=\boldsymbol{W}_s^{1/2}\boldsymbol{\theta}$, and $\boldsymbol{A}_s=\boldsymbol{A}\boldsymbol{W}_s^{-1/2}$. Note that solving the optimization problem in Equation \ref{eq:opt2} allows straightforward computation of the constrained estimator $\boldsymbol{\tilde{\theta}}_s$. Moreover, observe that the transformed constrained matrix $\boldsymbol{A}_s$ is also irreducible if $\boldsymbol{A}$ is, and that it depends on the sample although $\boldsymbol{A}$ does not.

From a geometrical viewpoint,  $\boldsymbol{\tilde{\phi}}_s$ can be seen as the projection of the vector $\boldsymbol{\tilde{z}}_s$ onto the constraint cone $\Omega_s$ defined by the irreducible matrix $\boldsymbol{A}_s$ as
\begin{equation} \label{eq:cone}
\Omega_s=\{\boldsymbol{\phi} \in \mathbb{R}^D: \boldsymbol{A}_s \boldsymbol{\phi \geq 0} \}.
\end{equation}
That is, $\boldsymbol{\tilde{\phi}}_s=\Pi(\boldsymbol{\tilde{z}}_s | \Omega_s)$, where $\Pi(\boldsymbol{u}|V)$ stands for the projection of $\boldsymbol{u}$ onto the space $V$. Further, the polar cone $\Omega_s^0$ \cite[p.~121]{rockafellar70}, which is the dual vector space of $\Omega_s$, is defined as 
\begin{equation} \label{eq:polarcone}
\Omega_s^0=\{ \boldsymbol{\rho} \in \mathbb{R}^D : \langle \boldsymbol{\rho}, \boldsymbol{\phi} \rangle  \leq 0, \; \; \forall \boldsymbol{\phi} \in \Omega_s  \},
\end{equation}
where $\langle \boldsymbol{u},\boldsymbol{v} \rangle =\boldsymbol{u}^{\top}\boldsymbol{v}$. Such a definition characterizes the polar cone as the set of vectors that form obtuse angles with all vectors in $\Omega_s$. \cite{meyer99} showed that the negative rows of an irreducible matrix are the \textsl{edges} (generators) of the polar cone, leading to the following characterization of the polar cone in Equation \ref{eq:polarcone}:
\begin{equation} \label{eq:polarconeedges}
\Omega_s^0=\{\boldsymbol{\rho} \in \mathbb{R}^D : \boldsymbol{\rho}=\sum_{j=1}^{m}a_j\boldsymbol{\gamma}_{s_j}, \; \; a_j \geq 0, \; \;  j=1,2,\dots,m \},
\end{equation}
where $\boldsymbol{\gamma}_{s_1},\boldsymbol{\gamma}_{s_2}, \dots, \boldsymbol{\gamma}_{s_m}$ are the rows of $-\boldsymbol{A}_s$. Equation \ref{eq:polarconeedges} shows that $\Omega_s^0$ is a finitely generated cone, which implies that it is a \textsl{polyhedral} cone. \citet[p.~17]{robertson88} established necessary and sufficient conditions for a vector $\boldsymbol{\tilde{\phi}}_s$ to be the projection of $\boldsymbol{\tilde{z}}_s$ onto $\Omega_s$. That is, $\boldsymbol{\tilde{\phi}}_s \in \Omega_s$ solves the constrained problem in Equation \ref{eq:opt2} if and only if
\begin{equation*}
\langle\boldsymbol{\tilde{z}}_s-\boldsymbol{\tilde{\phi}}_s, \boldsymbol{\tilde{\phi}}_s \rangle=0, \; \; \text{ and } \; \;  \langle \boldsymbol{\tilde{z}}_s-\boldsymbol{\tilde{\phi}}_s, \boldsymbol{\phi}\rangle \leq 0, \; \; \forall \boldsymbol{\phi} \in \Omega_s.
\end{equation*}  
Moreover, the above conditions can be adapted to the polar cone as follows: the vector $\boldsymbol{\tilde{\rho}}_s \in \Omega_s^0$ minimizes $||\boldsymbol{\tilde{z}}_s-\boldsymbol{\rho}||^2$ over $\Omega_s^0$ if and only if
\begin{equation} \label{eq:KKTpolar}
\langle\boldsymbol{\tilde{z}}_s-\boldsymbol{\tilde{\rho}}_s, \boldsymbol{\tilde{\rho}}_s \rangle=0, \; \; \text{ and } \; \;  \langle \boldsymbol{\tilde{z}}_s-\boldsymbol{\tilde{\rho}}_s, \boldsymbol{\gamma}_{s_j}\rangle \leq 0 \; \text{ for } \; j=1,2,\dots, m.
\end{equation}

Although the constrained problem in Equation \ref{eq:opt2} does not have a general closed form solution, there are some particular cases where this can be explicitly characterized. For instance, \citet[p.~23]{robertson88} demonstrated that, under partial ordering constraints, the solution $\boldsymbol{\tilde{\theta}}_s$ of the constrained problem in Equation \ref{eq:opt} takes the form
\begin{equation} \label{eq:partialorder}
\tilde{\theta}_{s_d}=\underset{U: d \in U}{\max} \; \; \underset{L: d \in L}{\min} \frac{\sum_{d \in L \cap U }\widehat{N}_d \tilde{y}_{s_d}}{\sum_{d \in L \cap U} \widehat{N}_d}, \; \; \; \text{ for } d=1,\dots, D;
\end{equation}
where $L$ and $U$ are lower and upper sets with respect to the partial ordering, respectively. Equation \ref{eq:partialorder} shows that the proposed constrained estimator is simply pooling neighboring domains in such a way that the imposed constraints are respected. Heuristically, this is an advantageous property for small domains, as it allows them to borrow strength from other domains.

One approach to computing $\boldsymbol{\tilde{\phi}}_s$ is based on the edges of the constraint cone $\Omega_s$. However, the number of edges can be considerably larger than the number of constraints for large values of $D$, especially for the case when there are more constraints than domains (see \citealp{meyer99}). Moreover, given the lack of a general closed form solution for the edges of $\Omega_s$ (when $m>D$), then the edges need to be computed numerically. This task can be a computationally demanding job, which makes this approach an inefficient way to compute $\boldsymbol{\tilde{\phi}}_s$. Fortunately, a more efficient algorithm based on computing the projection onto the polar cone has been developed: the Cone Projection Algorithm (CPA) \citep{meyer13b}. This alternative approach takes advantage of the easy-to-find edges $\boldsymbol{\gamma}_{s_j}$ of the polar cone, the conditions in Equation \ref{eq:KKTpolar}, and the fact that $\Pi(\boldsymbol{\tilde{z}}_s | \Omega_s) = \boldsymbol{\tilde{z}}_s - \Pi(\boldsymbol{\tilde{z}}_s | \Omega_s^0 )$. We remark that the latter fact is a key component on the proofs of the main theoretical results shown in this paper. CPA has been implemented in the software \texttt{R} into the \texttt{coneproj} package. See \cite{liao14} for further details.

\subsection{Variance estimation of $\tilde{\theta}_{s_d}$}

The conditions in Equation \ref{eq:KKTpolar} can be used to show that the projection of $\boldsymbol{\tilde{z}}_s$ onto the polar cone $\Omega_s^0$ coincides with the projection onto the linear space generated by the edges $\boldsymbol{\gamma}_{s_j}$ such that $\langle \boldsymbol{\tilde{z}}_s-\boldsymbol{\tilde{\rho}}_s, \boldsymbol{\gamma_{s_j}} \rangle = 0$. This set of edges could be empty, meaning that the projection onto $\Omega_s^0$ is equal to the projection onto the zero vector. Moreover, this set of edges might not be unique. To formalize this idea, denote $V_{s,J}=\{\boldsymbol{\gamma}_{s_j} : j \in J\}$ for any $J \subseteq \{1,2,\dots, m\}$. Define the set $\mathcal{\overline{F}}_{s,J}$ as, 
\begin{equation} \label{eq:faceclosure}
	\mathcal{\overline{F}}_{s,J}=\{\boldsymbol{\rho} \in \mathbb{R}^D : \boldsymbol{\rho} = \sum_{j\in J}a_j\boldsymbol{\gamma}_{s_j}, \; \;  a_j \geq 0, \; \;  j \in J \},
\end{equation}  
where $\mathcal{\overline{F}}_{s,\emptyset}=\boldsymbol{0}$ by convention. That is, $\mathcal{\overline{F}}_{s,J}$ is the polyhedral sub-cone of $\Omega_s^0$ that starts at the origin and is defined by the edges in $V_{s,J}$. Further, let $\mathcal{L}(V_{s,J})$ be the linear space generated by the vectors in $V_{s,J}$. Hence, projecting onto $\Omega_s^0$ is equivalent to projecting onto $\mathcal{L}(V_{s,J})$, for an appropriate set $J$. 

Estimating appropriately the variance of $\tilde{\theta}_{s_d}$ is a complicated task, derived from the fact that  the projection of $\boldsymbol{\tilde{z}}_s$ onto $\Omega_s^0$ (or onto $\Omega_s$) might not always land on the same linear space $\mathcal{L}(V_{s,J})$ for different samples $s$. To better understand that, define $\mathcal{\tilde{G}}_s$ to the set of all subsets $J \subseteq \{ 1, 2, \dots, m\}$ such that $\Pi(\boldsymbol{\tilde{z}}_s|\Omega_{s}^0) = \Pi(\boldsymbol{\tilde{z}}_s|\mathcal{L}(V_{s,J}))\in \mathcal{\overline{F}}_{s,J}$. The latter definition is motivated by the following non-efficient procedure to find $\boldsymbol{\tilde{\rho}}_s$: project $\boldsymbol{\tilde{z}}_s$ onto each of the $2^m$ linear spaces generated by the edges in $V_{s,J}$, and then check if such a projection lands inside the portion of the polar cone $\Omega_{s}^0$ defined by $V_{s,J}$ (that is, $\mathcal{\overline{F}}_{s,J}$) and that satisfies the conditions stated in Equation \ref{eq:KKTpolar}. Note that, for different samples $s$, the sets $\tilde{\mathcal{G}}_s$ might be different. In addition, the cardinality of $\tilde{\mathcal{G}}_s$ can be greater than one. That is, there could be different sets $J_1$ and $J_2$ such that the projection onto the polar cone $\Omega_s^0$ is equal to projecting onto either $\mathcal{L}(V_{s,{J_1}})$ or $\mathcal{L}(V_{s,{J_2}})$. However, independently of which set is chosen, the projection $\boldsymbol{\tilde{\rho}}_s$ is unique. For instance, consider the case where $m>D$, so the set of all edges $\boldsymbol{\gamma}_{s_j}$ constitutes a linear dependent set of vectors. Hence, there could exist different subsets $J_1, J_2$ that induce the same linear space such that $J_1,J_2 \in \mathcal{\tilde{G}}_s$. A different example where the cardinality of $\mathcal{\tilde{G}}_s$ is greater than 1 is based on the drawn sample. For illustration, consider monotone increasing restrictions with $D=3$. Suppose that $\tilde{y}_{s_1}=\tilde{y}_{s_2}<\tilde{y}_{s_3}$. As there are only 3 domains, the transformed vector $\boldsymbol{\tilde{z}}_s$ has elements of the form
\begin{equation*} \label{eq:example1goodsets}
\tilde{z}_{s_1}=\sqrt{\frac{\widehat{N}_1}{\widehat{N}}}\tilde{y}_{s_1}, \; \; \tilde{z}_{s_2}=\sqrt{\frac{\widehat{N}_2}{\widehat{N}}}\tilde{y}_{s_2}. \; \; \tilde{z}_{s_3}=\sqrt{\frac{\widehat{N}_3}{\widehat{N}}}\tilde{y}_{s_3}. 
\end{equation*}
In this setting, it is straightforward to see that $\Pi(\boldsymbol{\tilde{z}}_s | \Omega_s^0)=\boldsymbol{0}$. However, to compute it, we project $\boldsymbol{\tilde{z}}_s$ onto each of the $2^2=4$ linear spaces generated by the polar cone edges 
\begin{equation*}  \label{eq:example2goodsets}
\boldsymbol{\gamma}_{s_1}=\left(\sqrt{\frac{\widehat{N}}{\widehat{N}_1}},-\sqrt{\frac{\widehat{N}}{\widehat{N}_2}},0\right)^{\top}, \; \; \; \boldsymbol{\gamma}_{s_2}=\left(0,\sqrt{\frac{\widehat{N}}{\widehat{N}_2}},-\sqrt{\frac{\widehat{N}}{\widehat{N}_3}}\right)^{\top}.
\end{equation*}
Hence, it can be seen that the conditions $\Pi(\boldsymbol{\tilde{z}}_s|\Omega_{s}^0) = \boldsymbol{0} = \Pi(\boldsymbol{\tilde{z}}_s|\mathcal{L}(V_{s,J}))\in \mathcal{\overline{F}}_{s,J}$ are satisfied only for $J=\emptyset$ and $J=\{1\}$, which implies that $\tilde{\mathcal{G}}_s=\{ \emptyset, \{1\} \}$. Moreover, note that $V_{s, \emptyset}$ and $V_{s,\{1\}}$ do not span the same linear spaces, which is what complicates the variance estimation of $\tilde{\theta}_{s_d}$. In general, the set of sample vectors where these scenarios occur has measure zero. However, they cannot be excluded at the population level.

We propose a variance estimator for $\tilde{\theta}_{s_d}$ that relies on the sets in $\tilde{\mathcal{G}}_s$ and is based on linearization methods. Consider any $J \in \mathcal{\tilde{G}}_s$, and let $\boldsymbol{P}_{s,J}$ be the projection matrix corresponding to the linear space $\mathcal{L}(V_{s,J})$, where $\boldsymbol{P}_{s,\emptyset}$ is the matrix of zeros by convention. By the selection of $J$, then $\boldsymbol{\tilde{\rho}}_s$ can be expressed as $\boldsymbol{P}_{s,J}\boldsymbol{\tilde{z}}_s$, which implies that $\boldsymbol{\tilde{\theta}}_s$ can be written as $\boldsymbol{\tilde{\theta}}_{s,J}=\boldsymbol{\tilde{y}}_s - \boldsymbol{W}_s^{-1/2}\boldsymbol{P}_{s,J}\boldsymbol{W}_s^{1/2}\boldsymbol{\tilde{y}}_s$, where we add the subscript $J$ in $\boldsymbol{\tilde{\theta}}_s$ to be aware that the expression depends on the chosen $J$.


Now, observe that $\boldsymbol{\tilde{\theta}}_{s,J}$ is a smooth non-linear function of the $\widehat{t}_d$'s and the $\widehat{N}_d$'s, where $\widehat{t}_d$ is the HT estimator of $t_d=\sum_{k \in U_d}y_k$. Therefore, treating $J$ as fixed, we can approximate the variance of $\tilde{\theta}_{s_d,J}$ via Taylor linearization \citep[p.~175]{sarndal92} by
\begin{equation}\label{eq:vartheta}
AV(\tilde{\theta}_{s_d,J})=\sum_{k \in U} \sum_{l \in U} \Delta_{kl}\frac{u_k}{\pi_k}\frac{u_l}{\pi_l},
\end{equation}
where $\Delta_{kl}=\pi_{kl}-\pi_k\pi_l$, and 
\begin{equation*} 
u_k=\sum_{i=1}^D \alpha_iy_{k}1_{k \in U_i} + \sum_{i=1}^D \beta_i1_{k \in U_i} \; \; \text{ for } \; \;  k=1,2,\dots, N,
\end{equation*}
with $1_A$ being the indicator variable for the event $A$, and
\begin{equation*} 
\alpha_i = \frac{\partial \tilde{\theta}_{s_d,J}}{\partial \widehat{t}_i} \Bigr| _{\substack{(\widehat{t}_1, \dots, \widehat{t}_D, \widehat{N}_1, \dots, \widehat{N}_D)=(t_1,\dots, t_D, N_1, \dots, N_D)}}; \; \; \; 
\beta_i = \frac{\partial \tilde{\theta}_{s_d}}{\partial \widehat{N}_i} \Bigr| _{\substack{(\widehat{t}_1, \dots, \widehat{t}_D, \widehat{N}_1, \dots, \widehat{N}_D)=(t_1,\dots, t_D, N_1, \dots, N_D)}}.
\end{equation*}
In addition, a consistent estimator of the approximated variance in Equation \ref{eq:vartheta}, is given by
\begin{equation}\label{eq:varthetaest}
\widehat{V}(\tilde{\theta}_{s_d,J})=\sum_{k \in s} \sum_{l \in s} \frac{\Delta_{kl}}{\pi_{kl}}\frac{\widehat{u}_k}{\pi_k}\frac{\widehat{u}_l}{\pi_l},
\end{equation}
where 
\begin{equation*}
\widehat{u}_k=\sum_{i=1}^D \widehat{\alpha}_iy_{k}1_{ k \in s_i } + \sum_{i=1}^D \widehat{\beta}_i 1_{k \in s_i}  \; \; \text{ for } \; \;  k=1,2,\dots, N,
\end{equation*}
with $\widehat{\alpha}_i, \widehat{\beta}_i$ obtained from $\alpha_i, \beta_i$ by substituting the appropriate Horvitz-Thompson estimators for each total population. Thus, we propose the estimator in Equation \ref{eq:varthetaest} as a variance estimator of $\tilde{\theta}_{s_d}$.

\subsection{Some shape constraints of interest}

As it was mentioned before, irreducible matrices can be used to express a broad range of shape constraints. We include some scenarios of interest with the sole purpose of highlighting the potential utility of our proposed estimator. Several other restrictions can be also considered by our constrained methodology as long as they conform to an irreducible matrix.
\begin{itemize}
	\item \textbf{Double monotone}: domain means are expected to be monotone with respect to two covariates. For instance, average glucose level may increase with people's age, and decrease with mean weekly exercising time.
	\item \textbf{Tree-ordering}: there is one domain mean that is expected to be smaller (or larger) than the others. For example, a placebo effect could be expected to be smaller than treatment effects.
\end{itemize}

In general, combinations of the above shape scenarios could also be considered. For instance, Figure \ref{fig:mixmonotones} contains four different estimates of the population domains means in Figure \ref{fig:mixmonotones}(a): unconstrained estimates are shown in Figure \ref{fig:mixmonotones}(b), and two constrained estimates obtained from different shape restrictions on variables $x_1$ and $x_2$ are shown in Figure \ref{fig:mixmonotones}(c)-(d). Note that unconstrained estimates are wiggly and do not look closer to the population domain means, meanwhile constrained estimates seem to be a more reasonable choice. 
\begin{figure}[ht!] 
	\begin{center}
	\subfigure[Population domain means.]{\includegraphics[width=.49\textwidth]{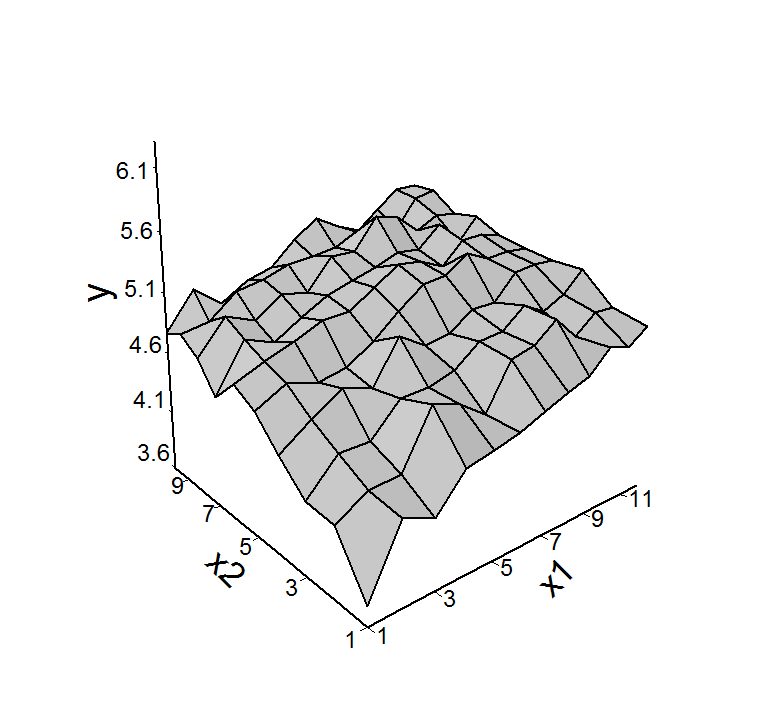}}\
	\subfigure[Unconstrained.]{\includegraphics[width=.49\textwidth]{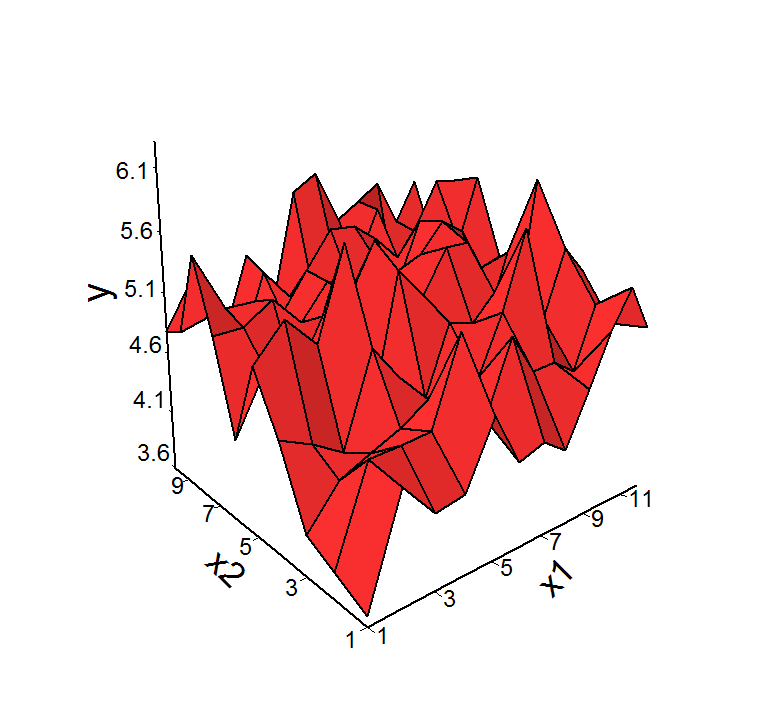}}\\
	\subfigure[$x_1$: monotone, $x_2$: monotone.]{\includegraphics[width=.49\textwidth]{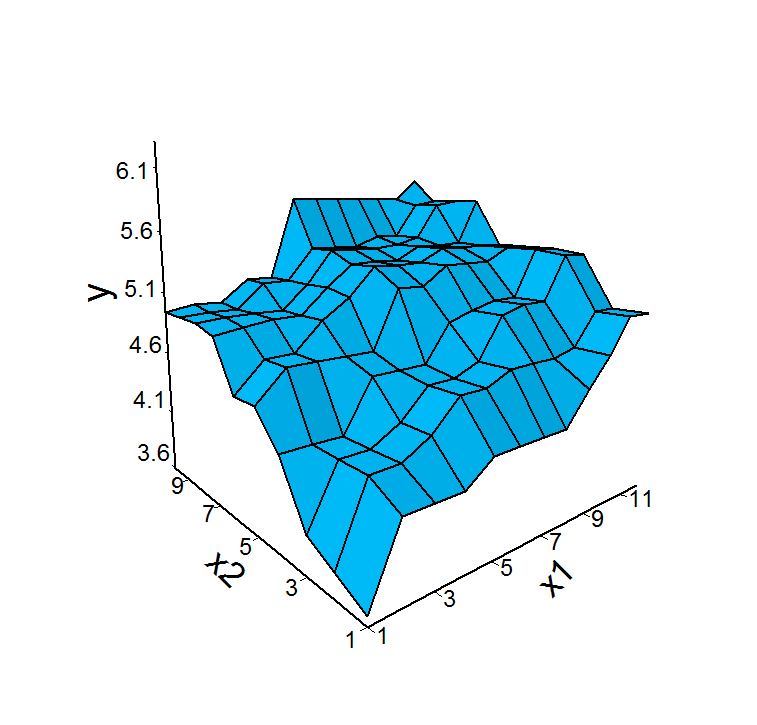}}\
	\subfigure[$x_1$: unconstrained, $x_2$: monotone.]{\includegraphics[width=.49\textwidth]{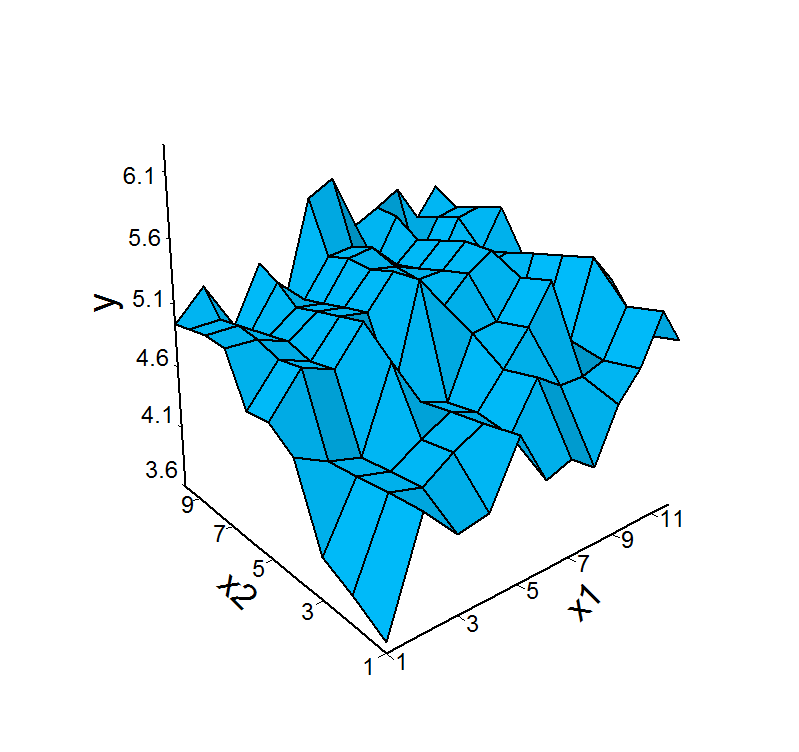}}\\
	\caption{Population domain means and unconstrained estimator (top). Constrained estimator under two different settings of shape constraints (bottom). }
	\label{fig:mixmonotones}
	\end{center}
\end{figure}

\section{Properties of the constrained estimator} \label{sec:properties}

\subsection{Assumptions}

To derive our theoretical results, we make assumptions on the asymptotic behavior of the population $U_N$ and the sampling design $p_N$. Such assumptions are:

\begin{itemize}
	\item[A1.] The number of domains $D$ is fixed.
	\item[A2.] $\underset{N \rightarrow \infty}{\limsup} \;  N^{-1}\sum_{k \in U}y_k^4 <\infty$.
	\item[A3.] There exist constants $\mu_d$ and $r_d>0$ such that $\overline{y}_{U_d, N}-\mu_d=O(N^{-1/2})$ and $N_{d,N}/N-r_d=O(N^{-1/2})$, for all $d$.
	\item[A4.] The sample size $n_N$ is non-random and satisfies $0<\lim_{N \rightarrow \infty} n_N/N <1$. In addition, there exists $\epsilon$, $0<\epsilon<1$, such that $n_{d,N} \geq \epsilon n_N/D$ for all $d$ and all $N$.
	\item[A5.] For all $N$, $\min_{k \in U_N} \pi_k \geq \lambda >0$, $\min_{k,l \in U_N} \pi_{kl} \geq \lambda^* >0$, and 
	\begin{equation*}
	\limsup_{N\rightarrow \infty}n_N \max_{k,l \in U_N : k\neq l} |\Delta_{kl}| <\infty 
	\end{equation*}
	where $\Delta_{kl}=\pi_{kl}-\pi_k\pi_l$.
	\item[A6.] For any vector of $q$ variables $\boldsymbol{x}$ with finite fourth population moment, 
		\begin{equation*}
		\text{var}_{p_N}(\boldsymbol{\widehat{x}}_{s_N})^{-1/2} (\boldsymbol{\widehat{x}}_{s_N}-\boldsymbol{\overline{x}}_{U_N}) \overset{d}{\rightarrow} \mathcal{N}(\boldsymbol{0}, \boldsymbol{I}_q),  
		\end{equation*}
	and
		\begin{equation*}
			\widehat{\text{var}}(\boldsymbol{\widehat{x}}_{s_N}) - \text{var}_{p_N}(\boldsymbol{\widehat{x}}_{s_N}) = o_p(n_N^{-1});  
		\end{equation*}
		
	where $\boldsymbol{\widehat{x}}_{s_N}$ is the HT estimator of $\boldsymbol{\overline{x}}_{U_N}=N^{-1}\sum_{k \in {U_N}}\boldsymbol{x}_k/\pi_k$, $\boldsymbol{I}_q$ is the identity matrix of dimension $q$, the design variance-covariance matrix $\text{var}_{p_N}(\boldsymbol{\widehat{x}}_{s_N})$ is positive definite, and $\widehat{\text{var}}(\boldsymbol{\widehat{x}}_{s_N})$ is the HT estimator of $\text{var}(\boldsymbol{\widehat{x}}_{s_N})$.
\end{itemize}

Assumption A1 establishes that the number of domains remains constant as the population size changes. The condition in Assumption A2 is made to have the property that the difference between design variances and their estimates are on the order of $o_p(n_N^{-1})$. In particular, note that this condition is satisfied when the variable $y$ is bounded, which can be naturally assumed for most types of survey data. Assumption A3 guarantees that the population domain means and sizes converge to the limiting values $\mu_d$ and $r_d$, respectively. Alternatively, the $\mu$ values can be thought as superpopulation parameters that generate the population elements $y_k$. In fact, our theoretical results depend on whether the assumed constraints hold for these superpopulation parameters and not for the population domain means. Although this might seem to be inappropriate given our interest on using constraints at the population level, Assumption A3 ensures that the shape of the domain means would be reasonable close to the shape of the superpopulation means. Assumption A4 states that the sample size in each domain cannot be smaller than a fraction of the ratio $n_N/D$, which would be obtained by dividing equally the sample size over all domains. This assumption aims to ensure that the moments of smooth functions of the $N^{-1}\widehat{t}_d$ and the $N^{-1}\widehat{N}_d$ are bounded. Also, it assumes that the sample size is non-random. However, this can be adapted to a random sample size by imposing certain conditions on the expected sample size $\mathbb{E}_{p_N}(n_N)$. Assumption A5 establishes non-zero lower bounds for both first and second order inclusion probabilities, and states that the design covariances $\Delta_{kl}$ must converge to zero at least as fast as $n_N^{-1}$. Assumption A6 ensures asymptotic normality for a general finite fourth moment vector of variables $\boldsymbol{x}$, which is needed to maintain normality properties on non-linear estimators. Moreover, it establishes consistency conditions on the variance-covariance estimator. 

\subsection{Main results}

Based on the property that $\Pi(\boldsymbol{\tilde{z}}_s | {\Omega}_s)=\boldsymbol{\tilde{z}}_s-\Pi(\boldsymbol{\tilde{z}}_s | {\Omega}_s^0)=\boldsymbol{\tilde{z}}_s-\boldsymbol{\tilde{\rho}}_s$, we derive some theoretical properties of the constrained estimator by focusing on the projection onto $\Omega_s^0$ instead of $\Omega_s$. Recall that the edges of the polar cone $\Omega_s^0$ are simply the $m$ rows of $-\boldsymbol{A}_s$, denoted by $\boldsymbol{\gamma}_{s_j}$; and that $\boldsymbol{\tilde{\rho}}_s$ can be described by the sets $J \in \mathcal{\tilde{G}}_s$. Being able to characterize the property that $J \in \mathcal{\tilde{G}}_s$ in terms of the vectors in $V_{s,J}$ allow us to obtain theoretical convergence rates, which are used to develop inference properties of the constrained estimator. When the set $J \in \mathcal{\tilde{G}}_s$ produces a set of linear independent vectors $V_{s,J}$, then it is straightforward that $\boldsymbol{\tilde{\rho}}_s$ can be written as $\boldsymbol{P}_{s,J}\boldsymbol{\tilde{z}}_s=\boldsymbol{A}_{s,J}^{\top} (\boldsymbol{A}_{s,J}\boldsymbol{A}_{s,J}^{\top})^{-1}\boldsymbol{A}_{s,J}\boldsymbol{\tilde{z}}_s$, where $\boldsymbol{A}_{s,J}$ denotes the matrix formed by the rows of $\boldsymbol{A}_s$ in positions $J$. Hence, based on the conditions in Equation \ref{eq:KKTpolar}, $J \in \mathcal{\tilde{G}}_s$ if and only if
\begin{equation} \label{eq:Jisgoodset}
\langle \boldsymbol{\tilde{z}}_s - \boldsymbol{P}_{s,J}\boldsymbol{\tilde{z}}_s , \boldsymbol{\gamma}_{s_j} \rangle \leq 0 \; \; \; \text{ for } j \notin J, \; \; \; \text{ and }  (\boldsymbol{A}_{s,J}\boldsymbol{A}_{s,J}^{\top})^{-1}\boldsymbol{A}_{s,J}\boldsymbol{\tilde{z}}_s \geq \boldsymbol{0}; 
\end{equation}
where the latter condition assures that $\Pi(\boldsymbol{\tilde{z}}_s|\mathcal{L}(V_{s,J})) \in \overline{\mathcal{F}}_{s,J}$. However, it is possible that the set $J \in \mathcal{\tilde{G}}_s$ produces a set of linearly dependent vectors $V_{s,J}$. In that case, Theorem \ref{theo:linspace} guarantees that it is always possible to find a subset $J^* \subset J$ such that $V_{s,J^*}$ is a linearly independent set that spans the same linear space as $V_{s,J}$, and also, that satisfies $J^* \in \mathcal{\tilde{G}}_s$. Thus, analogous conditions as in Equation \ref{eq:Jisgoodset} can be established using $J^*$ instead of $J$.

\begin{theorem} \label{theo:linspace}
	Let $\boldsymbol{A}$ be a $m \times D$ irreducible matrix with rows $-\boldsymbol{\gamma}_{j}$. Let $\Omega^0$ be its corresponding polar cone. For any set $J \subseteq \{1,2,\dots, m\}$, define $V_{J}=\{\boldsymbol{\gamma}_{j} : j \in J \}$. Further, denote $\mathcal{\overline{F}}_J$ to be the subcone of $\Omega^0$ generated by the edges given by the set $J$. For a vector $\boldsymbol{z}$, define its set $\mathcal{G}$ to be conformed by all sets $J \subseteq \{1,2,\dots, m\}$ such that $\Pi(\boldsymbol{z} | \Omega^0)=\Pi(\boldsymbol{z} | \mathcal{L}(V_J)) \in \mathcal{\overline{F}}_{J}$. 	Suppose $J$ is a non-empty set such that $V_{J}$ is a linearly dependent set and $J\in \mathcal{G}$. Then, there exists $J^* \subset J$ such that $V_{J^*}$ is a linearly independent set, $\mathcal{L}(V_{J^*})=\mathcal{L}(V_J)$, and $J^* \in \mathcal{G}$.
\end{theorem}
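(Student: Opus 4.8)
The plan is to exploit the fact that the two requirements defining membership in $\mathcal{G}$ decouple once we insist on preserving the span. Writing $\boldsymbol{\rho} := \Pi(\boldsymbol{z}\mid\Omega^0)$, the hypothesis $J \in \mathcal{G}$ gives both $\boldsymbol{\rho} = \Pi(\boldsymbol{z}\mid\mathcal{L}(V_J))$ and $\boldsymbol{\rho} \in \mathcal{\overline{F}}_J$. If I produce a subset $J^\ast \subseteq J$ with $\mathcal{L}(V_{J^\ast}) = \mathcal{L}(V_J)$, then $\Pi(\boldsymbol{z}\mid\mathcal{L}(V_{J^\ast})) = \Pi(\boldsymbol{z}\mid\mathcal{L}(V_J)) = \boldsymbol{\rho} = \Pi(\boldsymbol{z}\mid\Omega^0)$ automatically, because the orthogonal projection onto a subspace depends only on the subspace and not on its spanning set. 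Consequently the first defining equation of $\mathcal{G}$ comes for free, and the whole problem reduces to finding a linearly independent $J^\ast \subset J$ with $\mathcal{L}(V_{J^\ast}) = \mathcal{L}(V_J)$ for which $\boldsymbol{\rho}$ remains a nonnegative combination of $\{\boldsymbol{\gamma}_j : j \in J^\ast\}$, i.e.\ $\boldsymbol{\rho} \in \mathcal{\overline{F}}_{J^\ast}$.

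The core step is a Carath\'eodory-type reduction of the conic representation of $\boldsymbol{\rho}$. Starting from $\boldsymbol{\rho} = \sum_{j \in J} a_j \boldsymbol{\gamma}_j$ with $a_j \geq 0$, I would let $S = \{ j \in J : a_j > 0\}$ and examine $V_S$. If $V_S$ is linearly dependent, pick a nontrivial relation $\sum_{j \in S} c_j \boldsymbol{\gamma}_j = \boldsymbol{0}$ (with some $c_j > 0$ after a sign change) and replace $a_j$ by $a_j - t c_j$, choosing $t = \min\{ a_j / c_j : c_j > 0\} > 0$. This keeps all coefficients nonnegative, leaves $\boldsymbol{\rho}$ unchanged, and drives at least one positive coefficient to zero, so the support strictly shrinks. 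Iterating terminates with an index set $\tilde{J} \subseteq J$ such that $V_{\tilde{J}}$ is linearly independent and $\boldsymbol{\rho} = \sum_{j \in \tilde{J}} b_j \boldsymbol{\gamma}_j$ with $b_j \geq 0$.

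To recover the span I would then extend $\tilde{J}$ to a maximal linearly independent subset $J^\ast$ of $J$; because $V_J$ spans $\mathcal{L}(V_J)$, such a completion uses only edges from $V_J$ and yields $\mathcal{L}(V_{J^\ast}) = \mathcal{L}(V_J)$ with $V_{J^\ast}$ linearly independent and $\tilde{J} \subseteq J^\ast$. Since $\boldsymbol{\rho}$ is already a nonnegative combination of the edges indexed by $\tilde{J} \subseteq J^\ast$ (assign coefficient zero to the added edges), $\boldsymbol{\rho} \in \mathcal{\overline{F}}_{J^\ast}$, and by the decoupling observation $J^\ast \in \mathcal{G}$. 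Finally, $J^\ast$ is a proper subset: $|J^\ast| = \dim \mathcal{L}(V_J) < |J|$ precisely because $V_J$ is linearly dependent.

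The main obstacle is reconciling the three simultaneous demands on $J^\ast$ — linear independence, equality of span, and a nonnegative (conic) representation of $\boldsymbol{\rho}$. The Carath\'eodory reduction secures independence together with the nonnegative representation but may shrink the span; the subsequent basis-completion restores the span without disturbing either independence or the conic membership, since adjoining edges only enlarges $\mathcal{\overline{F}}$. Care is needed only to verify that the completion can be carried out inside $V_J$ itself, which holds because $V_J$ spans $\mathcal{L}(V_J)$, and that the proper-inclusion conclusion follows from the assumed linear dependence of $V_J$.
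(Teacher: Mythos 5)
Your proof is correct and follows essentially the same route as the paper's: a Carath\'eodory-type reduction of the conic representation of $\Pi(\boldsymbol{z}\mid\Omega^0)$ to a linearly independent subset (the paper's Lemma~1), followed by completion to a basis of $\mathcal{L}(V_J)$ using only edges from $V_J$, with the observation that enlarging the index set preserves both the projection identity and membership in the subcone. Your ``decoupling'' remark streamlines what the paper verifies via the orthogonality conditions, and your iteration also quietly covers the $\Pi(\boldsymbol{z}\mid\Omega^0)=\boldsymbol{0}$ case that the paper treats separately, but the substance is the same.
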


All different concepts that have been defined at the sample level, can be analogously defined at the superpopulation level. For instance, let $\mathcal{G}_{\mu}$ be the set of all subsets $J \subseteq \{1,\dots,m\}$ such that $\Pi(\boldsymbol{z}_\mu | \Omega_\mu^0)=\Pi(\boldsymbol{z}_\mu| \mathcal{L}(V_{\mu,J})) \in \mathcal{\overline{F}}_{\mu, J}$, where $\boldsymbol{z}_\mu$, $\Omega_\mu^0$, $V_{\mu,J}$ and $\overline{\mathcal{F}}_{\mu,J}$ are the analogous versions of $\boldsymbol{\tilde{z}}_s$, $\Omega_s^0$, $V_{s,J}$ and $\overline{\mathcal{F}}_{s,J}$ obtained by substituting $\boldsymbol{\tilde{y}}_s$ and $\boldsymbol{W}_s$ by $\boldsymbol{\mu}=(\mu_1, \dots, \mu_D)$ and $\boldsymbol{W}_\mu=\text{diag}(r_1, r_2, \dots, r_D)$. Moreover, necessary and sufficient conditions as in Equation $\ref{eq:KKTpolar}$ can be analogously established to characterize the vector $\boldsymbol{\rho}_\mu$ to be the projection onto $\Omega_\mu^0$.

Recall the set $\mathcal{\tilde{G}}_s$ could vary for different samples. Also, note that highly variable small samples are likely to choose sets $J \in \mathcal{\tilde{G}}_s$ that are not chosen in the `asymptotic true' $\mathcal{G}_\mu$. However, as the sample size increases, these wrong choices are less likely to occur since the sample domain means get closer to the limiting domain means. This intuitive idea is formalized in Theorem \ref{theo:badsets}, which states that sets that are not in $\mathcal{G}_\mu$ have an asymptotic zero probability of being chosen by the sample.

\begin{theorem} \label{theo:badsets}
	Consider any set $J \subseteq \{1,2,\dots,m\}$ such that $J \notin \mathcal{G}_{\mu}$. Then, $P(J \in \mathcal{\tilde{G}}_s)=O(n_N^{-1})$.
\end{theorem}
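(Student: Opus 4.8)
The plan is to express the event $\{J \in \tilde{\mathcal{G}}_s\}$ as a finite system of inequalities in smooth functions of the sample quantities, observe that $J \notin \mathcal{G}_\mu$ forces a strict violation of the analogous system at the superpopulation level, and then bound the probability that sampling variation flips the violated inequality by a Chebyshev-type argument at the $O(n_N^{-1})$ scale.

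First I would reduce to the case in which $V_{s,J}$ is a linearly independent set. The linear (in)dependence of $\{\boldsymbol{\gamma}_{s_j} : j \in J\}$ is unaffected by the sample, since $\boldsymbol{A}_s = \boldsymbol{A}\boldsymbol{W}_s^{-1/2}$ and right-multiplication by the invertible diagonal matrix $\boldsymbol{W}_s^{-1/2}$ preserves linear independence of rows; hence $V_{s,J}$, $V_{\mu,J}$ and $V_J$ share the same dependence structure deterministically. If $V_J$ is dependent, Theorem \ref{theo:linspace} guarantees that on $\{J \in \tilde{\mathcal{G}}_s\}$ there is a linearly independent $J^* \subset J$ with $\mathcal{L}(V_{s,J^*}) = \mathcal{L}(V_{s,J})$ and $J^* \in \tilde{\mathcal{G}}_s$. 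Because $\overline{\mathcal{F}}_{\mu,J^*} \subseteq \overline{\mathcal{F}}_{\mu,J}$ and $\mathcal{L}(V_{\mu,J^*}) = \mathcal{L}(V_{\mu,J})$, any such $J^*$ with $J^* \in \mathcal{G}_\mu$ would force $J \in \mathcal{G}_\mu$; thus $J \notin \mathcal{G}_\mu$ implies $J^* \notin \mathcal{G}_\mu$ for every spanning independent subset, and a union bound over the finitely many such $J^*$ reduces the claim to the linearly independent case.

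In the independent case I would use the characterization in Equation \ref{eq:Jisgoodset}: writing $\boldsymbol{P}_{s,J} = \boldsymbol{A}_{s,J}^\top(\boldsymbol{A}_{s,J}\boldsymbol{A}_{s,J}^\top)^{-1}\boldsymbol{A}_{s,J}$, the event $\{J \in \tilde{\mathcal{G}}_s\}$ is exactly $\{g_j(\boldsymbol{\tilde{y}}_s,\boldsymbol{W}_s)\leq 0,\ j\notin J\} \cap \{h_i(\boldsymbol{\tilde{y}}_s,\boldsymbol{W}_s)\geq 0,\ \forall i\}$, where $g_j = \langle \boldsymbol{\tilde{z}}_s - \boldsymbol{P}_{s,J}\boldsymbol{\tilde{z}}_s, \boldsymbol{\gamma}_{s_j}\rangle$ and $h_i = [(\boldsymbol{A}_{s,J}\boldsymbol{A}_{s,J}^\top)^{-1}\boldsymbol{A}_{s,J}\boldsymbol{\tilde{z}}_s]_i$, and each is the same fixed smooth function of $(\boldsymbol{\tilde{y}}_s,\boldsymbol{W}_s)$ whose superpopulation counterpart is obtained by evaluating at $(\boldsymbol{\mu},\boldsymbol{W}_\mu)$. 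Since all defining inequalities are weak, $J \notin \mathcal{G}_\mu$ means at least one of them is strictly violated at $(\boldsymbol{\mu},\boldsymbol{W}_\mu)$; call that function $f$ and set $\delta := |f(\boldsymbol{\mu},\boldsymbol{W}_\mu)| > 0$, a fixed constant because $\boldsymbol{\mu}$ and $\boldsymbol{W}_\mu$ are fixed limits. Then $\{J \in \tilde{\mathcal{G}}_s\}$ forces $f(\boldsymbol{\tilde{y}}_s,\boldsymbol{W}_s)$ to take the sign opposite to $f(\boldsymbol{\mu},\boldsymbol{W}_\mu)$, so that $P(J\in\tilde{\mathcal{G}}_s) \leq P\big(|f(\boldsymbol{\tilde{y}}_s,\boldsymbol{W}_s)-f(\boldsymbol{\mu},\boldsymbol{W}_\mu)| \geq \delta\big)$.

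To finish, I would split on a neighbourhood of $(\boldsymbol{\mu},\boldsymbol{W}_\mu)$ on which $f$ is Lipschitz (using that $\boldsymbol{A}_{\mu,J}\boldsymbol{A}_{\mu,J}^\top$ is invertible and the $r_d$ are bounded away from $0$, so the matrix inverse and $\boldsymbol{W}^{1/2}$ vary smoothly there), and control both the in-neighbourhood sign flip and the excursion outside the neighbourhood by Chebyshev's inequality together with the $L^2$ rate $\mathbb{E}\,\|(\boldsymbol{\tilde{y}}_s,\boldsymbol{W}_s)-(\boldsymbol{\mu},\boldsymbol{W}_\mu)\|^2 = O(n_N^{-1})$. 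This rate is where the design assumptions enter: the H\'ajek domain means and the weights $\widehat{N}_d/\widehat{N}$ are $\sqrt{n_N}$-consistent with bounded second moments under Assumptions A2, A4 and A5, while Assumption A3 controls the $O(N^{-1/2}) = O(n_N^{-1/2})$ bias between the population domain means and the limits $\mu_d, r_d$. The main obstacle I anticipate is precisely this last moment bound: because the relevant quantities are nonlinear ratio-type functions of Horvitz-Thompson totals, showing that the second moment of the deviation is genuinely $O(n_N^{-1})$ — rather than merely $o_p(1)$ — requires the uniform moment control supplied by the auxiliary lemmas, and it is this control that upgrades the conclusion from consistency to the sharp $O(n_N^{-1})$ rate.
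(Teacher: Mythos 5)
Your proposal is correct and follows essentially the same route as the paper's proof: reduce to a linearly independent subset $J^*$ via Theorem \ref{theo:linspace} (with $J^* \notin \mathcal{G}_\mu$ inherited from $J \notin \mathcal{G}_\mu$ through the equality of spans), locate one strictly violated population-level inequality among the finitely many weak inequalities characterizing membership in $\mathcal{\tilde{G}}_s$, and convert the required sign flip into an $O(n_N^{-1})$ probability by Chebyshev's inequality applied to smooth functions of the Horvitz--Thompson totals. The only cosmetic differences are that you make the union bound over candidate independent subsets explicit, and you sketch the second-moment bound via a direct Lipschitz/neighbourhood argument where the paper invokes Theorem 5.4.3 of \citet{fuller96} together with Lemma \ref{lem:bounded}.
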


Theorem \ref{theo:normaldist} contains the main result of this paper, which permits the use of the constrained estimator $\boldsymbol{\tilde{\theta}}_s$ to make inference of the population domain means. This generalizes Theorem 2 of \cite{wu16}, where only monotone restrictions were considered. Note the presence of a bias term $B$ on the mean of the asymptotic normal distribution. We conjecture that this term arises as a consequence of using the estimated variance $\widehat{V}(\tilde{\theta}_{s_d,J})$, solely based on the $J$ chosen by the observed sample, which does not always converge to the asymptotic variance of $\tilde{\theta}_{s_d}$. This undesirable situation occurs when there is more than one set $J \in \mathcal{G}_\mu$ such that their corresponding edges in $V_{\mu,J}$ span different linear spaces, or equivalently, that the projection onto the polar cone $\Omega_\mu^0$ belongs to the intersection of those different linear spaces. In particular, note that the condition $\boldsymbol{A\mu} >  \boldsymbol{0}$ means that the vector $\boldsymbol{z}_\mu$ is strictly inside the constraint cone $\Omega_\mu$, and then, there is no set $J\neq \emptyset$ such that $\Pi(\boldsymbol{z}_\mu| \mathcal{L}(V_{\mu,J}))=\boldsymbol{0}$. Thus, in this case, the bias term vanishes. 

\begin{theorem} \label{theo:normaldist}
	Suppose that $\boldsymbol{\mu}$ satisfies $\boldsymbol{A\mu}\geq \boldsymbol{0}$. Consider any set $J$ such that $J \in \mathcal{\tilde{G}}_s$. Then
	\begin{equation*}
	\widehat{V}(\tilde{\theta}_{s_d,J})^{-1/2}(\tilde{\theta}_{s_d}-\overline{y}_{U_d}) \overset{\mathcal{L}}{\rightarrow} \mathcal{N}(B, 1),
	\end{equation*}
	for any $d=1,2,\dots,D$, where $B=O(\sqrt{\frac{n_N}{N}})$ is a bias term that vanishes when $\boldsymbol{A\mu} > \boldsymbol{0}$.
\end{theorem}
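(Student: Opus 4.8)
The plan is to fix the sample-selected set $J \in \mathcal{\tilde{G}}_s$, use the explicit form $\boldsymbol{\tilde{\theta}}_{s,J} = \boldsymbol{\tilde{y}}_s - \boldsymbol{W}_s^{-1/2}\boldsymbol{P}_{s,J}\boldsymbol{W}_s^{1/2}\boldsymbol{\tilde{y}}_s$ derived above, and reduce the statement to a standard Taylor-linearization argument carried out separately for each admissible $J$. First I would invoke Theorem \ref{theo:badsets}: any $J \notin \mathcal{G}_\mu$ satisfies $P(J\in\mathcal{\tilde{G}}_s)=O(n_N^{-1})$, so up to an event of vanishing probability the selected set lies in $\mathcal{G}_\mu$, and by Theorem \ref{theo:linspace} I may assume that $V_{s,J}$ (hence $V_{\mu,J}$) is linearly independent. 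For such a $J$ the projection matrix $\boldsymbol{P}_{s,J}=\boldsymbol{A}_{s,J}^{\top}(\boldsymbol{A}_{s,J}\boldsymbol{A}_{s,J}^{\top})^{-1}\boldsymbol{A}_{s,J}$ is a smooth function of the diagonal of $\boldsymbol{W}_s$ in a neighborhood of its limit, since full row rank is an open condition and the limiting weights $r_d$ are bounded away from zero by Assumptions A3 and A4. Consequently $\tilde{\theta}_{s_d,J}$ is a continuously differentiable function $g_J$ of the $2D$ Horvitz--Thompson totals $\widehat{t}_1,\dots,\widehat{t}_D,\widehat{N}_1,\dots,\widehat{N}_D$.

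Next I would linearize. Treating $J$ as fixed, a first-order Taylor expansion of $g_J$ about the population totals writes $\tilde{\theta}_{s_d,J}-\theta_{d,J}$ as a linear combination of the Horvitz--Thompson estimation errors, with linearization variable $u_k$ exactly the one appearing in Equation \ref{eq:vartheta}, plus a remainder $R_N=o_p(n_N^{-1/2})$, where $\theta_{d,J}=g_J(t_1,\dots,t_D,N_1,\dots,N_D)$. Assumptions A2, A4 and A5 supply the moment bounds on the derivatives of $g_J$ and on $R_N$ required for this. Applying the asymptotic normality in Assumption A6 to the leading linear term together with Slutsky yields $AV(\tilde{\theta}_{s_d,J})^{-1/2}(\tilde{\theta}_{s_d,J}-\theta_{d,J})\overset{\mathcal{L}}{\rightarrow}\mathcal{N}(0,1)$ with $AV$ as in Equation \ref{eq:vartheta}. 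I would then use the variance-consistency half of Assumption A6, with the continuity of the plug-in coefficients $\widehat{\alpha}_i,\widehat{\beta}_i$, to establish $\widehat{V}(\tilde{\theta}_{s_d,J})/AV(\tilde{\theta}_{s_d,J})\overset{p}{\rightarrow}1$ (both of order $n_N^{-1}$ after rescaling), whence $\widehat{V}(\tilde{\theta}_{s_d,J})^{-1/2}(\tilde{\theta}_{s_d,J}-\theta_{d,J})\overset{\mathcal{L}}{\rightarrow}\mathcal{N}(0,1)$.

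The remaining step replaces the centering $\theta_{d,J}$ by the target $\overline{y}_{U_d}$ and produces the bias. Since $\boldsymbol{A\mu}\geq\boldsymbol{0}$ means $\boldsymbol{z}_\mu\in\Omega_\mu$, the bipolar relationship gives $\Pi(\boldsymbol{z}_\mu|\Omega_\mu^0)=\boldsymbol{0}$, so for every $J\in\mathcal{G}_\mu$ the exact identity $\boldsymbol{P}_{\mu,J}\boldsymbol{z}_\mu=\boldsymbol{0}$ holds. Writing $\theta_{d,J}-\overline{y}_{U_d}=-[\boldsymbol{W}_U^{-1/2}\boldsymbol{P}_{U,J}\boldsymbol{z}_U]_d$ with $\boldsymbol{z}_U=\boldsymbol{W}_U^{1/2}\boldsymbol{\overline{y}}_U$, Assumption A3 gives $\boldsymbol{z}_U=\boldsymbol{z}_\mu+O(N^{-1/2})$ and, by smoothness of the projection, $\boldsymbol{P}_{U,J}=\boldsymbol{P}_{\mu,J}+O(N^{-1/2})$; multiplying out and using $\boldsymbol{P}_{\mu,J}\boldsymbol{z}_\mu=\boldsymbol{0}$ yields $\theta_{d,J}-\overline{y}_{U_d}=O(N^{-1/2})$. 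Because $\widehat{V}(\tilde{\theta}_{s_d,J})^{-1/2}=O_p(n_N^{1/2})$, the bias contribution is $B=\widehat{V}(\tilde{\theta}_{s_d,J})^{-1/2}(\theta_{d,J}-\overline{y}_{U_d})=O(\sqrt{n_N/N})$. When $\boldsymbol{A\mu}>\boldsymbol{0}$ the point $\boldsymbol{z}_\mu$ is strictly interior to $\Omega_\mu$, so $\mathcal{G}_\mu=\{\emptyset\}$ and Theorem \ref{theo:badsets} forces the selected set to be $\emptyset$ with probability tending to one; then $\boldsymbol{P}_{U,\emptyset}=\boldsymbol{0}$, $\theta_{d,\emptyset}=\overline{y}_{U_d}$ and $B=0$, so the bias vanishes.

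Finally I would assemble the pieces by decomposing the sample space over the events $\{J\in\mathcal{\tilde{G}}_s\}$: on the union of good-set events (probability $\rightarrow 1$) the self-normalized statistic equals the $\mathcal{N}(0,1)$ term of the second step plus the deterministic bias of the third step, while the complementary bad-set events carry $O(n_N^{-1})$ probability and are negligible. The step I expect to be the main obstacle is the one flagged before the theorem: when several $J\in\mathcal{G}_\mu$ span genuinely different linear spaces, the limiting linearization variances $AV(\tilde{\theta}_{s_d,J})$ differ across these sets and do not collapse to a single marginal variance of $\tilde{\theta}_{s_d}$. The resolution is that the statement self-normalizes with the \emph{same} $J$ used to form the point estimate, so within each good-set event the ratio $\widehat{V}/AV$ still tends to one and the conditional limit is $\mathcal{N}(B,1)$ with unit variance; care is needed to make the linearization remainder and the variance-ratio convergence hold uniformly over the finitely many candidate sets in $\mathcal{G}_\mu$ and over the shrinking neighborhood of the limiting weights on which $\boldsymbol{P}_{s,J}$ stays smooth. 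Since $n_N/N$ tends to a constant in $(0,1)$ by Assumption A4, the bias $B$ is of exact order $\sqrt{n_N/N}$ and, as shown, disappears only under the strict inequality $\boldsymbol{A\mu}>\boldsymbol{0}$.
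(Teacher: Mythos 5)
Your proposal is correct and follows essentially the same route as the paper: decompose over which set $J$ is selected, use Theorem \ref{theo:badsets} to make the $J_G \in \mathcal{G}_\mu^c$ contributions negligible, apply Assumption A6 (via linearization) to each fixed-$J$ estimator $\tilde{\theta}_{s_d,J_G}$, identify the bias as $AV(\tilde{\theta}_{s_d,J_G})^{-1/2}(\theta_{U_d,J_G}-\overline{y}_{U_d})=O(\sqrt{n_N/N})$ via Assumption A3, and observe that $\boldsymbol{A\mu}>\boldsymbol{0}$ forces $\mathcal{G}_\mu=\{\emptyset\}$ so the bias term disappears. Your justification of $\theta_{U_d,J}-\overline{y}_{U_d}=O(N^{-1/2})$ through $\boldsymbol{P}_{\mu,J}\boldsymbol{z}_\mu=\boldsymbol{0}$ is in fact more explicit than the paper's one-line appeal to Assumption A3.
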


Note that Theorem \ref{theo:normaldist} relies on the fact that the assumed shape constraints hold for the vector of limiting domain means $\boldsymbol{\mu}$ instead of for the vector of population domain means $\boldsymbol{\overline{y}}_U$. In the next section, we show through simulations that the constrained estimator improves both estimation and variability when the population domains are approximately close to the assumed shape, in comparison with unconstrained estimators. 

\section{Performance of constrained estimator} \label{sec:simulations}

\subsection{Simulations}

We run simulation experiments to measure the performance of the proposed methodology to carry out estimation and inference of population domain means. Given a pair of natural numbers $D_1$ and $D_2$, we generate the limiting domain means $\mu_d$ from the monotone bivariate function $\mu(x_1,x_2)$ given by
\begin{equation*} \label{eq:mus}
\mu(x_1,x_2)= \sqrt{1+4x_1/D_1} + \frac{4\exp(0.5+2x_2/D_2)}{1+\exp(0.5+2x_2/D_2)}.
\end{equation*}
The $\mu_d$'s are created by evaluating $\mu(x_1,x_2)$ at every combination of $x_1=1, 2,\dots, D_1$ and $x_2=1, 2, \dots, D_2$, producing a total number of domains equal to $D=D_1D_2$. We set $D_1=6$ and $D_2=4$. Note that although the function $\mu(x_1,x_2)$ produces a matrix rather than a vector of domain means, it can be vectorized in order to represent the limiting domain means as the vector $\boldsymbol{\mu}$. For each domain $d$, we generate its $N_d=N/D=400$ elements by adding i.i.d. normally distributed noise with mean $0$ and variance $\sigma^2$ to the $\mu_d$. Once the elements of the population have been simulated, then the population domain means $\boldsymbol{\overline{y}}_U$ are computed. The population domain means used for simulations when $\sigma=1$ are displayed in Figure \ref{fig:truesim}. Observe that these domain means are reasonably (not strictly) monotone with respect to $x_1$ and $x_2$. 
\begin{figure}[ht!] 
	\begin{center}
		\subfigure{\includegraphics[width=.5\textwidth]{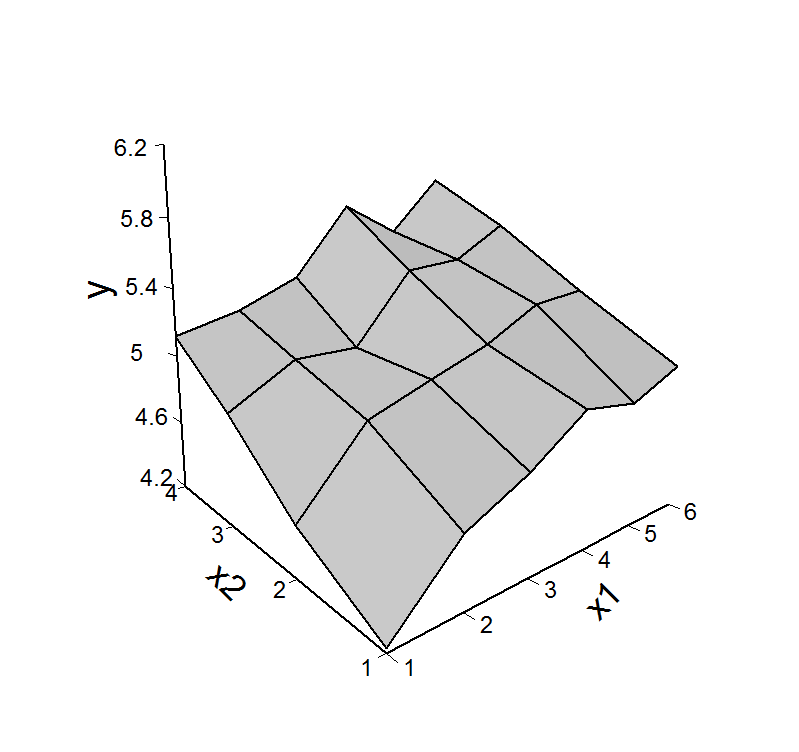}} \
		\caption{Population domain means for simulations when $\sigma=1$.}
		\label{fig:truesim}
	\end{center}
\end{figure}

Samples are drawn from a stratified sampling design without replacement, with $4$ strata that cut across the $D$ domains. Strata are constructed using an auxiliary variable $\nu$ that is correlated with the variable of interest $y$. The vector $\nu$ is created by adding i.i.d. standard normal distributed noise to $\sigma d/D$, for each element in domain $d$. Then, stratum membership is assigned by ranking the vector $\nu$, and creating $4$ blocks of $N/4=2400$ elements each based on such ranks. To make the design informative, we sample $n_N=480$ elements divided across strata in (60, 120, 120, 180).  This probability sampling design is similar to the one described in \cite{wu16}.

We consider 4 different scenarios obtained from the combination of two possible types of shape constraints and $\sigma=1$ or $2$. The first type of constraints assumes the population domain means are monotone increasing with respect to both $x_1$ and $x_2$ (double monotone), while the second type of constraints assumes monotonicity only with respect to $x_1$ (only $x_1$ monotone). Moreover, for a fixed $\sigma$, the exact same population is considered for the two possible types of constraints. For each scenario, the unconstrained $\boldsymbol{\tilde{y}}_s$ and constrained $\boldsymbol{\tilde{\theta}}_s$ estimates are computed along with their linearization-based variance estimates (Equation \ref{eq:varthetaest}). Constrained estimates are computed using the CPA, and their variance estimates are computed by relying on the sample-selected set $J \in \mathcal{\tilde{G}}_s$. In addition, 95\% Wald confidence intervals based on the normal distribution are constructed for both estimators. The lengths of these confidence intervals are omitted because they have the same behavior (up to the constant 1.96) as the variance estimates.  

To measure the precision of $\boldsymbol{\tilde{y}}_s$ and $\boldsymbol{\tilde{\theta}}_s$ as estimators of the population domain means $\boldsymbol{\overline{y}}_U$, we consider the Weighted Mean Squared Error (WMSE) given by
\begin{equation*}\label{eq:mse}
\text{WMSE}(\boldsymbol{\tilde{\varphi}}_s)=\mathbb{E} \left[ (\boldsymbol{\tilde{\varphi}}_s-\boldsymbol{\overline{y}}_U )^{\top} \boldsymbol{W}_U (\boldsymbol{\tilde{\varphi}}_s-\boldsymbol{\overline{y}}_U ) \right],
\end{equation*} 
where $\boldsymbol{\tilde{\varphi}}_s$ could be either the unconstrained or constrained estimator, $\boldsymbol{W}_U$ is the diagonal matrix with elements $N_d/N$, $d=1,\dots, D$.The WMSE values are approximated by simulations.

Simulation results are summarized in Figures \ref{fig:doublemonsigma1} - \ref{fig:onevarsigma2}, and are based on $R=10000$ replications. These display the 24 domains divided in groups of 6, where each is assumed to be monotone. For the double monotone scenario, similar plots with groups of 4 monotone domains each can be also pictured. From the fitting one sample plots, it can be seen that the constrained estimates can be exactly equal to the unconstrained estimates for some domains. In those cases, their variance estimates are also equal. Also, confidence intervals for the constrained estimator tend to be tighter in comparison with those for the unconstrained estimator. On average, the constrained estimator behaves slightly differently than the population domain means, due to their non-strict monotonicity. As an advantage, the percentiles for the constrained estimator are narrower, demonstrating the distribution of the proposed estimator is tighter than the distribution of the unconstrained estimator. For small values of $\sigma$, unconstrained estimates are closer to satisfy the assumed restrictions, which leads to small improvements on the constrained estimator over the unconstrained. In contrast, shape assumptions tend to be more severely violated in unconstrained estimates for larger values of $\sigma$, allowing the proposed estimator to gain much more efficiency on these cases. This latter property can be noted by observing that the constrained estimator percentile band gets farther away from the unconstrained estimator band as $\sigma$ increases.

In terms of variability, the constrained estimator has the smaller variance of the two estimators. However, on average, it gets overestimated by its corresponding linearization-based variance estimate. This might be a direct consequence of estimating the variance based only on the set $J \in \mathcal{\tilde{G}}_s$, which is actually a random set that might change from sample to sample. In contrast, the variance estimate of the unconstrained estimator underestimates the true variance, on average. Although it would be ideal to improve both of these variance estimates, we consider it to be less alarming to produce greater variance estimates, at least for inference purposes. In addition, confidence intervals for both estimators demonstrate a similar good coverage rate when $\sigma=1$, meanwhile such coverage gets slightly improved by the constrained estimator when $\sigma=2$.  

\begin{figure}[ht!] 
	\begin{center}
		\subfigure{\includegraphics[width=.49\textwidth]{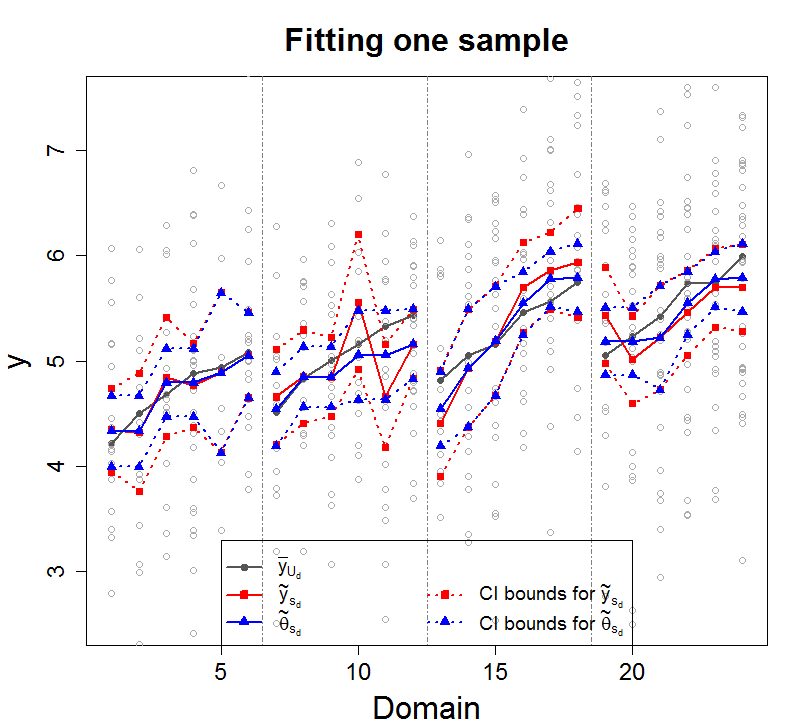}}\
		\subfigure{\includegraphics[width=.49\textwidth]{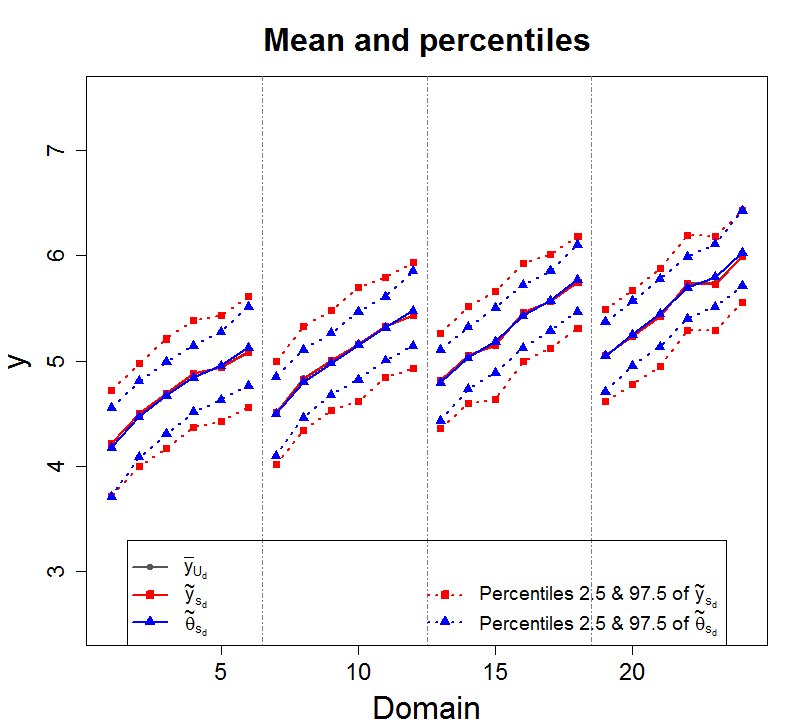}}\\
		\subfigure{\includegraphics[width=.49\textwidth]{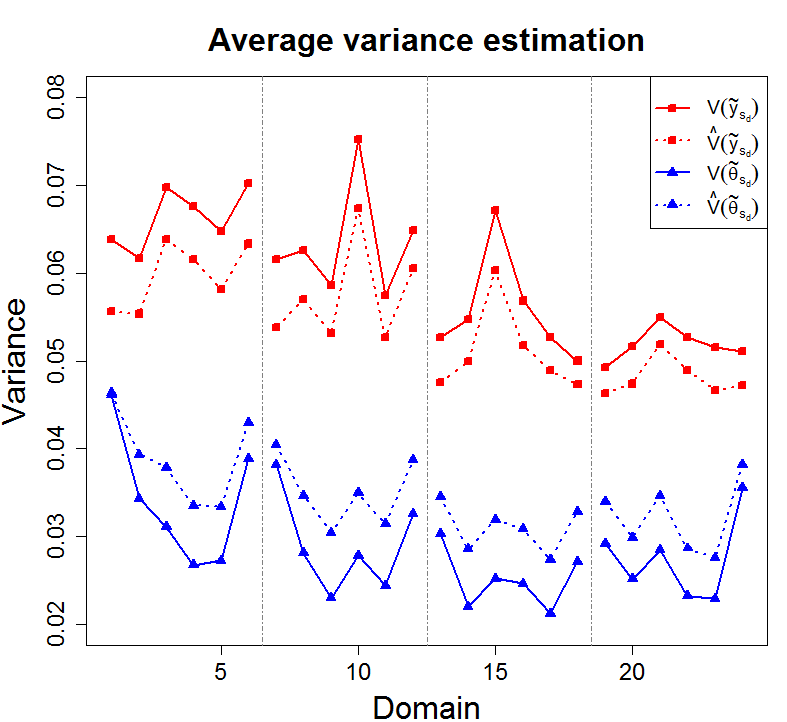}}\
		\subfigure{\includegraphics[width=.49\textwidth]{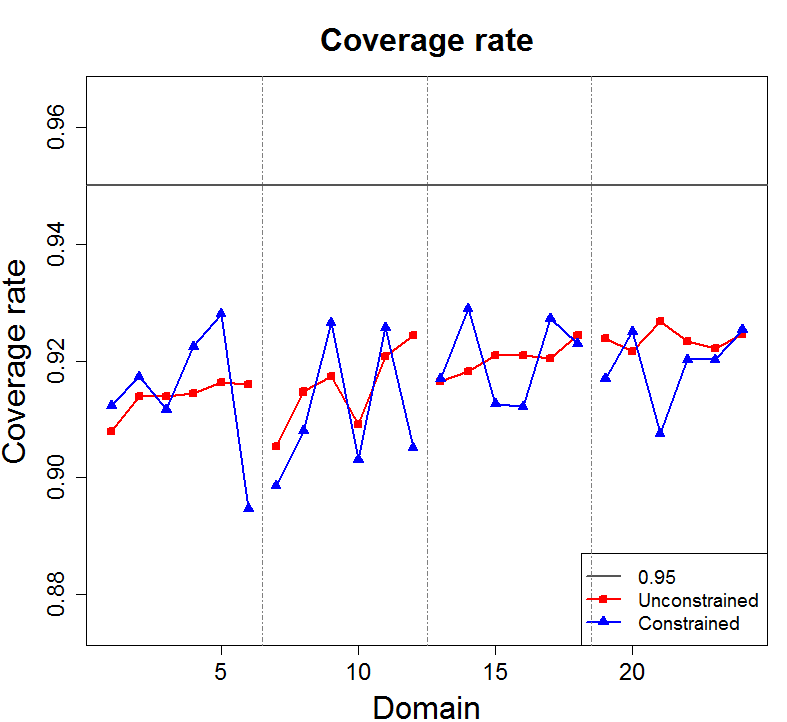}}
		\caption{Plots of simulation results for the unconstrained and constrained estimators under the double monotone scenario with $\sigma=1$, based on $10000$ replications.}
		\label{fig:doublemonsigma1}
	\end{center}
\end{figure} 

\begin{figure}[ht!] 
		\subfigure{\includegraphics[width=.49\textwidth]{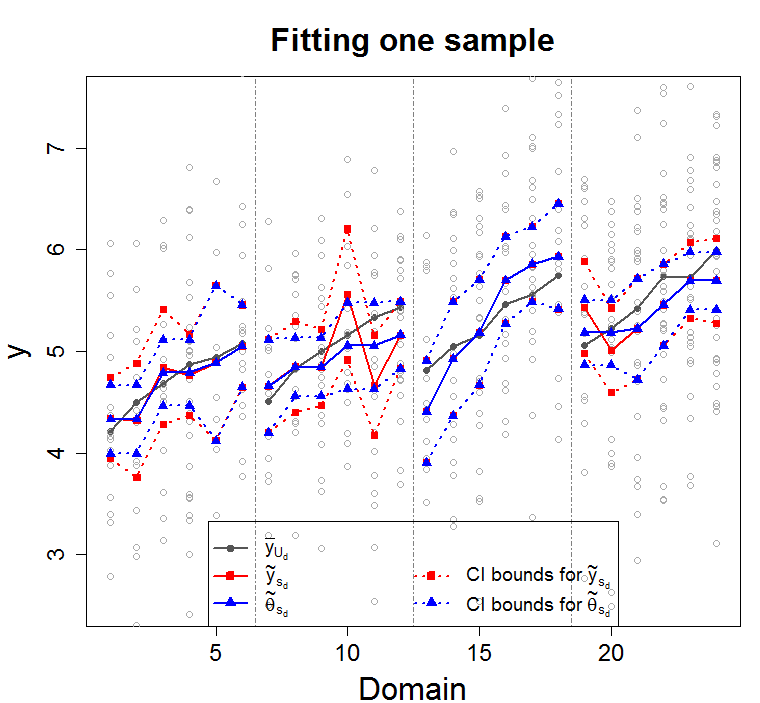}}\
		\subfigure{\includegraphics[width=.49\textwidth]{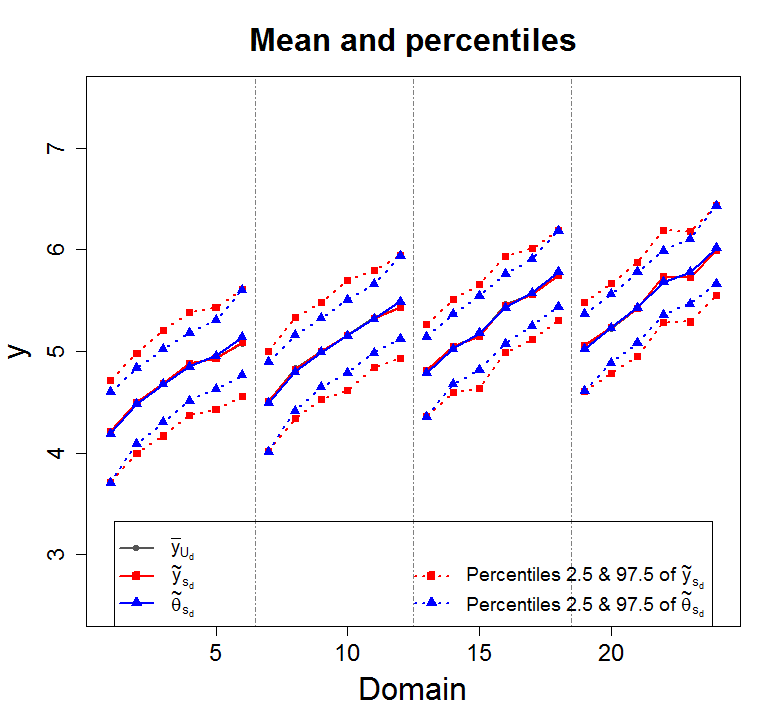}}\\
		\subfigure{\includegraphics[width=.49\textwidth]{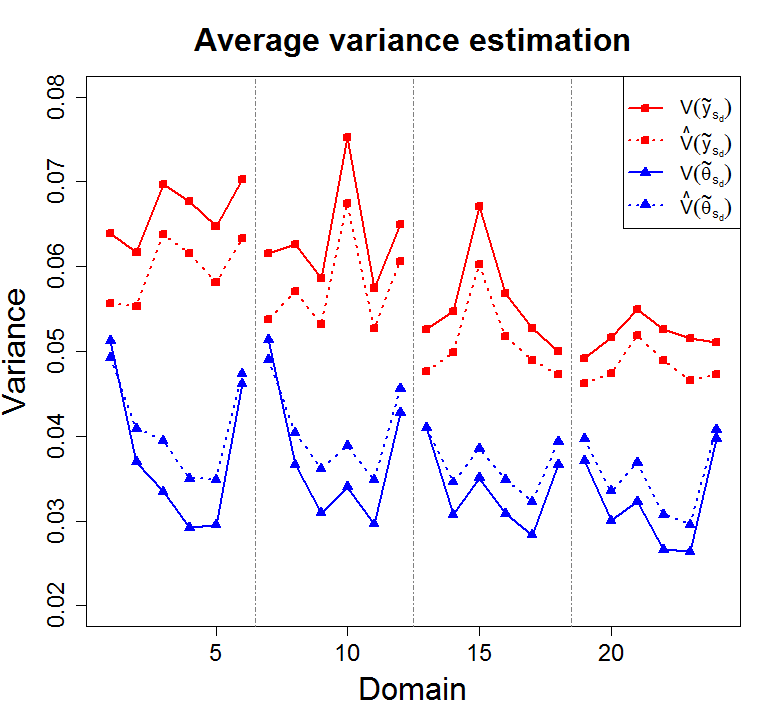}}\
		\subfigure{\includegraphics[width=.49\textwidth]{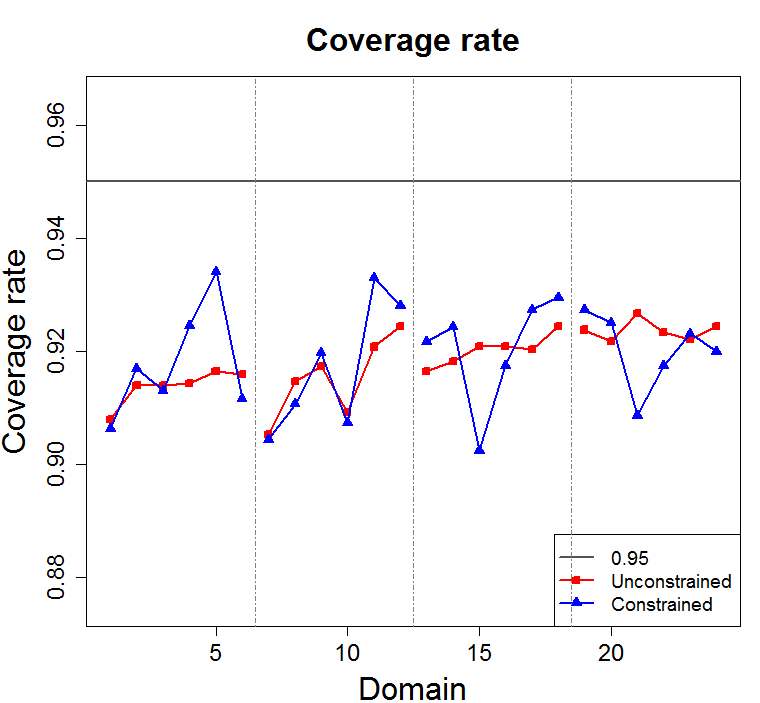}}
		\caption{Plots of simulation results for the unconstrained and constrained estimators under the only $x_1$ monotone scenario with $\sigma=1$, based on $10000$ replications.}
		\label{fig:onevarsigma1}
		\begin{center}
	\end{center}
\end{figure} 

\begin{figure}[ht!] 
	\begin{center}
		\subfigure{\includegraphics[width=.49\textwidth]{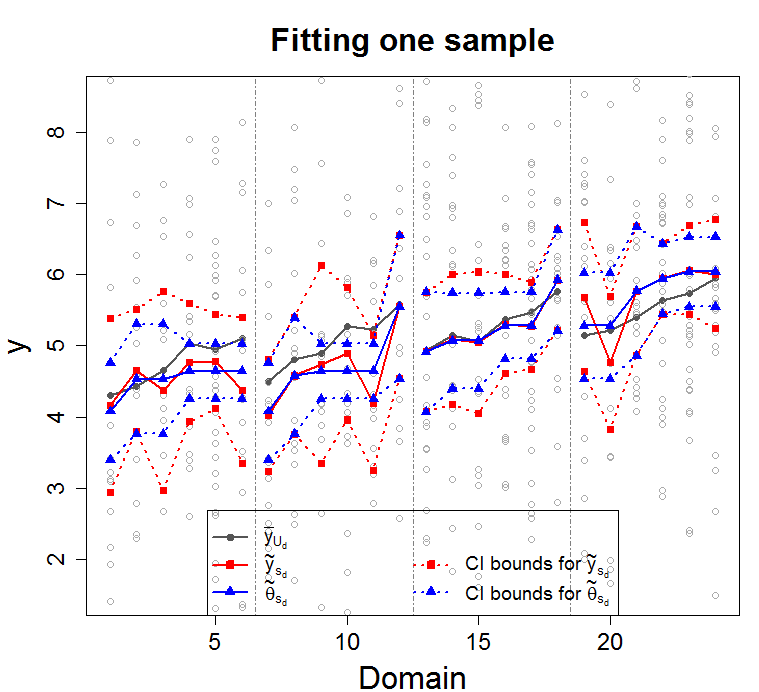}}\
		\subfigure{\includegraphics[width=.49\textwidth]{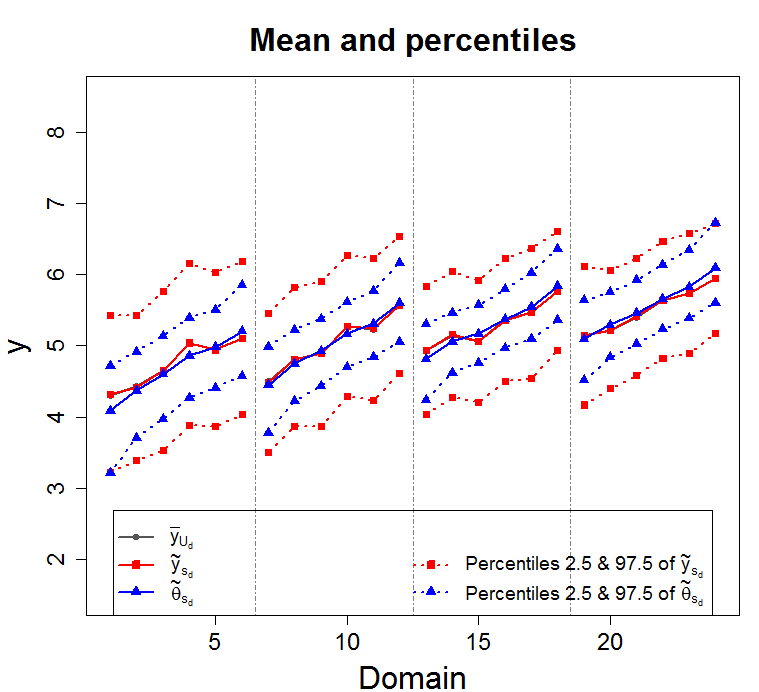}}\\
		\subfigure{\includegraphics[width=.49\textwidth]{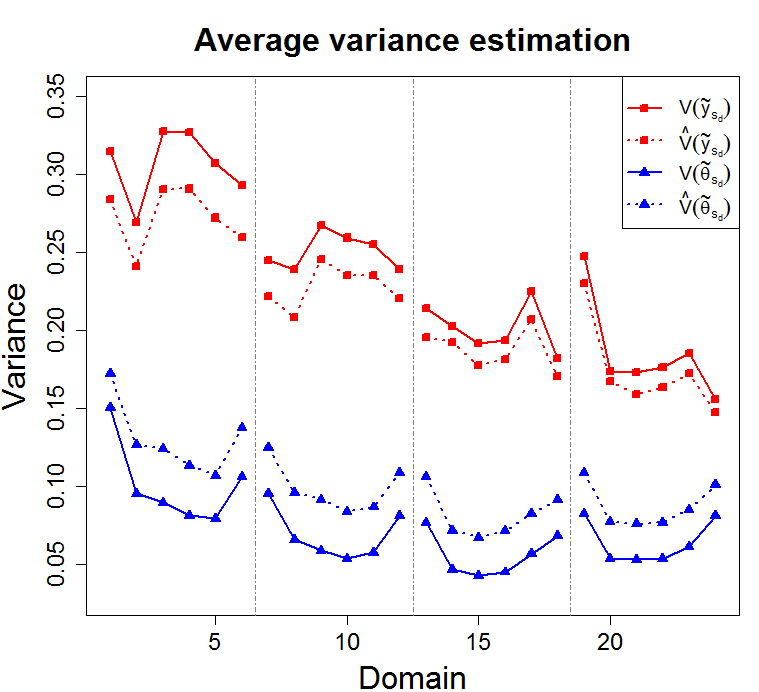}}\
		\subfigure{\includegraphics[width=.49\textwidth]{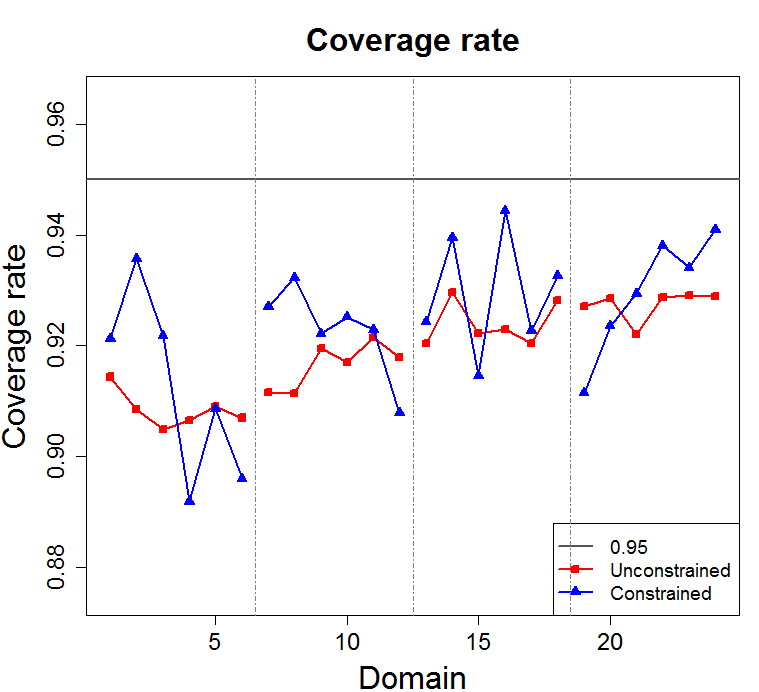}}
		\caption{Plots of simulation results for the unconstrained and constrained estimators under the double monotone scenario with $\sigma=2$, based on $10000$ replications.}
		\label{fig:doublemonsigma2}
	\end{center}
\end{figure} 

\begin{figure}[ht!] 
	\begin{center}
		\subfigure{\includegraphics[width=.49\textwidth]{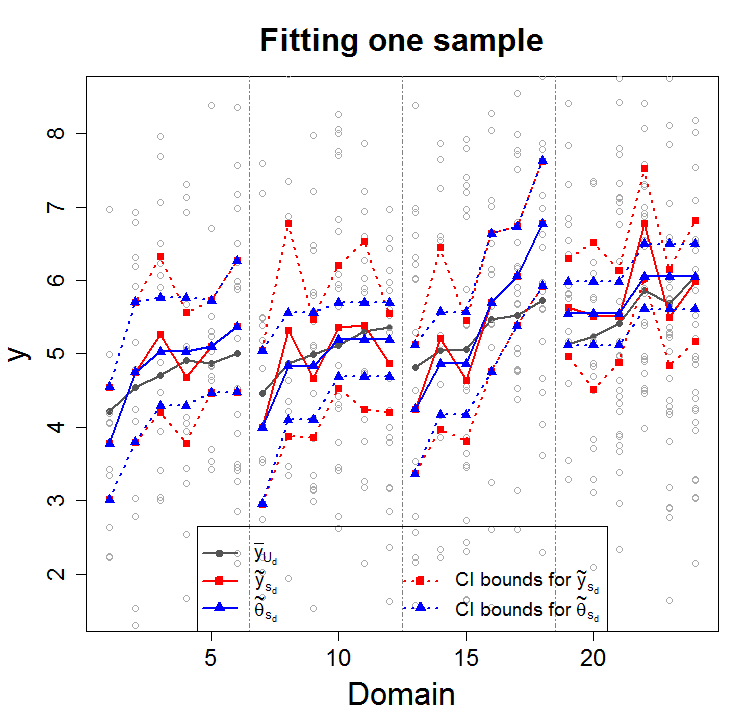}}\
		\subfigure{\includegraphics[width=.49\textwidth]{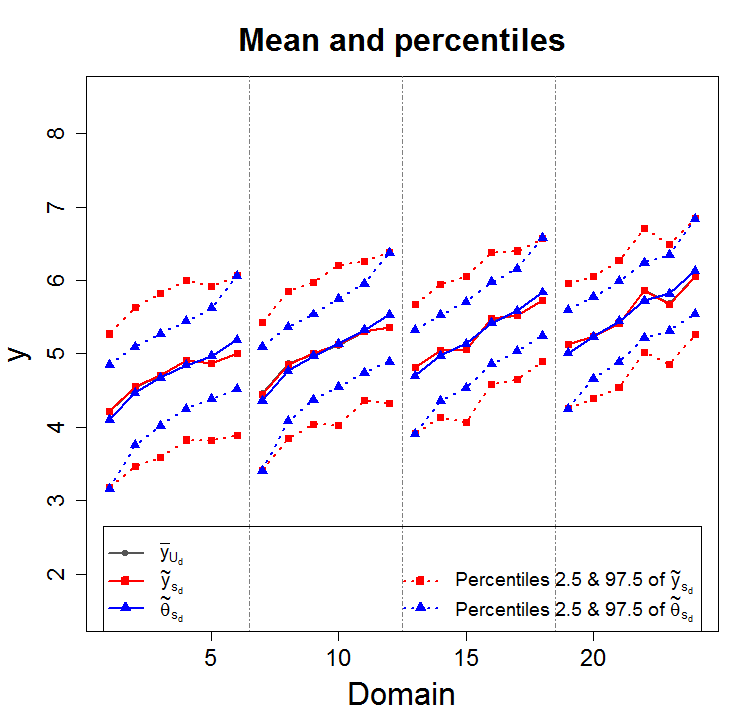}}\\
		\subfigure{\includegraphics[width=.49\textwidth]{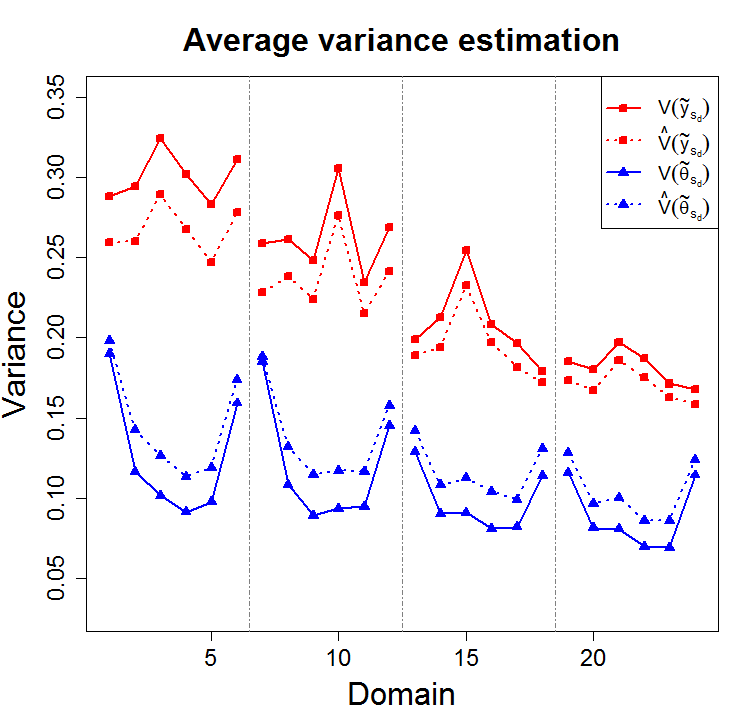}}\
		\subfigure{\includegraphics[width=.49\textwidth]{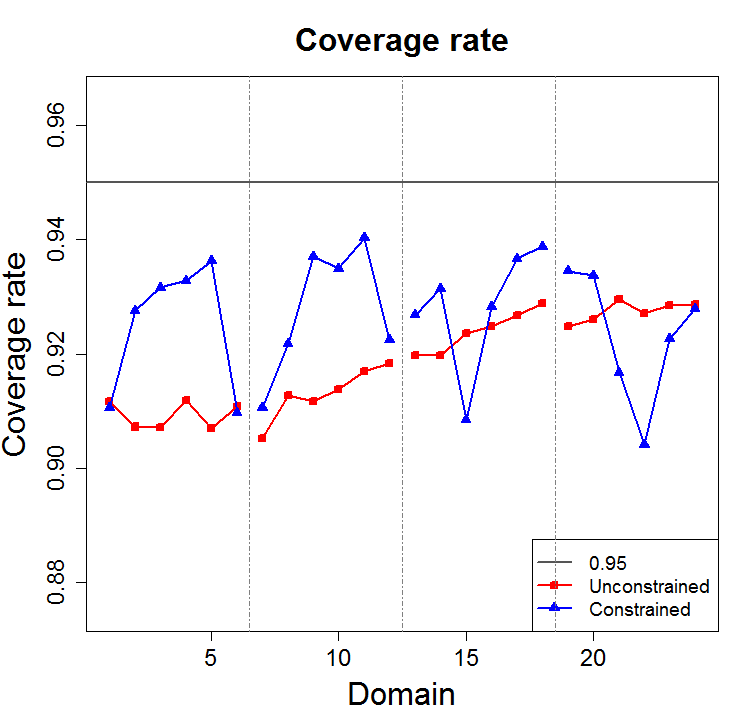}}
		\caption{Plots of simulation results for the unconstrained and constrained estimators under the only $x_1$ monotone scenario with $\sigma=2$, based on $10000$ replications.}
		\label{fig:onevarsigma2}
	\end{center}
\end{figure} 

Table \ref{tab:mse} shows that the constrained estimator is more precise on average than the unconstrained estimator, even though the population domain means are not strictly monotone with respect to $x_1$ and $x_2$. Moreover, the precision of the constrained estimator gets improved when the monotonicity with respect to the two variables is assumed, instead of only with respect to $x_1$. This can be translated on stating that the precision of the proposed estimator is benefited by taking into account the most appropriate shape assumptions.

\begin{table}[ht!]
	\begin{center}
		\begin{tabular}{c|c|c|c}
		& \text{Unconstrained} & \text{Only $x_1$ monotone} & \text{Double monotone} \\ \hline
		$\sigma=1$ & 0.0593 & 0.0362 & 0.0298 \\
		$\sigma=2$ & 0.2384 & 0.1175 & 0.0832 	
		\end{tabular}
	\end{center}
	\caption{WMSE values.}
	\label{tab:mse}
\end{table}	

\FloatBarrier

\subsection{Replication methods for variance estimation}

Recently, it is more common that large-scale surveys make use of replication-based methods for variance estimation. Some examples of such surveys are the last editions of the NHANES and the National Survey of College Graduates (NSCG), the latter sponsored by the National Science Foundation (NSF). To study the performance of replication-based variance estimators under the proposed constrained methodology, we carry out simulation studies based on the delete-a-group Jackknife (DAGJK) variance estimator proposed by \cite{kott2001}.

We perform replication-based simulation experiments using the setting described in Section 4.1. To compute the DAGJK variance estimator, we first randomly create $G$ equal-sized groups within each of the $4$ strata. Then, for each possible $g$, we delete the $g$-th group in each of the strata, adjust the remaining weights by $w_{k}^{(g)}=(\frac{G}{G-1})w_k$, where $w_k=\pi_k^{-1}$; and compute the replicate constrained estimate $\boldsymbol{\tilde{\theta}}_{s}^{(g)}$ using the adjusted weights. Hence, the DAGJK variance estimate of $\tilde{\theta}_{s_d}$, $\widehat{V}_{JK}({\tilde{\theta}_{s_d}})$, is obtained by calculating
\begin{equation*} \label{eq:repbased}
\widehat{V}_{JK}({\tilde{\theta}_{s_d}}) = \frac{G-1}{G}\sum_{g=1}^G \left( \tilde{\theta}_{s_d}^{(g)} - \tilde{\theta}_{s_d} \right)^2.
\end{equation*}  
Analogously, a replication-based variance estimator of $\tilde{y}_{s_d}$ can be derived by substituting the role of $\boldsymbol{\tilde{\theta}}_s$ by $\boldsymbol{\tilde{y}}_s$.

Our simulations consider only the double monotone scenario, with $\sigma=1$ or $2$, and $G=10,20$ or $30$. Moreover, the sample size is set to either $n_N=480$ or $n_N=960$, where the latter case is obtained by doubling the original sample size in each strata. Figures \ref{fig:repbasedsigma1} - \ref{fig:repbasedsigma2sample960} contain our replication-based simulation results based on $10000$ replications. From these, it can be noted that the DAGJK estimates tend to overestimate the variance of the unconstrained estimator, meanwhile the linearization-based variance estimate has an underestimating behavior. In contrast, both replication-based and linearization-based variance estimates of the constrained estimator overestimate the true variance. Moreover, note that as the number of groups $G$ increases, DAGJK estimates tend to be greater, especially for small values of $\sigma$. Such increments on DAGJK estimates have the direct consequence of increasing the coverage rate as $G$ gets larger. In addition, the coverage rate for both estimators is improved (closer to 0.95) when the sample size is increased. As a general conclusion in terms of the constrained estimator, DAGJK variance estimators have a similar behavior than linearization-based estimators. Thus, it seems appropriate to adapt the proposed constrained methodology to allow the use of replication-based variance estimation methods.

\begin{figure}[ht!]
	\begin{center}
	\subfigure{\includegraphics[width=.49\textwidth]{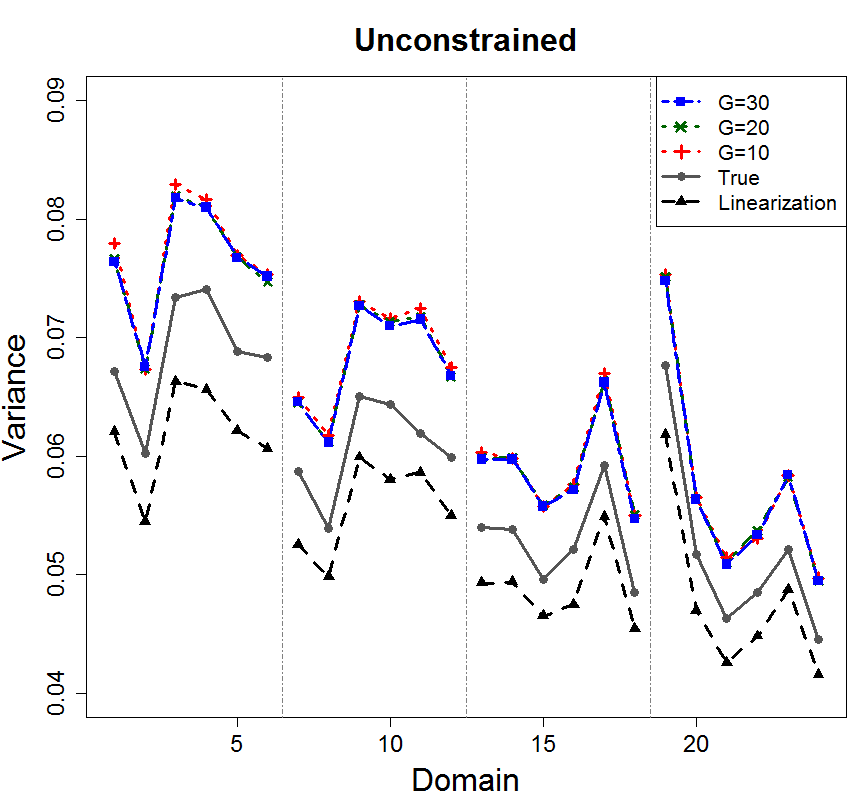}}\
	\subfigure{\includegraphics[width=.49\textwidth]{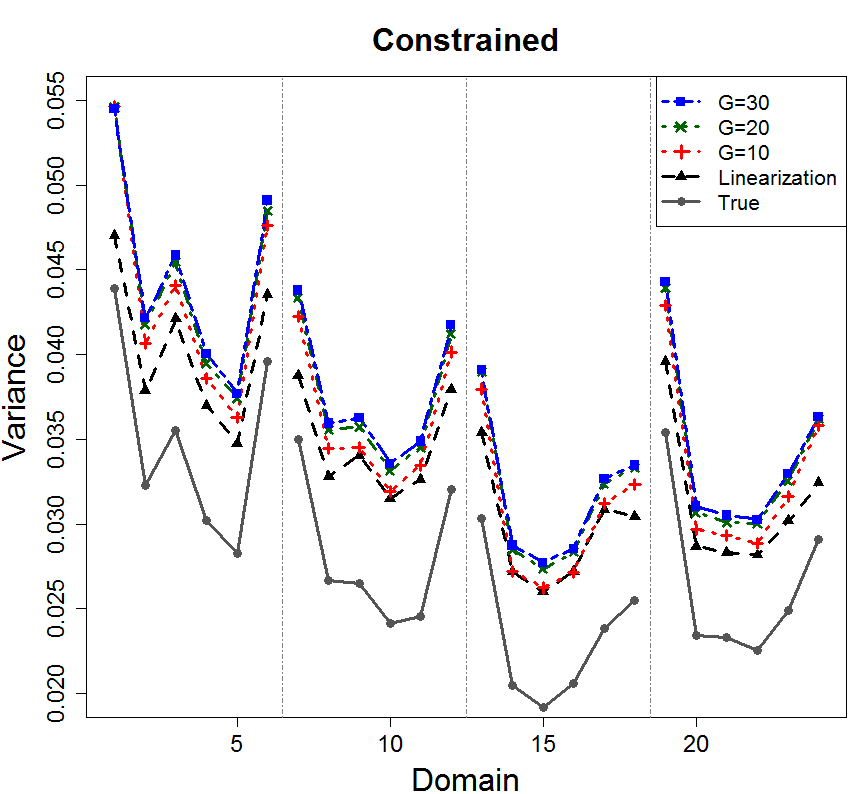}}\\
	\subfigure{\includegraphics[width=.49\textwidth]{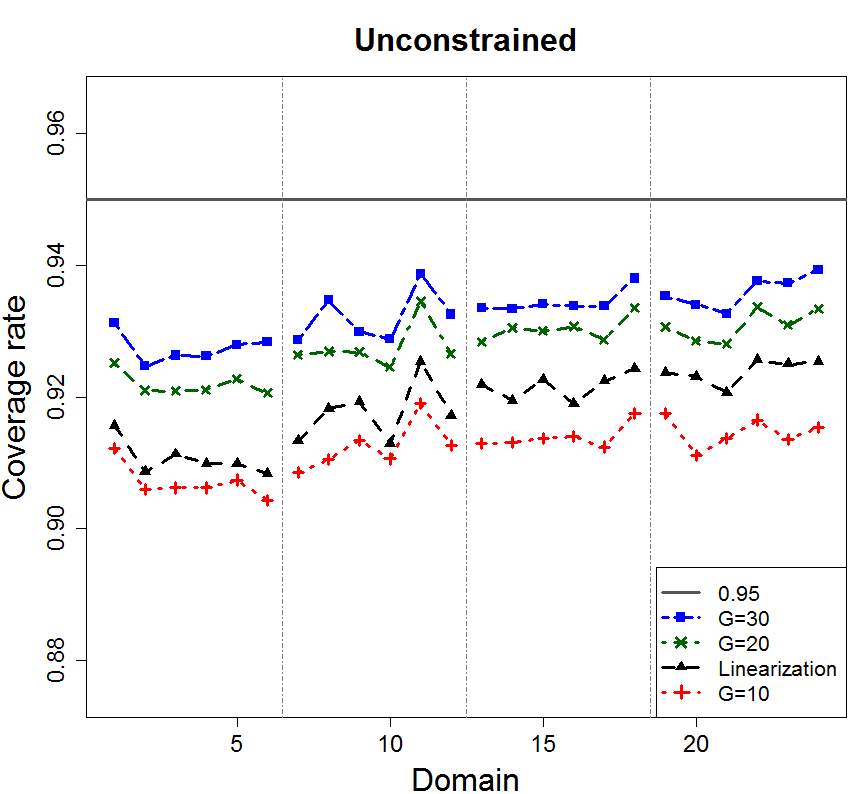}}\
	\subfigure{\includegraphics[width=.49\textwidth]{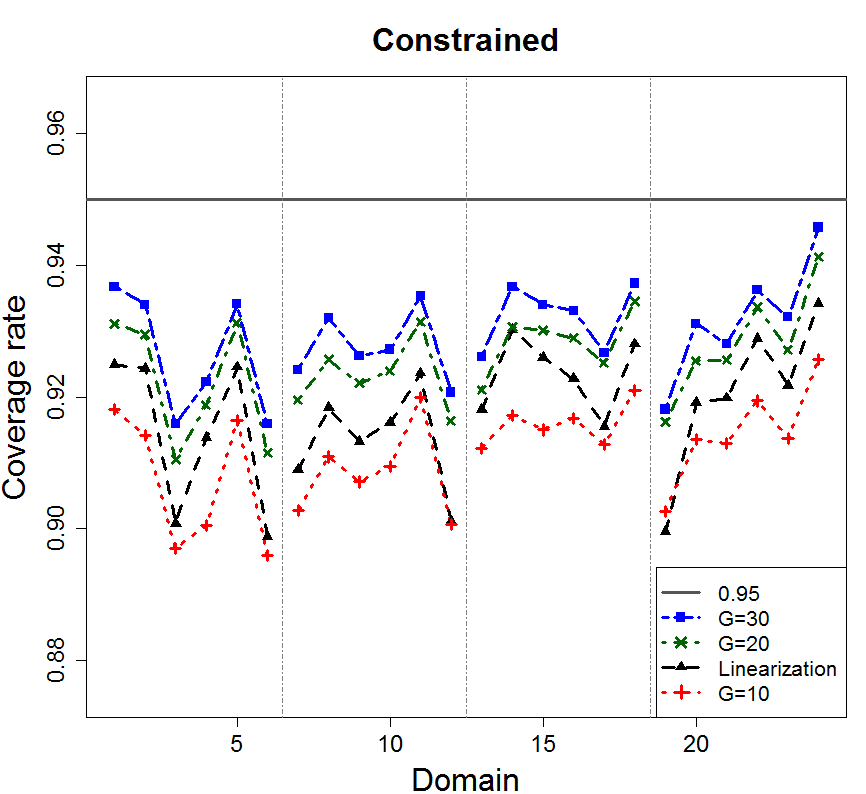}}
	\caption{Variance estimation (top) and coverage rate (bottom) simulation results based on linearization and DAGJK methods for the unconstrained (left) and constrained (right) estimators, under the double monotone scenario with $n_N=480$ and $\sigma=1$.}
	 \label{fig:repbasedsigma1}
	\end{center}
\end{figure} 

\begin{figure}[ht!] 
	\begin{center}
		\subfigure{\includegraphics[width=.49\textwidth]{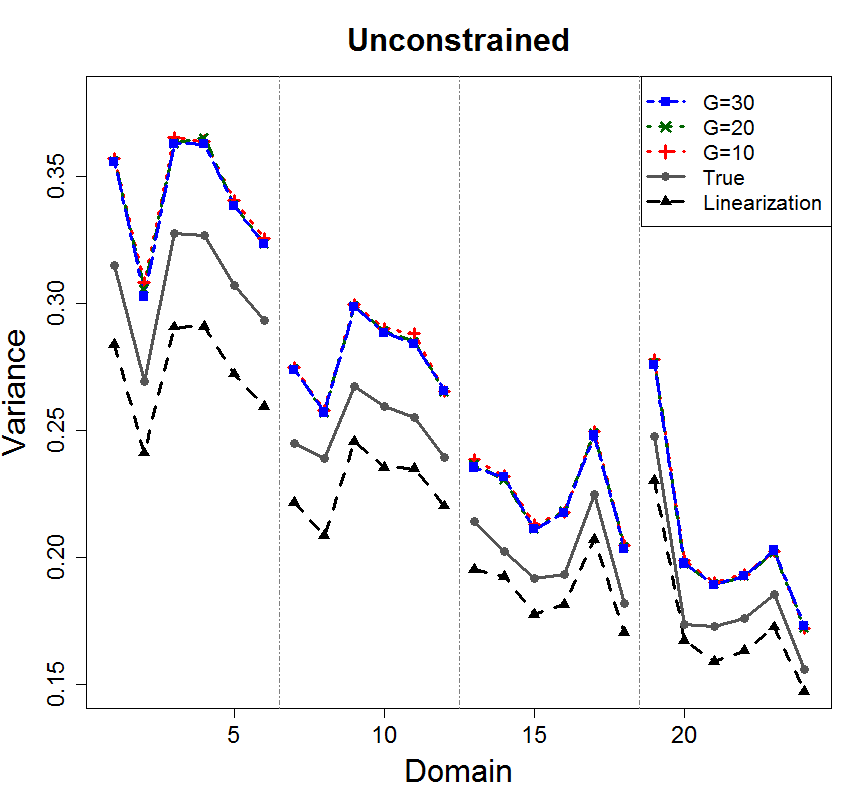}}\
		\subfigure{\includegraphics[width=.49\textwidth]{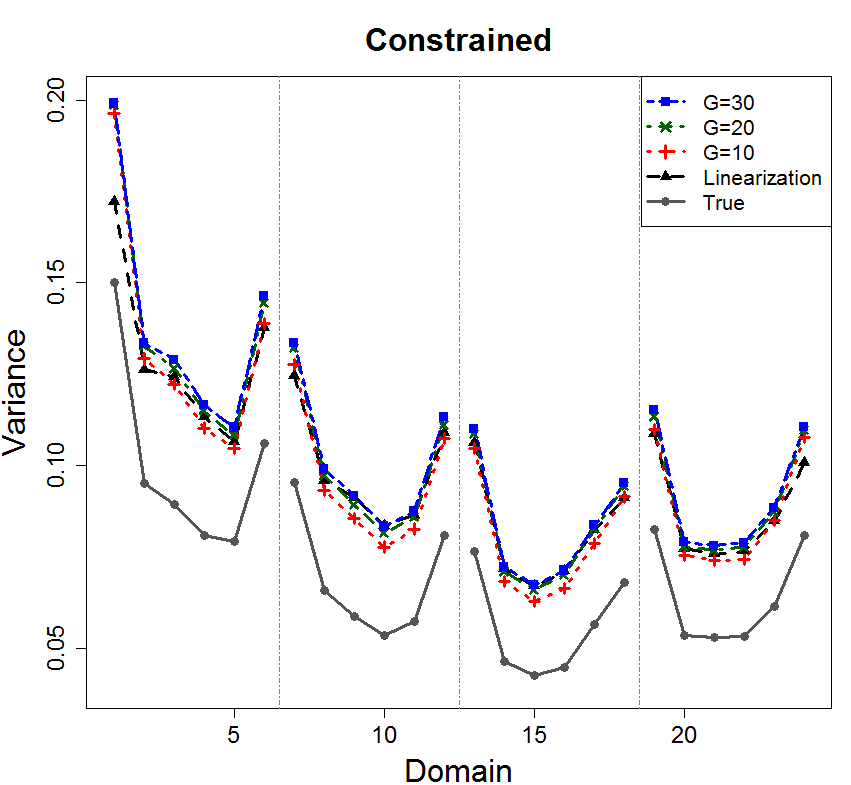}}\\
		\subfigure{\includegraphics[width=.49\textwidth]{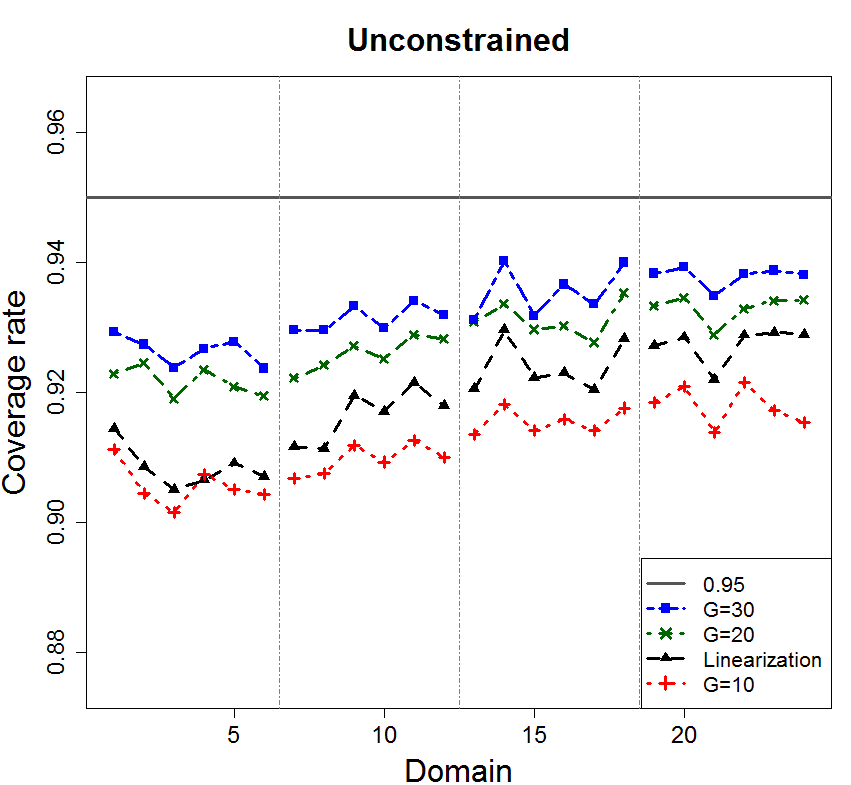}}\
		\subfigure{\includegraphics[width=.49\textwidth]{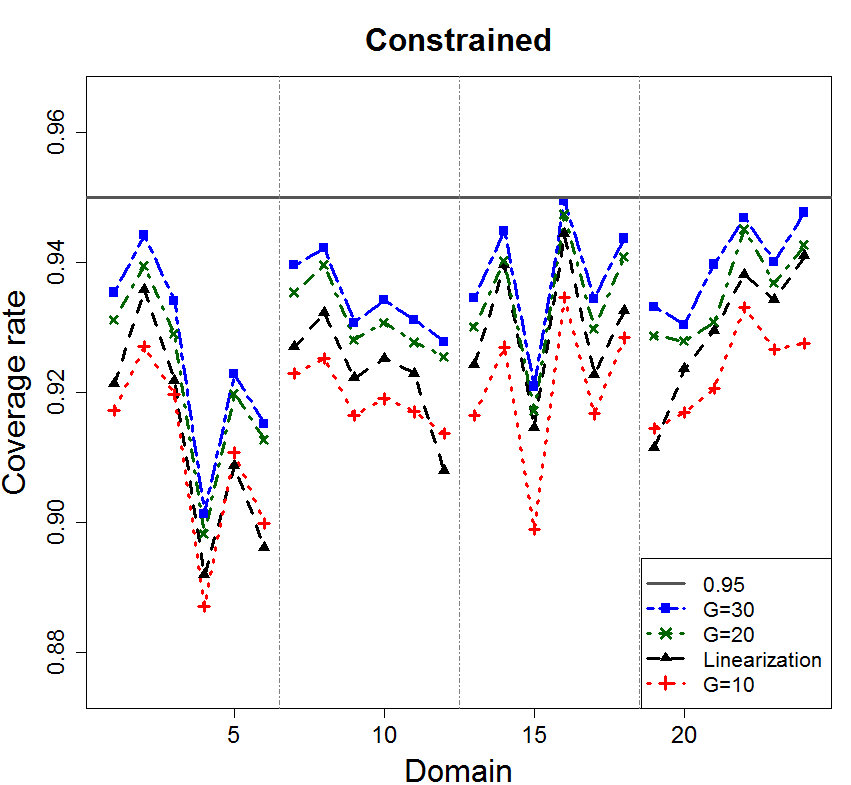}}
		\caption{Variance estimation (top) and coverage rate (bottom) simulation results based on linearization and DAGJK methods for the unconstrained (left) and constrained (right) estimators, under the double monotone scenario with $n_N=480$ and $\sigma=2$.}
		\label{fig:repbasedsigma2}
	\end{center}
\end{figure} 

\begin{figure}[ht!] 
	\begin{center}
		\subfigure{\includegraphics[width=.49\textwidth]{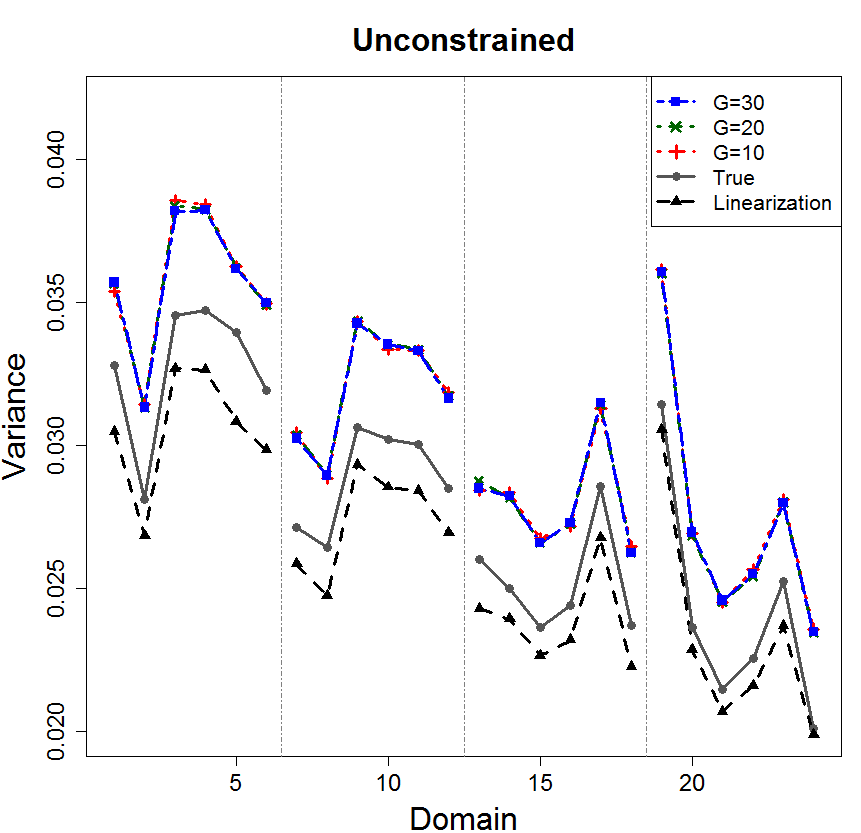}}\
		\subfigure{\includegraphics[width=.49\textwidth]{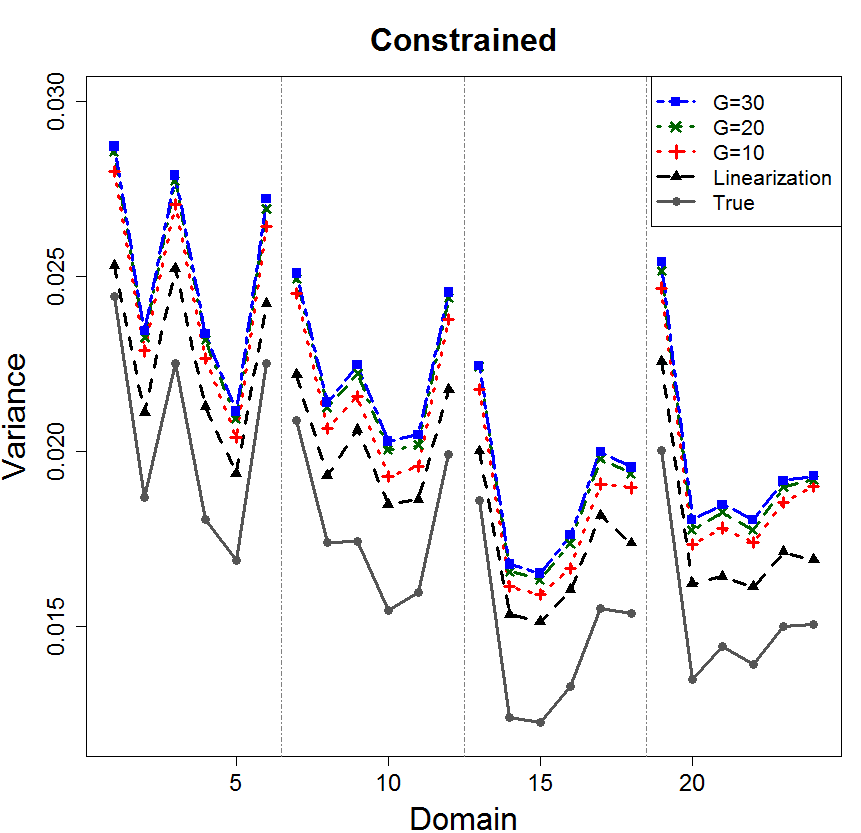}}\\
		\subfigure{\includegraphics[width=.49\textwidth]{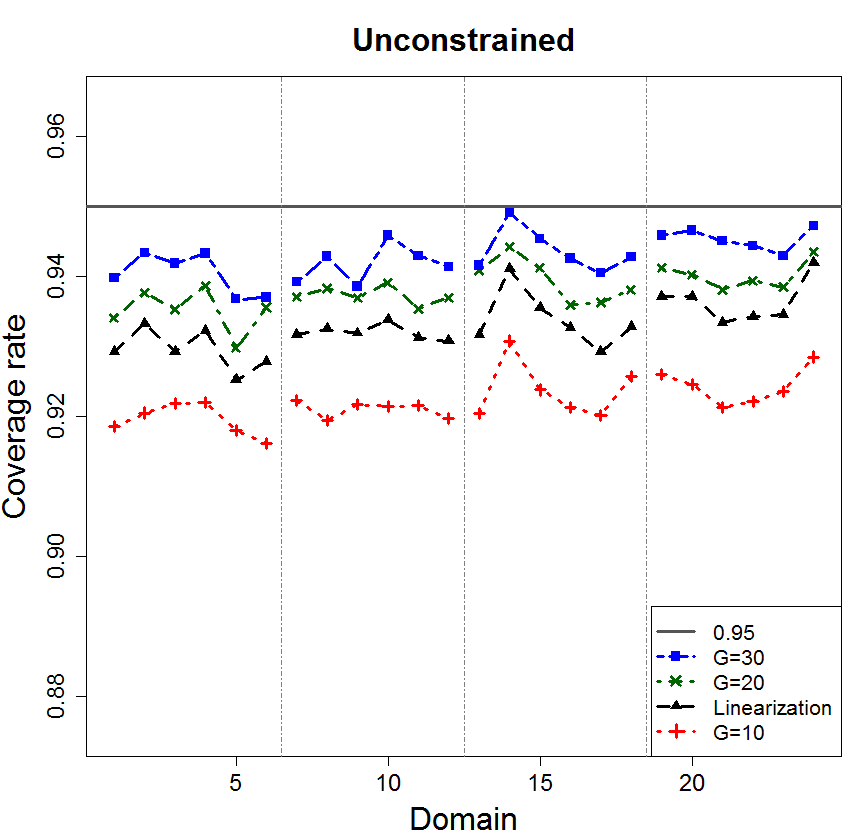}}\
		\subfigure{\includegraphics[width=.49\textwidth]{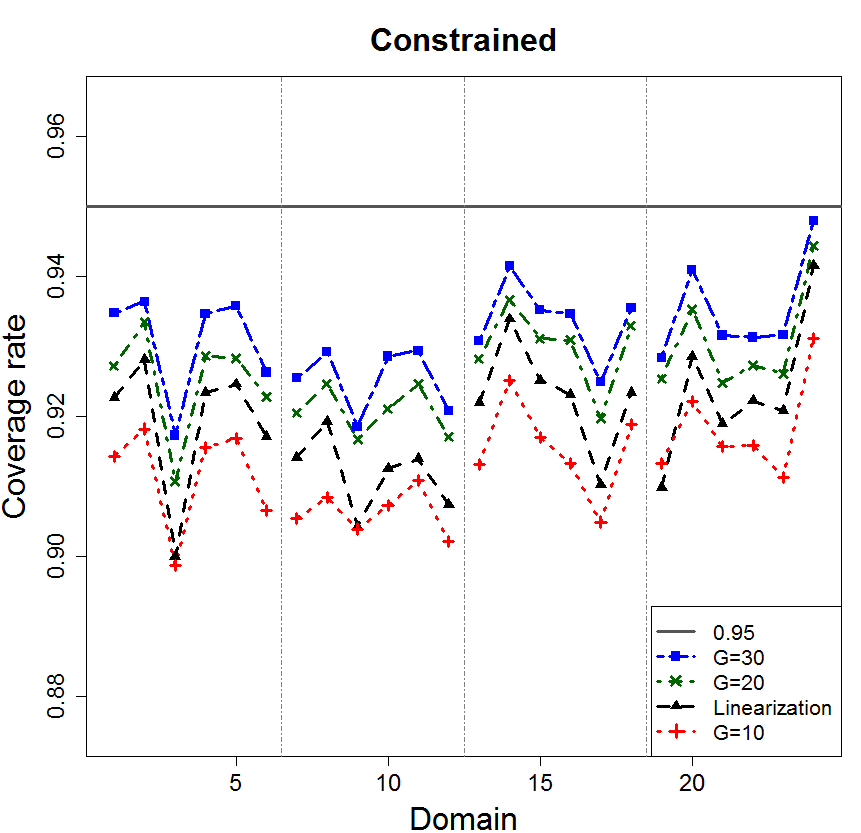}}
		\caption{Variance estimation (top) and coverage rate (bottom) simulation results based on linearization and DAGJK methods for the unconstrained (left) and constrained (right) estimators, under the double monotone scenario with $n_N=960$ and $\sigma=1$.}
		\label{fig:repbasedsigma1sample960}
	\end{center}
\end{figure} 

\begin{figure}[ht!] 
	\begin{center}
		\subfigure{\includegraphics[width=.49\textwidth]{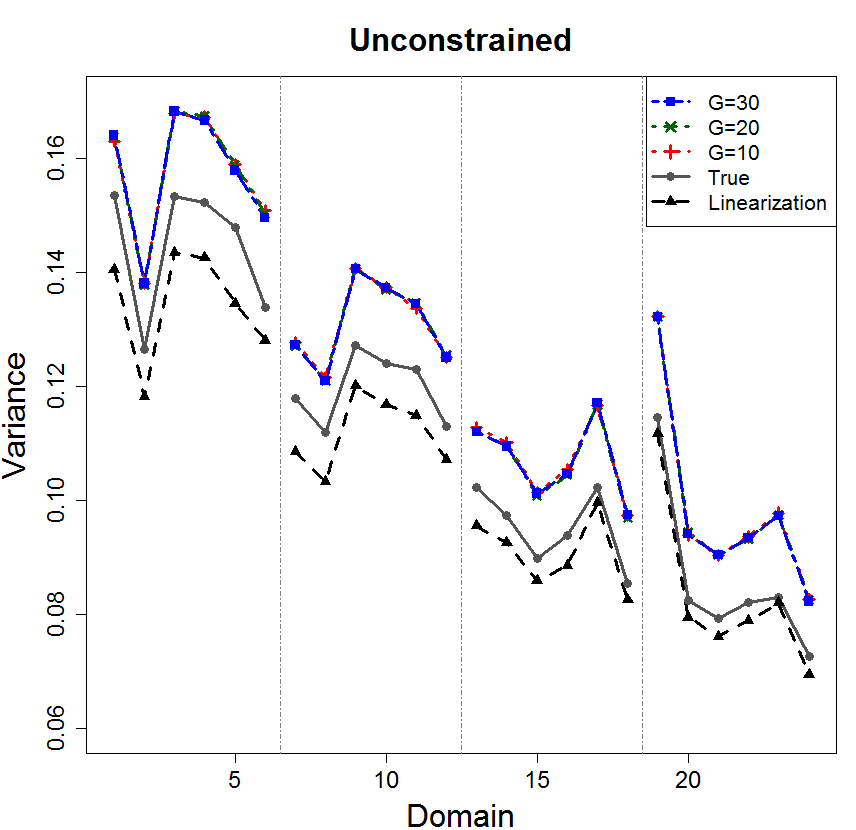}}\
		\subfigure{\includegraphics[width=.49\textwidth]{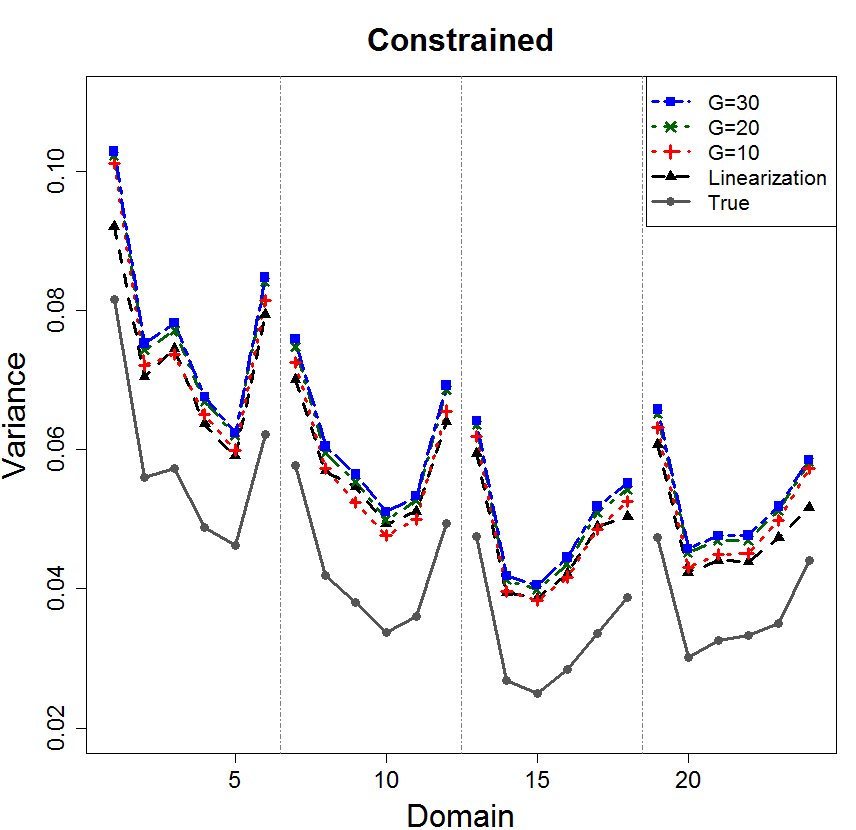}}\\
		\subfigure{\includegraphics[width=.49\textwidth]{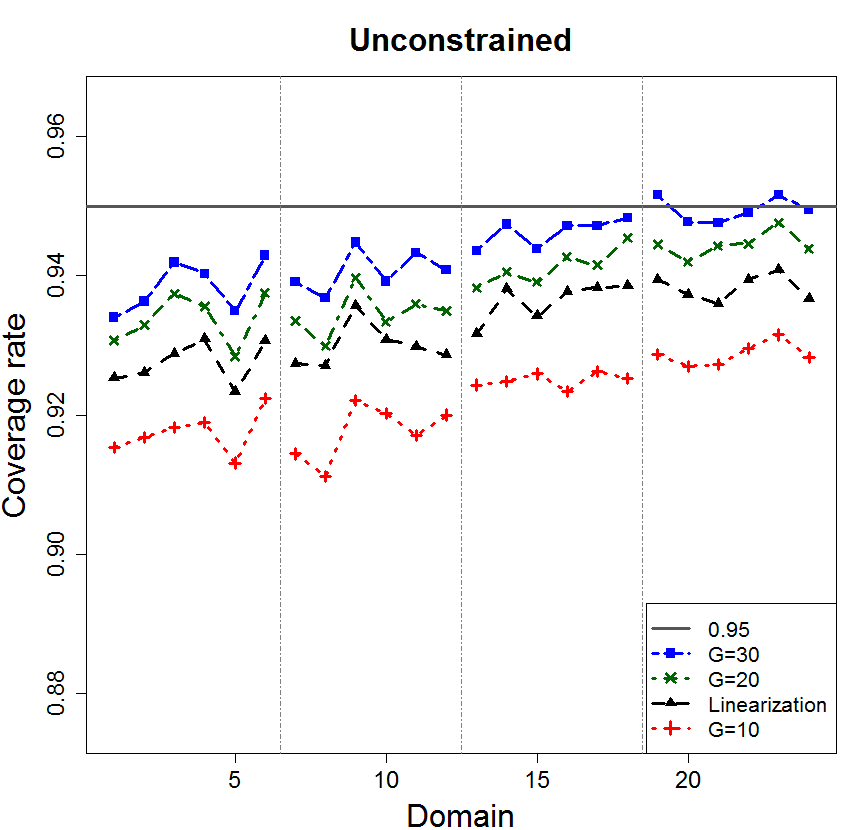}}\
		\subfigure{\includegraphics[width=.49\textwidth]{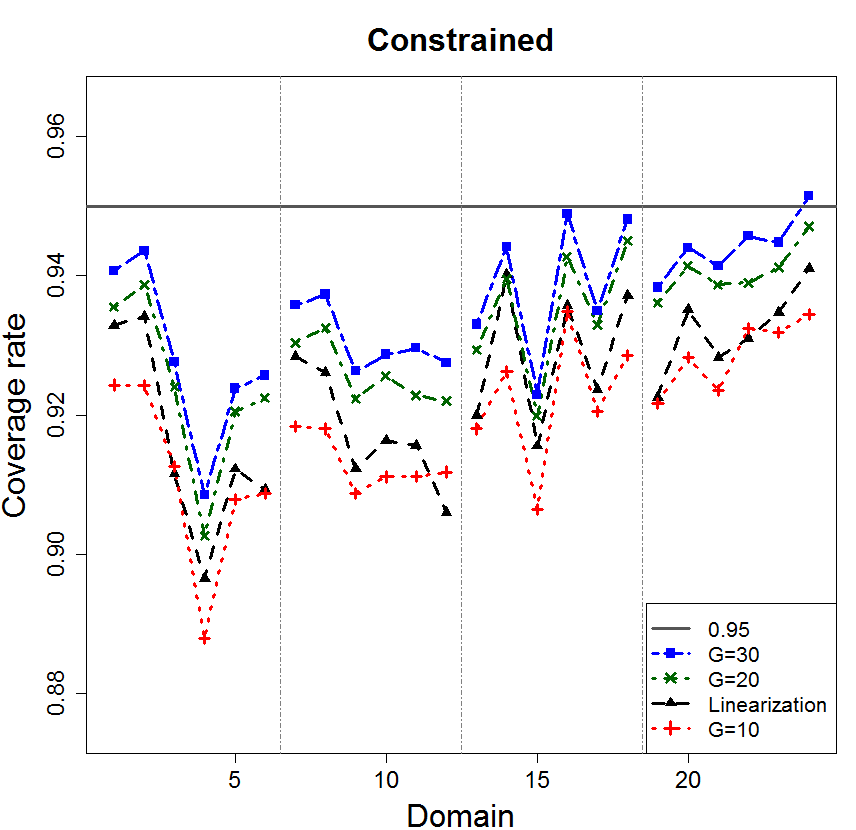}}
		\caption{Variance estimation (top) and coverage rate (bottom) simulation results based on linearization and DAGJK methods for the unconstrained (left) and constrained (right) estimators, under the double monotone scenario with $n_N=960$ and $\sigma=2$.}
		\label{fig:repbasedsigma2sample960}
	\end{center}
\end{figure}

\section{Application of constrained estimator to NSCG 2015} \label{sec:nscg}

To demonstrate the utility of the proposed constrained methodology in real survey data, we consider the 2015 National Survey of College Graduates (NSCG), which is sponsored by the National Center for Science and Engineering Statistics (NCSES) within the National Science Foundation, and is conducted by the U.S. Census Bureau. The 2015 NSCG data and documentation are openly available on the NSF website (\url{www.nsf.gov/statistics/srvygrads}). The purpose of the NSCG is to provide data on the characteristics of U.S. college graduates, with particular focus on those in the science and engineering workforce.

We set the total earned income before deductions in previous year (2014) to be the variable of interest (denoted by EARN). To avoid the high skewness of this variable, a $\log$ transformation is performed. Moreover, we take into account only those who reported a positive earning amount. A total of $76,389$ observations was considered in our analysis. In addition, $252$ domains are considered. These are determined by the cross-classification of four predictor variables. Such variables and their assumed constraints are:

\begin{itemize}
	\item \textbf{Time since highest degree}. This ordinal variable defines the year category of award of highest degree. The period from $2015$ to $1959$ is divided into 9 categories, where the first $8$ categories (denoted by 1-8) are of $6$ years each, and the last category (denoted by 9) is of $9$ years. \textsl{Constraint:} given the other predictors, the average total earned income increases with respect to the time since highest degree from year category 1 to 7. No assumption is made with respect to categories 8 and 9, as those people are likely to be retired (at least 42 years since their highest degree).  
	\item \textbf{Field category}. This nominal variable defines the field of study for highest degree, based on a major group categorization provided within the 2015 NSCG. The $7$ categories for this variable are:
	\begin{enumerate}[1:]
		\item Computer and mathematical sciences,
		\item Biological, agricultural and environmental life sciences,
		\item Physical and related sciences,
		\item Social and related sciences,
		\item Engineering,
		\item S\&E-related fields,
		\item Non-S\&E fields.
	\end{enumerate}
	\textsl{Constraint:} given the other predictors, the average total earned income for each of the fields 2 and 4 is less than for the fields 1, 3 and 5. No assumption is made with respect to categories 6 and 7, as they cover many fields for which a reasonable order restriction might be complicated to impose.
	
	\item \textbf{Postgrad.} This binary variable defines whether the highest degree is of the postgraduate level (YES) or of the Bachelor's level (NO). \textsl{Constraint:} given the other predictors, the average total earned income is higher for those with postgraduate studies.
	
	\item \textbf{Supervise.} This binary variable defines whether supervising others is a responsibility in the principal job (YES) or not (NO). \textsl{Constraint:} given the other predictors, the average total earned income is higher for those who supervise others in their principal job.
\end{itemize}  

Figures \ref{fig:nscgunccon1} and \ref{fig:nscgunccon2} contain the unconstrained and constrained estimates for each of the four groups obtained from the cross-classification of the Postgrad and Supervise binary variables. Note that since the assumed constraints constitute a partial ordering, then the constrained estimates are obtained by pooling domains. These figures show that the constrained estimator has a smoother behavior than the unconstrained. Moreover, it tends to correct for the large spike domains produced by the unconstrained estimator, which are usually a consequence of a very small sample size.

\begin{figure}[ht!] 
	\begin{center}
		\subfigure[Supervise$=$YES (unconstrained).]{\includegraphics[width=.49\textwidth]{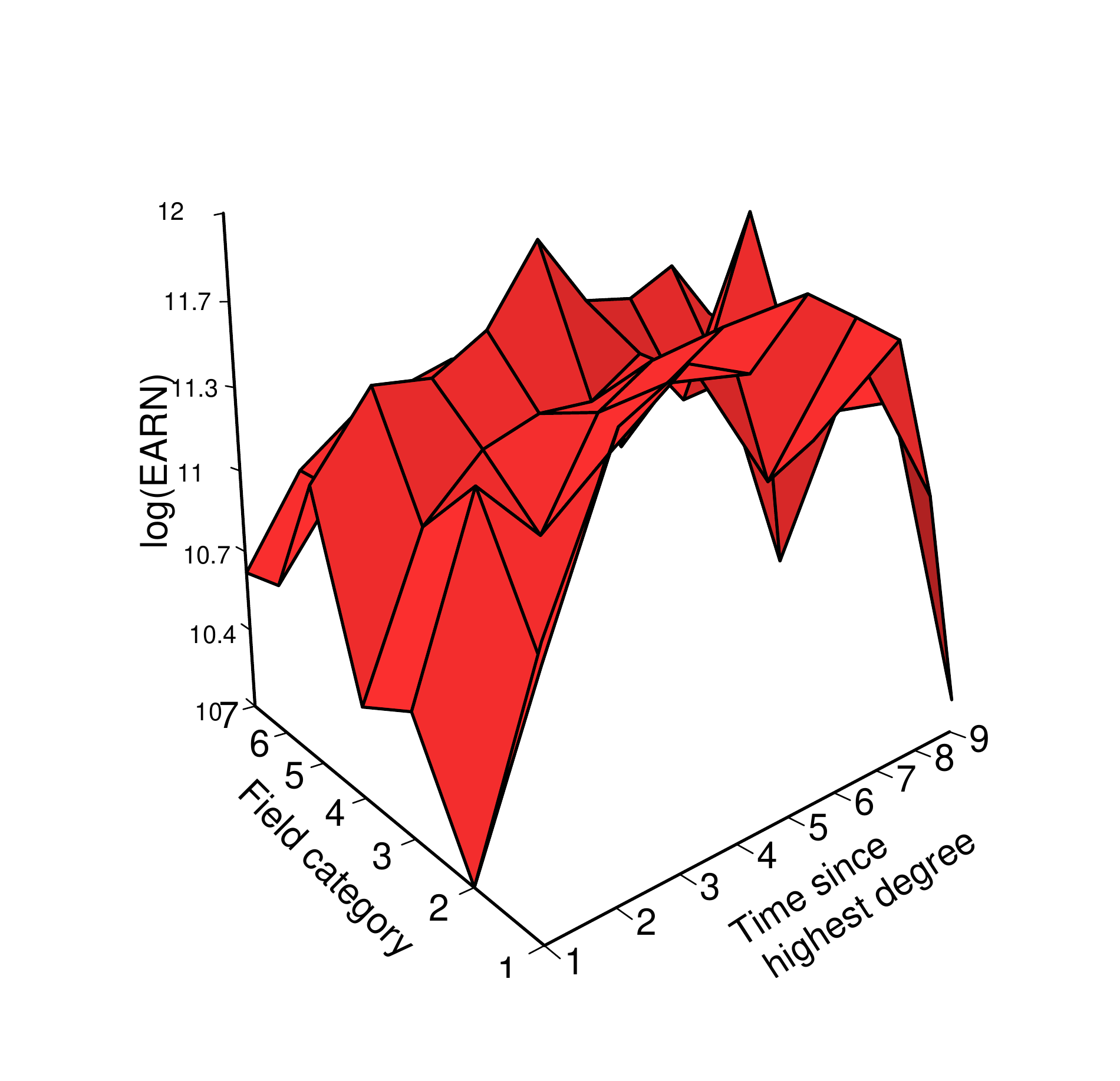}}\
		\subfigure[Supervise$=$YES (constrained).]{\includegraphics[width=.49\textwidth]{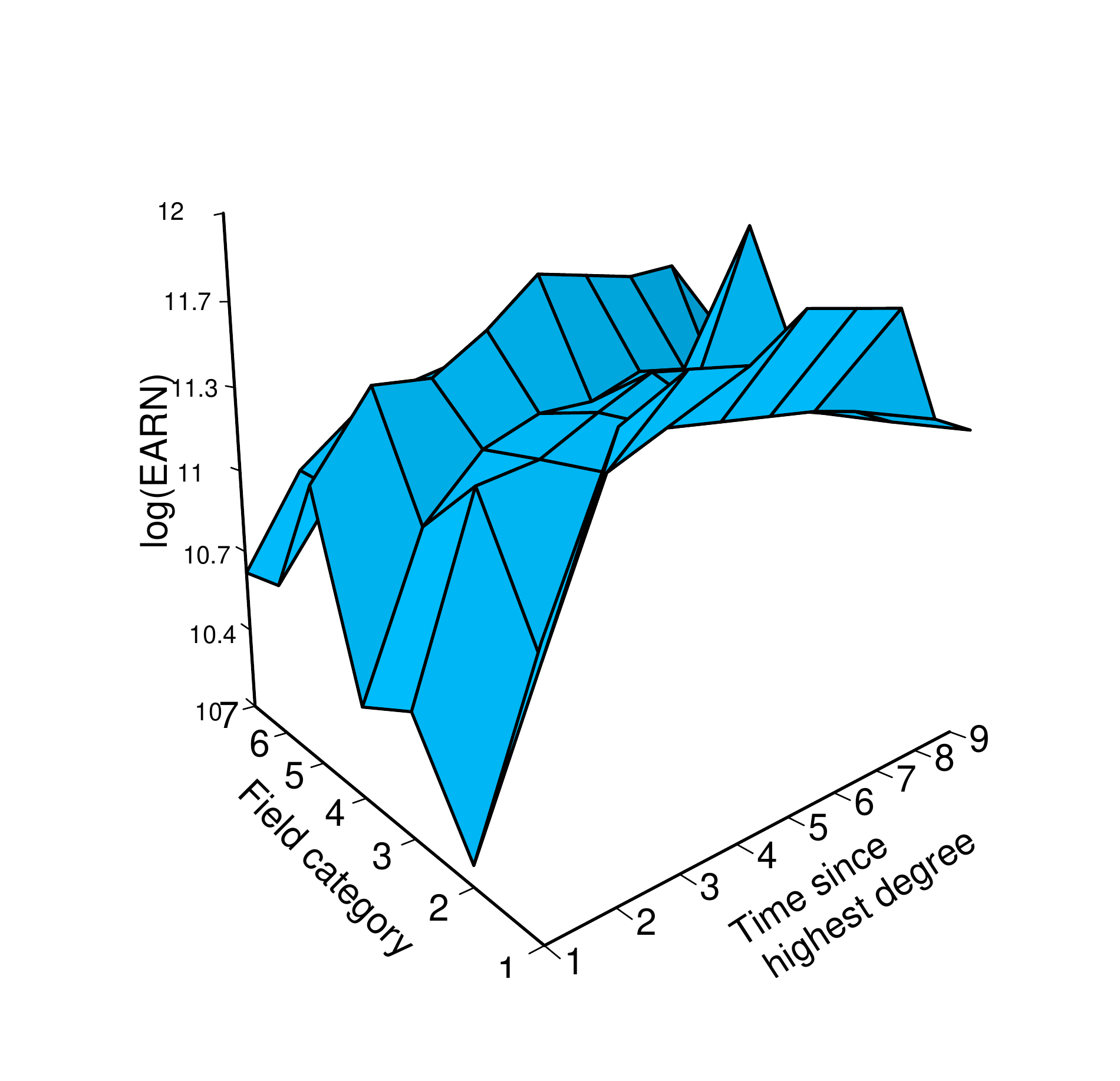}}\\
		\subfigure[Supervise$=$NO (unconstrained).]{\includegraphics[width=.49\textwidth]{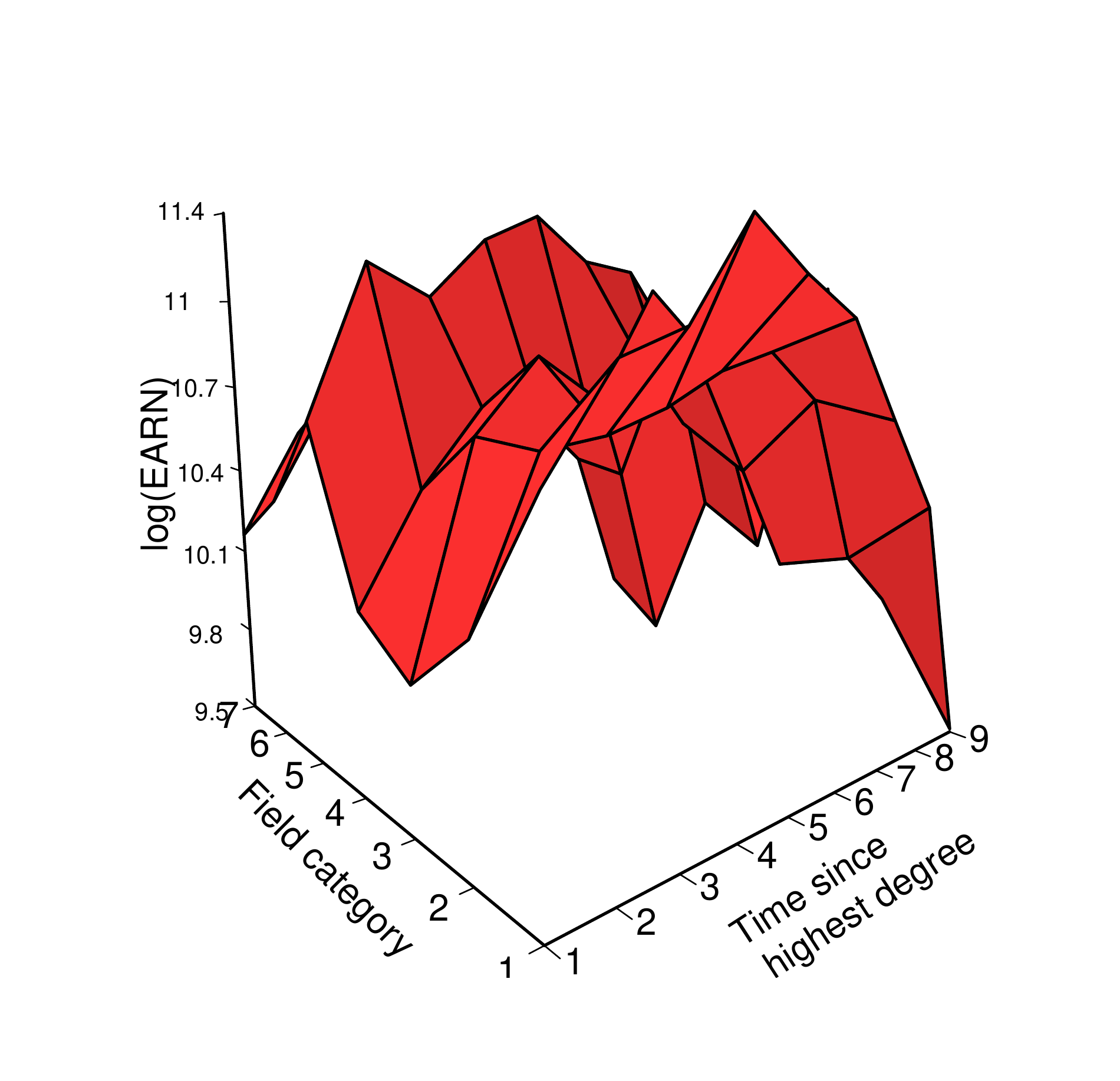}}\
		\subfigure[Supervise$=$NO (constrained).]{\includegraphics[width=.49\textwidth]{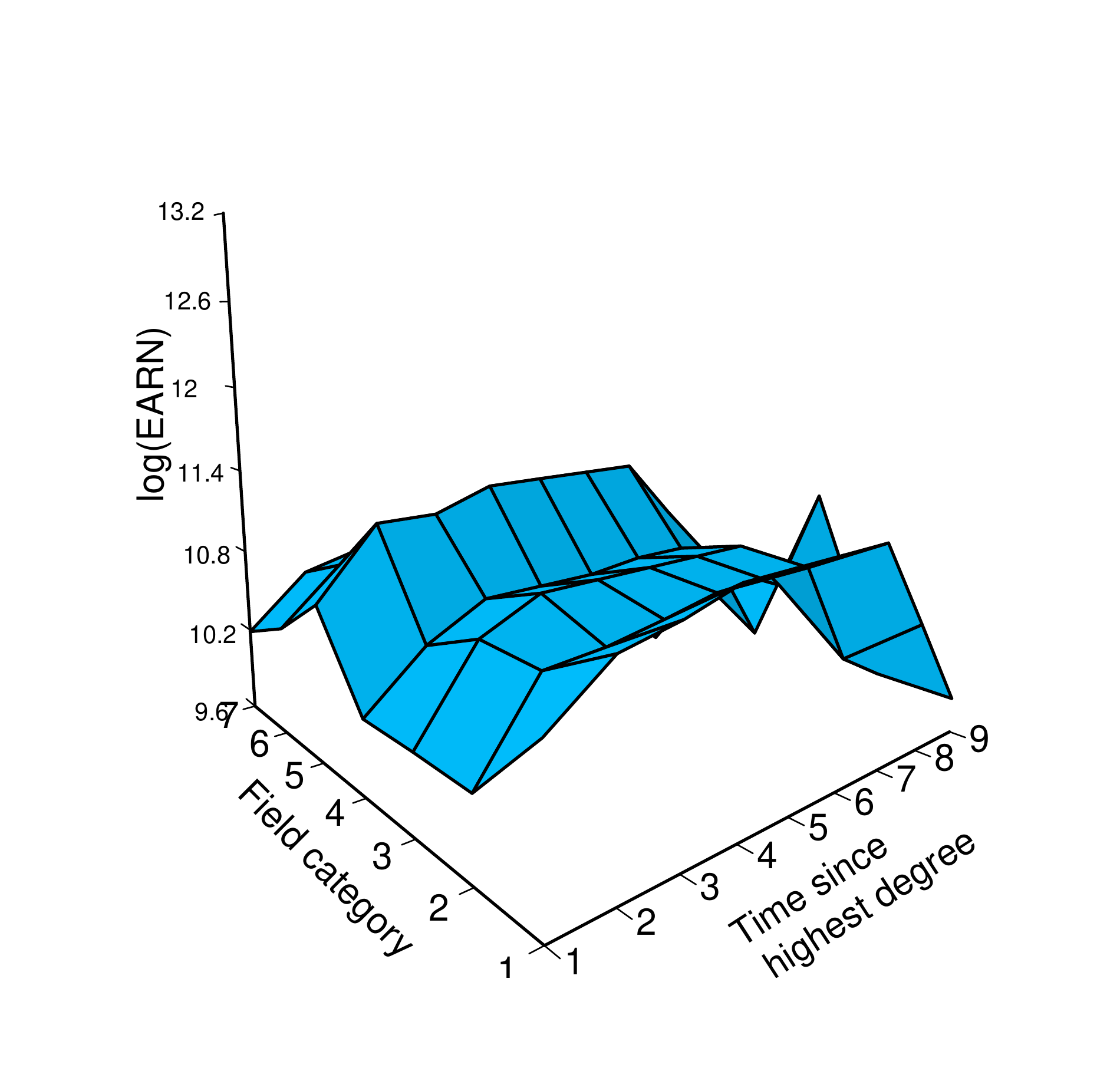}}
		\caption{Unconstrained (left) and constrained (right) domain mean estimates for the 2015 NSCG data, given that Postgrad$=$NO is fixed.}
		\label{fig:nscgunccon1}
	\end{center}
\end{figure} 

\begin{figure}[ht!] 
	\begin{center}
		\subfigure[Supervise$=$YES (unconstrained).]{\includegraphics[width=.49\textwidth]{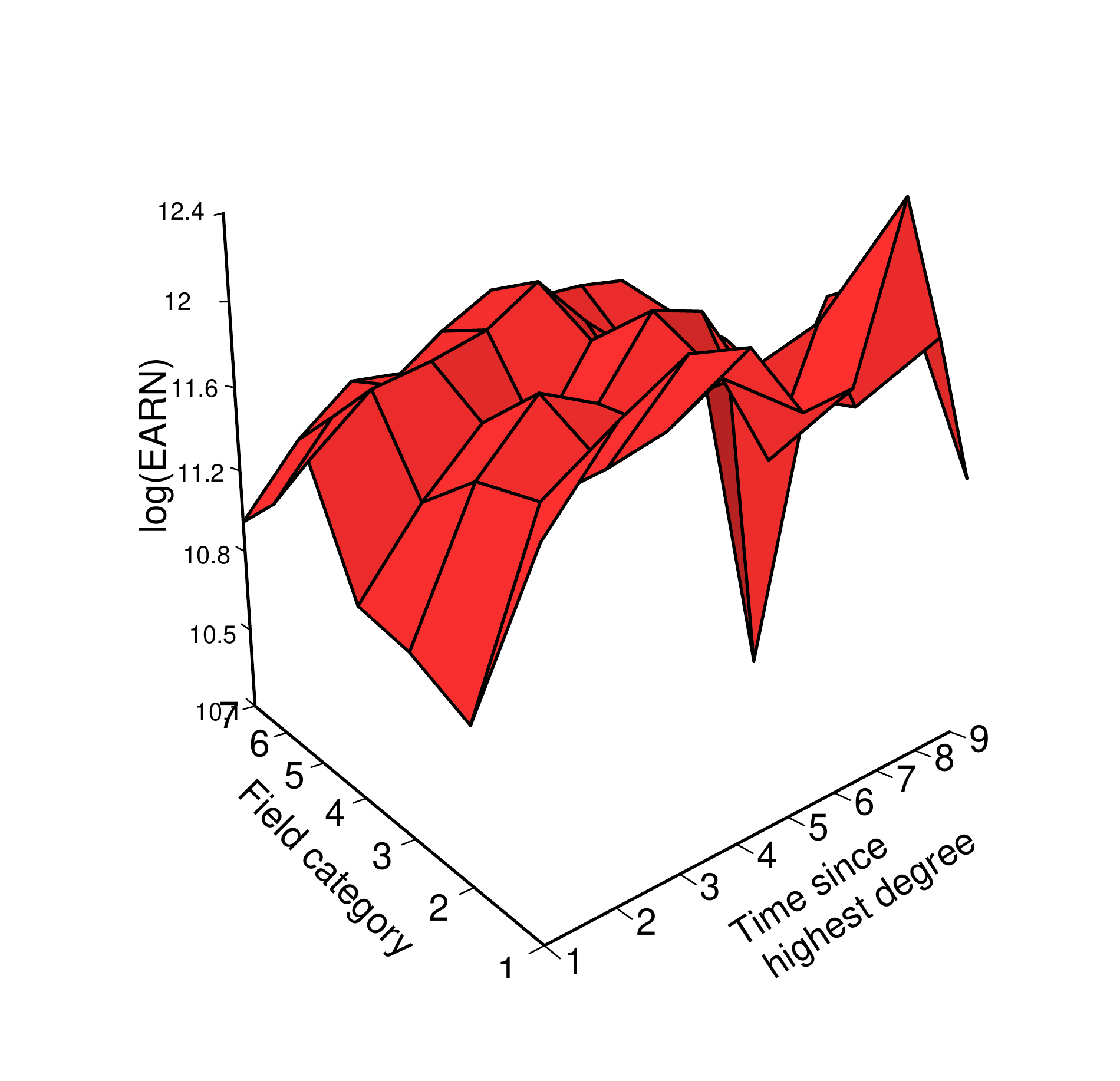}}\
		\subfigure[Supervise$=$YES (constrained).]{\includegraphics[width=.49\textwidth]{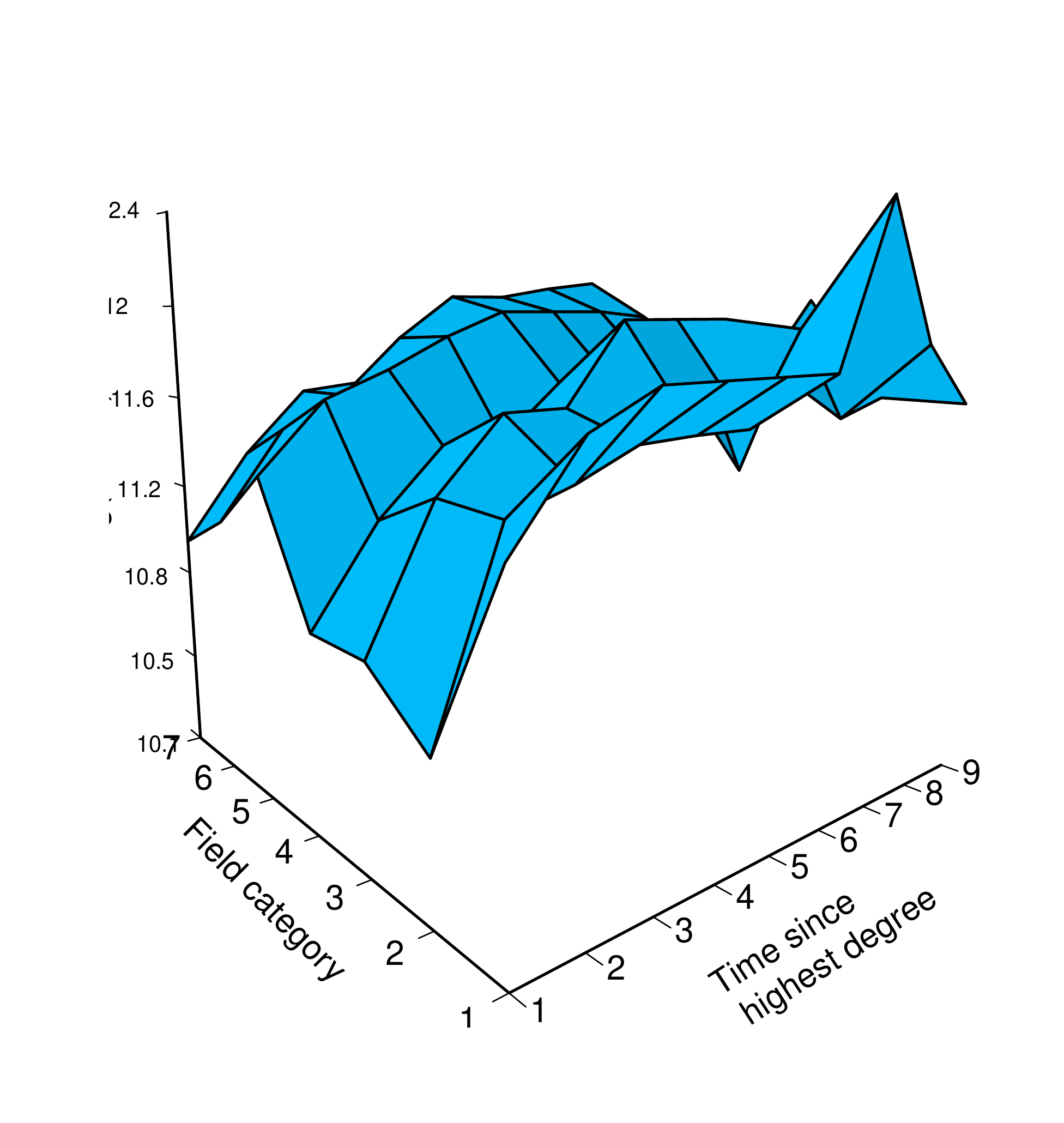}}\\
		\subfigure[Supervise$=$NO (unconstrained).]{\includegraphics[width=.49\textwidth]{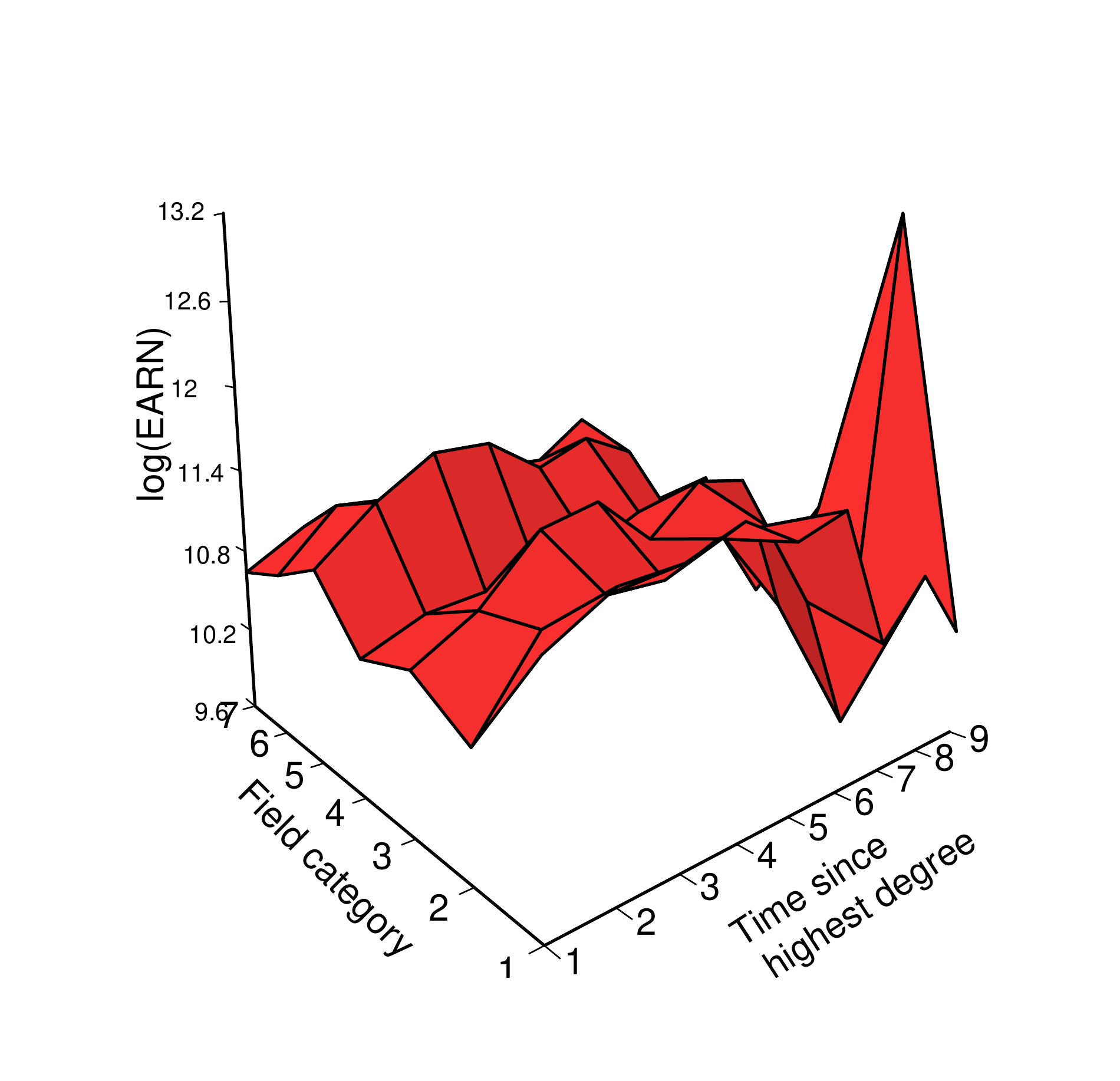}}\
		\subfigure[Supervise$=$NO (constrained).]{\includegraphics[width=.49\textwidth]{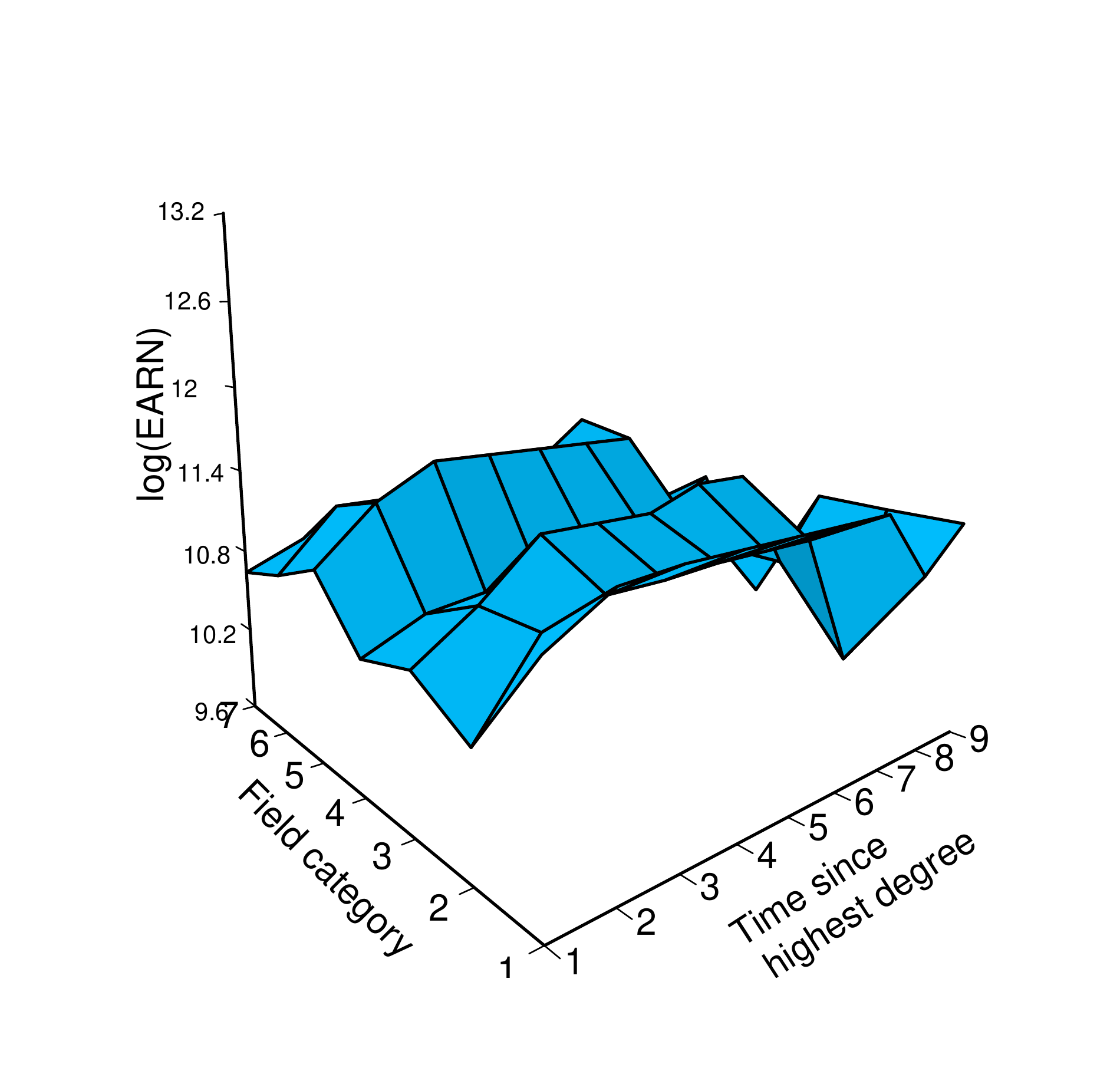}}
		\caption{Unconstrained (left) and constrained (right) domain mean estimates for the 2015 NSCG data, given that Postgrad=YES is fixed.}
		\label{fig:nscgunccon2}
	\end{center}
\end{figure} 

Standard errors for both unconstrained and constrained estimates are computed using the 2015 NSCG replicate weights, which are based on Successive Difference \citep{opsomer16} and Jackknife replication methods. Both the replicate weights and adjustment factors were provided by the Program Director of the Human Resources Statistics Program from the NCSES and are available upon request. 

Figure \ref{fig:nscg_sdratio} displays the ratio of these estimates for each of the $252$ domains. Note that in the vast majority of cases, the standard error estimates of the proposed estimator are lower than those for the unconstrained estimator, with improvements of as much as 7 times smaller. However, there are some cases where the opposite behavior occurs. These are explored in Figure \ref{fig:nscg_slices}, which shows plots of two different slices: one with respect to the Time since highest degree variable and other with respect to Field category. These plots include unconstrained and constrained estimates, Wald confidence intervals and sample sizes. Further, each of these two slices contain one of the two domains that can be easily identified in Figure \ref{fig:nscg_sdratio} to have the smallest ratios. The first of these domains is displayed in Figure \ref{fig:nscg_slices_1}, indexed by $5$. Here, the confidence interval is narrower for unconstrained estimates, which is as a direct consequence of having smaller standard deviation estimates. Note that the unconstrained estimates for the domains indexed by $5$ and $6$ violate the monotonicity assumption, and thus, are being pooled to obtain the constrained estimates. In contrast, Figure \ref{fig:nscg_slices_3} shows that the samples sizes on these domains are considerably large, meaning that the noticed violation might be in fact true. Therefore, as the imposed restrictions are enforcing these two domains to get pooled, then domain indexed by $5$ ends up producing a larger standard deviation on its constrained estimate. 
The second domain where unconstrained estimates produce smaller standard deviation estimates is displayed in Figure \ref{fig:nscg_slices_2}, indexed by $1$. Here, this domain is being pooled with its consecutive domain to obtain the constrained estimate. However, as these two domains have very low sample sizes (Figure \ref{fig:nscg_slices_4}), they produce a constrained estimate that is based on a very small `effective' sample size. Therefore, both the unconstrained and constrained estimates might be considered as unreliable, given the small sample circumstances.

\begin{figure}[ht!] 
	\begin{center}
		\subfigure{\includegraphics[width=.50\textwidth]{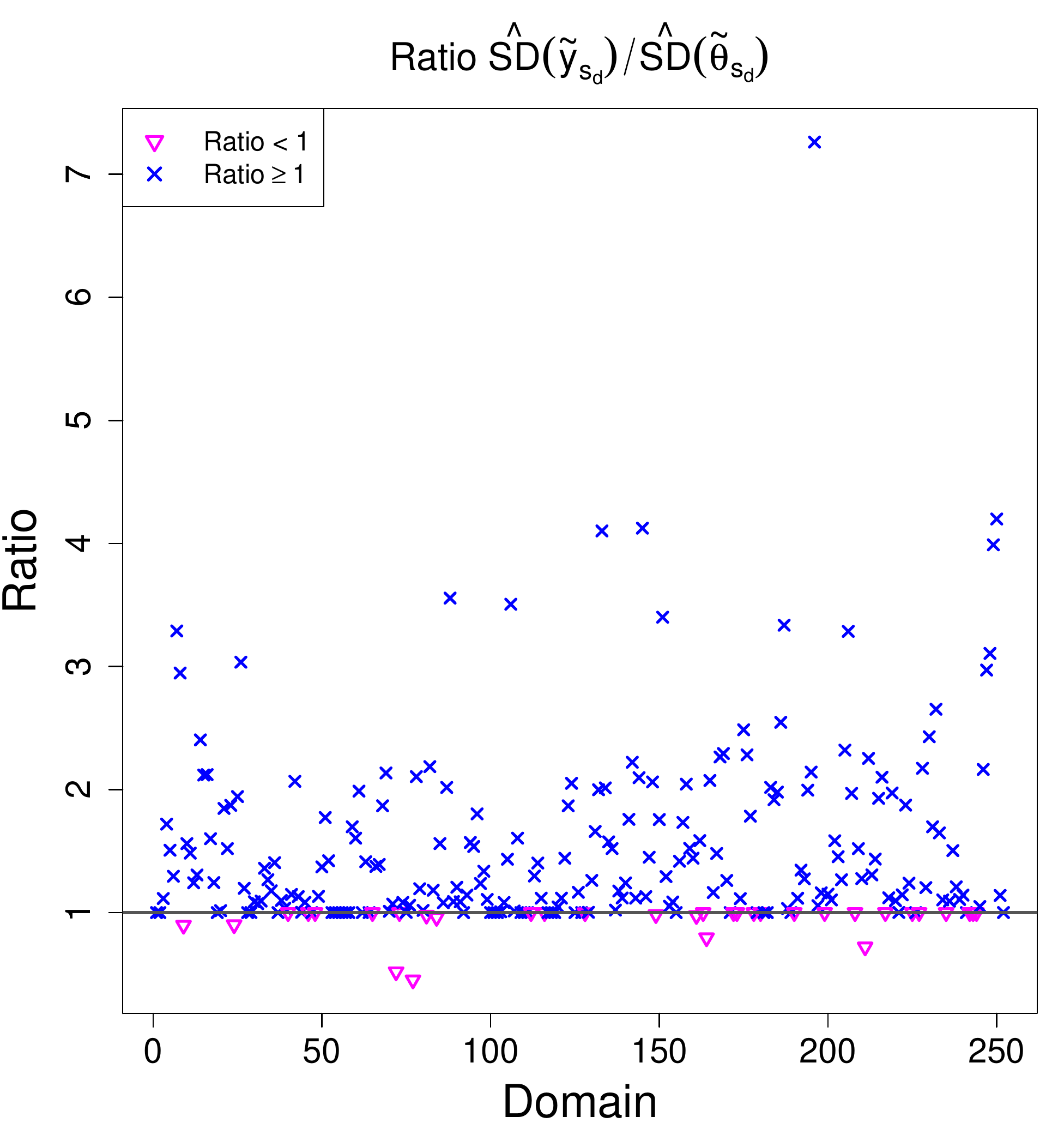}}
		\caption{Ratio of the estimated standard errors of unconstrained estimates over those for constrained estimates for the 2015 NSCG data.}
		\label{fig:nscg_sdratio}
	\end{center}
\end{figure} 

\begin{figure}[ht!] 
	\begin{center}
		\subfigure[Field category=2.]{\label{fig:nscg_slices_1}\includegraphics[width=.49\textwidth]{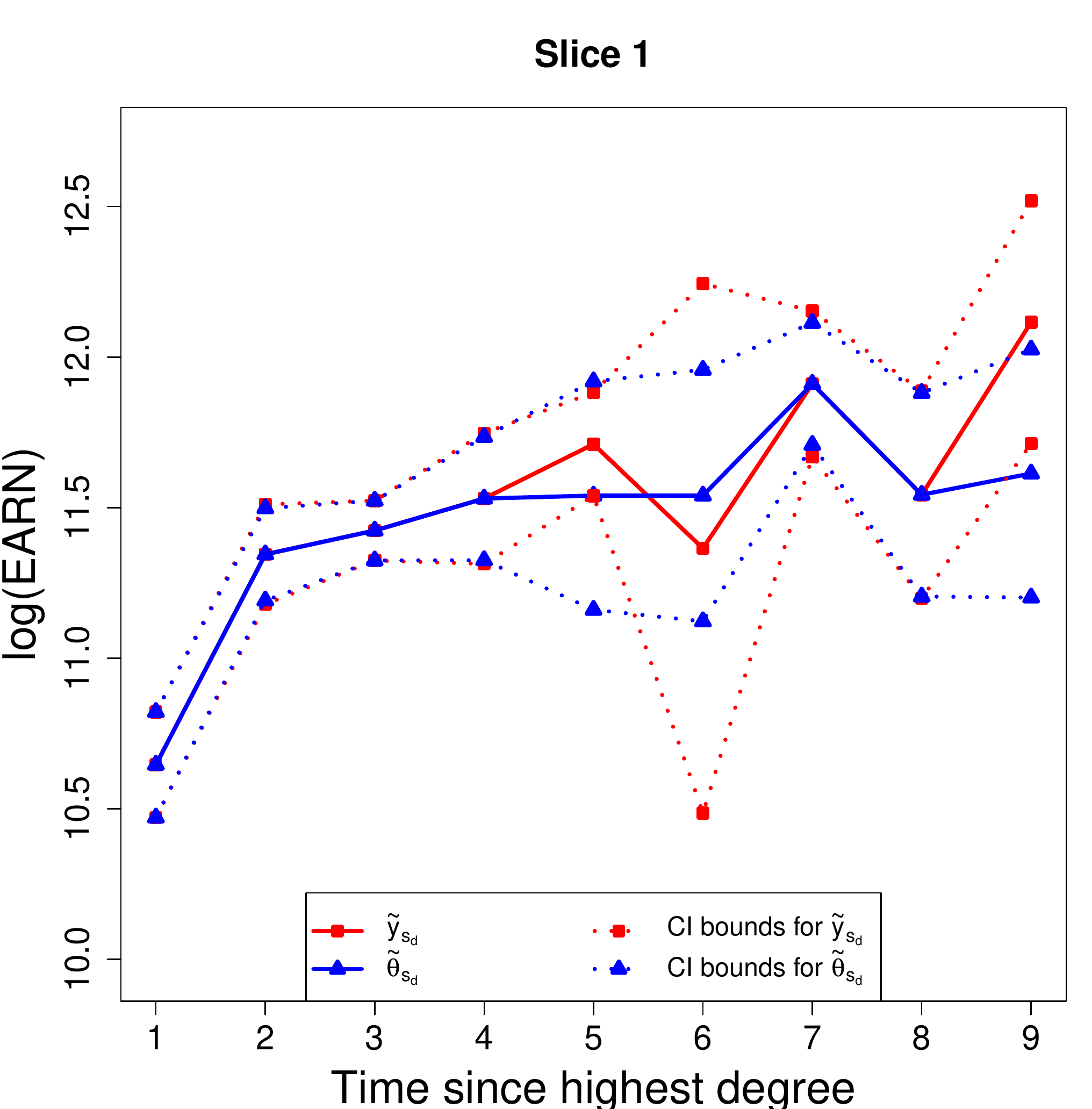}}\
		\subfigure[Time since highest degree=9.]{\label{fig:nscg_slices_2}\includegraphics[width=.49\textwidth]{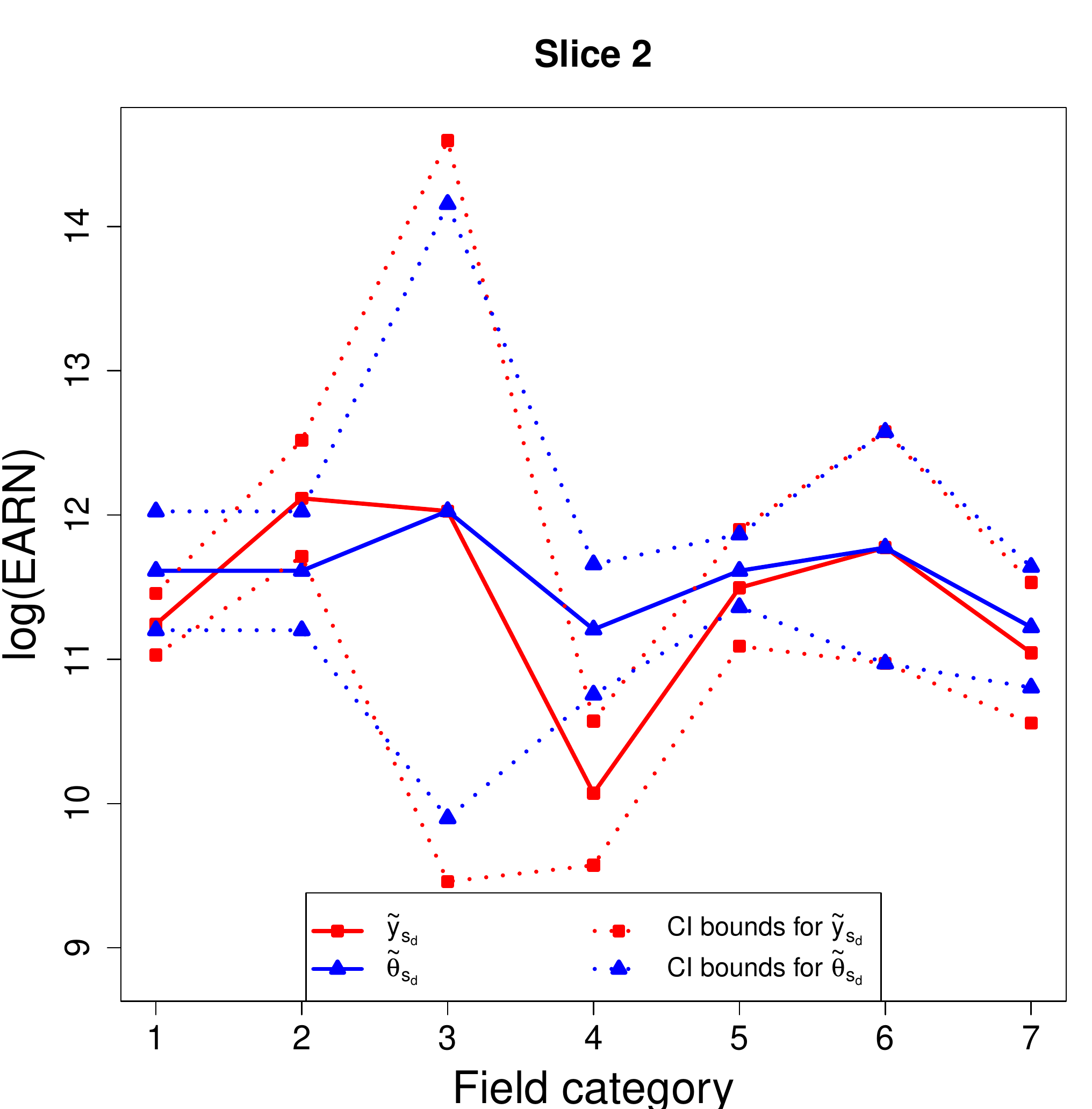}} \\
		\subfigure[Field category=2.]{\label{fig:nscg_slices_3}\includegraphics[width=.49\textwidth]{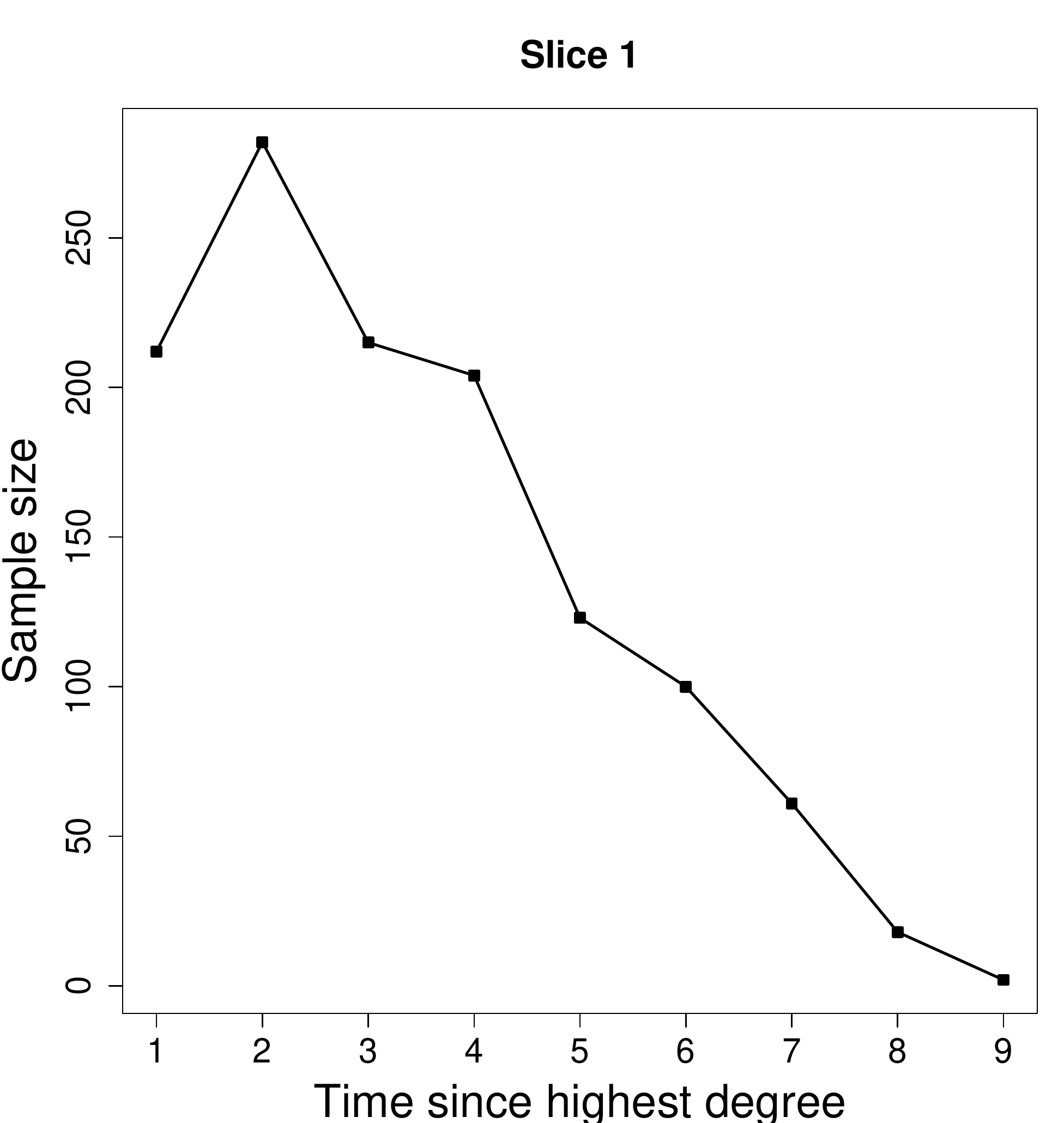}}\
		\subfigure[Time since highest degree=9.]{\label{fig:nscg_slices_4}\includegraphics[width=.49\textwidth]{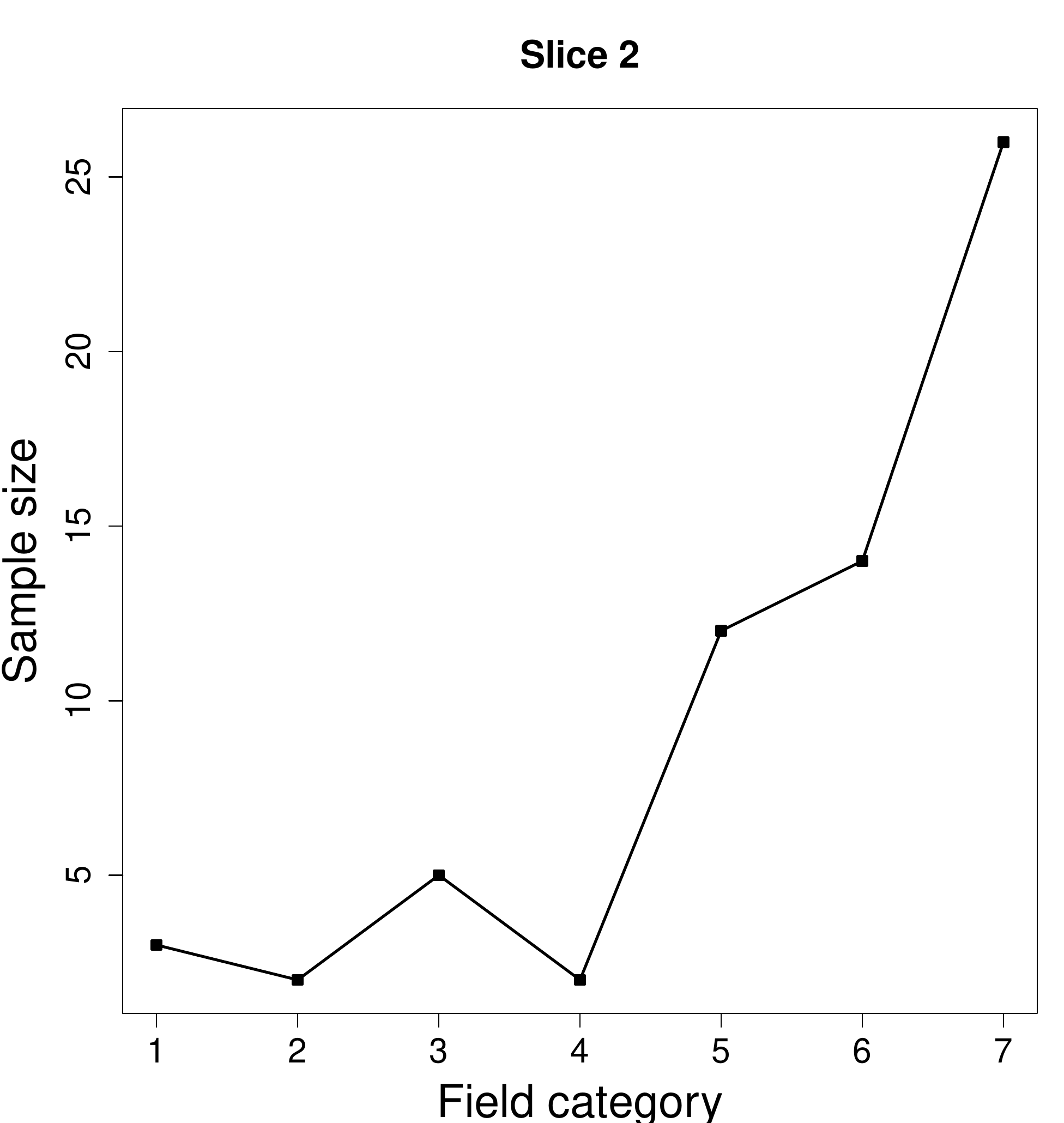}} 
		\caption{Slice plots of: unconstrained and constrained estimates with Wald confidence intervals (top) and sample sizes (bottom) for the 2015 NSCG data, given that Postgrad=YES and Supervise=YES.}
		\label{fig:nscg_slices}
	\end{center}
\end{figure}

\section{Conclusions} \label{sec:conclusions}

We proposed a methodology to estimate domain means which takes into account both design-based estimators and reasonable shape restrictions, and it was shown to largely improve their estimation and variability, especially on small domains. As this new methodology covers a broad range of shape assumptions beyond univariate monotonicity, it aims to jointly take advantage of several types of qualitative information that arises naturally for survey data. We also proposed a design-based variance estimation method of the estimator. However, as this method depends solely on the set $J$ that represents the linear space where the constrained estimator lands, then it tends to overestimate the variance. Replication-based methods are shown to behave similarly. Hence, further research might be carried out to develop variance estimation methods that do not ignore the randomness associated to the set $J$. From the computational side, it is based on the Cone Projection Algorithm which is efficiently implemented in the package \texttt{coneproj}. Thus, it is presented as an easy-to-implement attractive alternative for small area estimation.

We identify some possible direct implications of our proposed methodology. For cases of missing data, our methodology has the potential of `bounding' domains with no observations, which will provide some (instead of none) knowledge regard those domain means. Further, if population-level information is available, then a model-assisted based estimator that makes use of our proposed methodology could be developed. Under partial orderings, such estimator would be equivalent to a poststratified estimator, which uses the sample-selected pooling as the post strata.

Among some research extensions of interest, monotone restrictions might be relaxed, as these can be very strict assumptions for certain populations. Further, shape selection tools for survey data might be developed. As an immediate consequence of these tools, data-driven methods that selects the most appropriate amount of relaxation will be available. In addition, the presented methodology may be adapted to allow for covariates, leading to the development of methods that are analogous to partial linear additive models with shape restrictions.


\bibliography{CristianOliva_bib}{}
\bibliographystyle{apalike}

\appendix

\section{Appendix} \label{sec:appendix2}

The first part of this appendix contains all lemmas (with proofs) used to prove the theoretical results discussed in this paper. Complete proofs of these results are included at the end of this appendix. The proof of Lemma \ref{lem:lincomb} can be also found in \citet[Ch. 1]{fenchel53}.

\begin{lemma} \label{lem:lincomb}
	If a non-zero vector can be written as the positive linear combination of  linearly dependent vectors, then it can be expressed as the positive linear combination of a linearly independent subset of these.   
\end{lemma}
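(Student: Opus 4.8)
The plan is to prove this by a \emph{support-reduction} argument (the same mechanism that underlies Carath\'eodory's theorem and the existence of basic feasible solutions in linear programming). Write the hypothesis as $\boldsymbol{v} = \sum_{j \in S} a_j \boldsymbol{\gamma}_j$ with coefficients $a_j \geq 0$, and immediately discard every index with $a_j = 0$ so that, without loss of generality, $a_j > 0$ for all $j \in S$. If the family $\{\boldsymbol{\gamma}_j : j \in S\}$ is already linearly independent, there is nothing to prove. Otherwise it is linearly dependent, and the strategy is to use a nontrivial dependence relation to drive one coefficient to zero while preserving both nonnegativity of the remaining coefficients and the value $\boldsymbol{v}$, thereby strictly shrinking $S$. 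Iterating yields an independent subset, and $\boldsymbol{v} \neq \boldsymbol{0}$ guarantees the process cannot collapse to the empty combination.

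The key step is the elimination. Since the vectors are dependent there exist scalars $b_j$, not all zero, with $\sum_{j \in S} b_j \boldsymbol{\gamma}_j = \boldsymbol{0}$. As this relation is homogeneous, replacing $\boldsymbol{b}$ by $-\boldsymbol{b}$ if necessary, I may assume at least one $b_j > 0$. For every $t \in \mathbb{R}$ one then has the one-parameter family of representations $\boldsymbol{v} = \sum_{j \in S} (a_j - t\,b_j)\,\boldsymbol{\gamma}_j$. I would set $t^* = \min_{j:\, b_j > 0} a_j / b_j$, which is strictly positive because each such $a_j > 0$. At $t = t^*$ the coefficient of a minimizing index vanishes; for the other indices with $b_j > 0$ the new coefficient $a_j - t^* b_j \geq 0$ by the choice of $t^*$; for indices with $b_j < 0$ it equals $a_j + t^*|b_j| > 0$; and for $b_j = 0$ it is unchanged and positive. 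Hence the updated coefficients are all nonnegative, at least one is zero, and they still represent $\boldsymbol{v}$, so I may delete that index from $S$.

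This reduction decreases $|S|$ by at least one at each pass while maintaining a nonnegative representation of $\boldsymbol{v}$; since $|S|$ is a positive integer the procedure terminates, and it must stop at a linearly independent family because dependence always enables another elimination. Crucially, the support can never become empty: an empty positive combination equals $\boldsymbol{0}$, contradicting $\boldsymbol{v} \neq \boldsymbol{0}$, so a nonempty independent subset $S^* \subseteq S$ with $\boldsymbol{v} = \sum_{j \in S^*} a_j' \boldsymbol{\gamma}_j$ and $a_j' \geq 0$ remains. I expect the only delicate point to be the sign bookkeeping in choosing $t^*$ and verifying nonnegativity simultaneously across the three cases $b_j > 0$, $b_j < 0$, $b_j = 0$; everything else is routine. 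The hypothesis $\boldsymbol{v} \neq \boldsymbol{0}$ is used solely to rule out the degenerate empty representation.
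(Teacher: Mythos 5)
Your proof is correct and follows essentially the same route as the paper's: a nontrivial dependence relation $\sum_j b_j \boldsymbol{\gamma}_j = \boldsymbol{0}$ with some $b_j>0$, subtraction of $t^*\,b_j$ with $t^* = \min_{j:\,b_j>0} a_j/b_j$ to zero out at least one coefficient while keeping the rest nonnegative, and iteration until independence, with $\boldsymbol{v}\neq\boldsymbol{0}$ ruling out the empty combination. The sign bookkeeping across the cases $b_j>0$, $b_j<0$, $b_j=0$ is handled correctly, so no changes are needed.
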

\begin{proof}
	Let $\boldsymbol{v}$ be a non-zero vector such that it can be written as $\boldsymbol{v}=\sum_{i=1}^{k}a_i\boldsymbol{\ell_i}$; where $a_i>0$ for $i=1,2,\dots, k$, and $\{ \boldsymbol{\ell}_1, \boldsymbol{\ell}_2, \dots, \boldsymbol{\ell}_k \}$ is a set of linearly dependent vectors. Since this set of vectors is not linearly independent, then there exists constants $b_k$ (not all different than zero)  such that $\sum_{i=1}^{k}b_i \boldsymbol{\ell}_i =\boldsymbol{0}$. Without loss of generality, assume that there is at least one $b_i$ that is positive. Now, let $I_0$ be the set of indexes given by
	\begin{equation*}
		I_0=\underset{i \; : \; b_i>0 }{\arg \min } \frac{a_i}{b_i}.
	\end{equation*}
	Note that $I_0$ cannot contain all indexes $\{1,2,\dots, k\}$ because $\boldsymbol{v}$ is a non-zero vector. Hence, for any index $i_0 \in I_0$,
	\begin{equation*}
	\boldsymbol{v}=\sum_{i=1}^{k} \left(a_i-\frac{a_{i_0}}{b_{i_0}}b_i \right)\boldsymbol{\ell}_i=\sum_{i\notin I_0} \left(a_i-\frac{a_{i_0}}{b_{i_0}}b_i \right)\boldsymbol{\ell}_i
	\end{equation*}
	which means that the vector $\boldsymbol{v}$ can be also written as a positive linear combination of a proper subset of $\{ \boldsymbol{\ell}_1, \boldsymbol{\ell}_2, \dots, \boldsymbol{\ell}_k \}$. Finally, note that we can repeat the above argument until it is not possible to find constants $b_i\neq 0$ such that $\sum_{i}b_i \boldsymbol{\ell}_i =\boldsymbol{0}$. Thus, the resulting subset of vectors of $\{ \boldsymbol{\ell}_1, \boldsymbol{\ell}_2, \dots, \boldsymbol{\ell}_k \}$ has to be linearly independent, and $\boldsymbol{v}$ can be written as a positive linear combination of them. 
\end{proof}

\begin{lemma} \label{lem:irred}
	If $\boldsymbol{A}$ is a $m \times D$ irreducible matrix and $\boldsymbol{S}$ is a $D\times D$ diagonal matrix, then $\boldsymbol{AS}$ is also irreducible. 
\end{lemma}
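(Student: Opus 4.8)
The plan is to deduce the irreducibility of $\boldsymbol{AS}$ from that of $\boldsymbol{A}$ by observing that right-multiplication by a nonsingular diagonal matrix is an invertible linear map which leaves the \emph{coefficients} of any linear combination of the rows unchanged. Throughout I take $\boldsymbol{S}$ to be nonsingular, i.e.\ all its diagonal entries are nonzero; this is the only case relevant to the paper, since the lemma is applied with $\boldsymbol{S}=\boldsymbol{W}_s^{-1/2}$, whose diagonal entries $\sqrt{\widehat{N}/\widehat{N}_d}$ are strictly positive, so that $\boldsymbol{S}$ is invertible. Write the rows of $\boldsymbol{A}$ as the row vectors $\boldsymbol{a}_1^{\top},\dots,\boldsymbol{a}_m^{\top}$, so that the rows of $\boldsymbol{AS}$ are $\boldsymbol{a}_1^{\top}\boldsymbol{S},\dots,\boldsymbol{a}_m^{\top}\boldsymbol{S}$.

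The key identity is the linearity of right multiplication: for any scalars $c_1,\dots,c_m$,
\begin{equation*}
\sum_{j} c_j\,(\boldsymbol{a}_j^{\top}\boldsymbol{S}) = \Big(\sum_{j} c_j\,\boldsymbol{a}_j^{\top}\Big)\boldsymbol{S},
\end{equation*}
and, because $\boldsymbol{S}$ is nonsingular, the right-hand side is the zero vector if and only if $\sum_j c_j\boldsymbol{a}_j^{\top}=\boldsymbol{0}$. Since the very same coefficients $c_j$ appear on both sides, any positive (nonnegative) linear relation among the rows of $\boldsymbol{AS}$ corresponds, with identical coefficients, to one among the rows of $\boldsymbol{A}$, and conversely. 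I would use this to verify the two defining conditions of irreducibility in turn. For the first condition, suppose toward a contradiction that some row $\boldsymbol{a}_i^{\top}\boldsymbol{S}$ of $\boldsymbol{AS}$ were a positive linear combination $\sum_{j\neq i}c_j\,\boldsymbol{a}_j^{\top}\boldsymbol{S}$ with $c_j\geq 0$; applying the identity and cancelling $\boldsymbol{S}$ on the right (legitimate by nonsingularity) gives $\boldsymbol{a}_i^{\top}=\sum_{j\neq i}c_j\,\boldsymbol{a}_j^{\top}$, contradicting the irreducibility of $\boldsymbol{A}$.

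For the second condition, the same reasoning applies to the origin: if $\sum_j c_j\,\boldsymbol{a}_j^{\top}\boldsymbol{S}=\boldsymbol{0}$ for some $c_j\geq 0$ not all zero, then $\sum_j c_j\,\boldsymbol{a}_j^{\top}=\boldsymbol{0}$ by nonsingularity, so the origin would be a positive linear combination of the rows of $\boldsymbol{A}$, again a contradiction. Hence neither forbidden relation can hold for $\boldsymbol{AS}$, and $\boldsymbol{AS}$ is irreducible. There is no serious obstacle here; the only point requiring care is that the argument rests on the nonsingularity (not the positivity) of the diagonal entries of $\boldsymbol{S}$ — it is precisely the cancellation of $\boldsymbol{S}$ on the right that would break down if $\boldsymbol{S}$ were permitted a zero on its diagonal, which is why the invertibility of $\boldsymbol{S}$ must be in force.
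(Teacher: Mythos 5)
Your proof is correct and takes the same route as the paper, whose entire argument is the one-line observation that the result is immediate from the nonsingularity of $\boldsymbol{S}$; you have simply written out the cancellation argument that this remark leaves implicit. You are also right to flag that nonsingularity must be added to the hypotheses (the statement as written only says ``diagonal''), and that this is harmless in context since the lemma is only applied with $\boldsymbol{S}=\boldsymbol{W}_s^{-1/2}$, whose diagonal entries are strictly positive.
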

\begin{proof}
	This is an immediate result derived from the fact that $\boldsymbol{S}$ is non-singular.
\end{proof}

\begin{lemma} \label{lem:linequiv}
	Let $\boldsymbol{A}$ be a $m \times D$ matrix. Also, let $\boldsymbol{S}_1$ and $\boldsymbol{S}_2$ be $D \times D$ diagonal matrices. For any set $J \subseteq \{1,2,\dots, m\}$, denote $V_{i,J}$ to be the set of vectors in rows $J$ of $\boldsymbol{A}_{i}=\boldsymbol{AS}_i$, $i=1,2$. Then, for any $J^* \subseteq J$,
	\begin{equation*} 
		\mathcal{L}(V_{1,J^*}) = \mathcal{L}(V_{1,J}) \iff \mathcal{L}(V_{2,J^*}) = \mathcal{L}(V_{2,J}).
	\end{equation*}
\end{lemma}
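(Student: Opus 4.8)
The plan is to reduce both sides of the claimed equivalence to a single rank condition on the original matrix $\boldsymbol{A}$ that does not involve the index $i$. The starting point is the observation, already used implicitly in the proof of Lemma \ref{lem:irred}, that the diagonal matrices occurring in this context are non-singular, so that right multiplication by $\boldsymbol{S}_i$ is a linear isomorphism of $\mathbb{R}^D$. Writing $\boldsymbol{A}_J$ and $\boldsymbol{A}_{J^*}$ for the submatrices of $\boldsymbol{A}$ formed by the rows indexed by $J$ and $J^*$, the vectors in $V_{i,J}$ are exactly the rows of $\boldsymbol{A}_J\boldsymbol{S}_i$, and similarly those in $V_{i,J^*}$ are the rows of $\boldsymbol{A}_{J^*}\boldsymbol{S}_i$.

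First I would note that since $J^* \subseteq J$, we always have $V_{i,J^*} \subseteq V_{i,J}$, hence the inclusion $\mathcal{L}(V_{i,J^*}) \subseteq \mathcal{L}(V_{i,J})$. Consequently the equality $\mathcal{L}(V_{i,J^*}) = \mathcal{L}(V_{i,J})$ holds if and only if the two spans have the same dimension, that is, if and only if $\text{rank}(\boldsymbol{A}_{J^*}\boldsymbol{S}_i) = \text{rank}(\boldsymbol{A}_J\boldsymbol{S}_i)$, using that the dimension of the span of the rows of a matrix equals its rank.

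Next I would invoke the standard fact that right multiplication by a non-singular matrix preserves rank, so that $\text{rank}(\boldsymbol{A}_J\boldsymbol{S}_i) = \text{rank}(\boldsymbol{A}_J)$ and $\text{rank}(\boldsymbol{A}_{J^*}\boldsymbol{S}_i) = \text{rank}(\boldsymbol{A}_{J^*})$ for each $i$. Thus the condition $\mathcal{L}(V_{i,J^*}) = \mathcal{L}(V_{i,J})$ is equivalent to $\text{rank}(\boldsymbol{A}_{J^*}) = \text{rank}(\boldsymbol{A}_J)$, a statement about $\boldsymbol{A}$ alone that is the same for $i=1$ and $i=2$. Chaining the two equivalences then yields $\mathcal{L}(V_{1,J^*}) = \mathcal{L}(V_{1,J}) \iff \mathcal{L}(V_{2,J^*}) = \mathcal{L}(V_{2,J})$, which is the claim.

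There is no serious technical obstacle here; the only point that must be handled with care is the non-singularity of $\boldsymbol{S}_1$ and $\boldsymbol{S}_2$, on which the argument genuinely depends, since a singular diagonal factor could collapse rows and break the equivalence. In every use of this lemma the diagonal matrix is of the form $\boldsymbol{W}^{-1/2}$ with strictly positive diagonal entries, so non-singularity is guaranteed, matching the convention adopted in Lemma \ref{lem:irred}. An equivalent, more conceptual route would replace the rank bookkeeping by the remark that the isomorphism $\boldsymbol{v} \mapsto \boldsymbol{v}\boldsymbol{S}_i$ commutes with taking linear spans and is injective on subspaces, so that each side is equivalent to the span equality for the untransformed rows of $\boldsymbol{A}$; I would present whichever version reads more cleanly in the final writeup.
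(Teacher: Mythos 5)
Your proof is correct, and it takes a mildly but genuinely different route from the paper's. The paper argues by direct containment: it notes $\mathcal{L}(V_{2,J^*}) \subseteq \mathcal{L}(V_{2,J})$ automatically, then takes an arbitrary $\boldsymbol{v} \in \mathcal{L}(V_{2,J})$, transports it to $\mathcal{L}(V_{1,J})$ via the isomorphism $\boldsymbol{S}_1\boldsymbol{S}_2^{-1}$, uses the hypothesis to re-express it over the rows indexed by $J^*$, and transports back. Your argument instead reduces each span equality to the single, $i$-free condition $\operatorname{rank}(\boldsymbol{A}_{J^*}) = \operatorname{rank}(\boldsymbol{A}_J)$, using that the nested spans are equal iff their dimensions agree and that rank is invariant under right multiplication by a non-singular matrix. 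Both arguments hinge on exactly the same fact (invertibility of the $\boldsymbol{S}_i$); yours buys a more symmetric statement that makes the equivalence manifest without chasing elements, while the paper's version is self-contained at the level of explicit vectors. Your remark about non-singularity is well taken: the lemma as stated only requires the $\boldsymbol{S}_i$ to be diagonal, yet it is false for singular $\boldsymbol{S}_i$ (e.g.\ $\boldsymbol{S}_2 = \boldsymbol{0}$ collapses all spans on one side only), and the paper's own proof silently invokes $\boldsymbol{S}_2^{-1}$; in every application $\boldsymbol{S}_i$ is of the form $\boldsymbol{W}^{-1/2}$ with strictly positive diagonal, so the hypothesis is satisfied, but it deserves to be stated explicitly as you do.
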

\begin{proof}
	Let $\boldsymbol{A}_{i,J}=\boldsymbol{A}_J\boldsymbol{S}_i$, $i=1,2$; where $\boldsymbol{A}_J$ denotes the submatrix of $\boldsymbol{A}$ that contains the rows in positions $J$. First, assume that $\mathcal{L}(V_{1,J^*}) = \mathcal{L}(V_{1,J})$. Since $J^* \subseteq J$, it is straightforward that $\mathcal{L}(V_{2,J^*}) \subseteq \mathcal{L}(V_{2,J})$. Now, consider any $\boldsymbol{v} \in \mathcal{L}(V_{2,J})$. Hence, $\boldsymbol{v}=\boldsymbol{A}_{2,J}^{\top}\boldsymbol{a}=\boldsymbol{S}_2\boldsymbol{A}_{J}^{\top}\boldsymbol{a}$ for some vector $\boldsymbol{a}$. Then, we have $\boldsymbol{S}_1\boldsymbol{S}_2^{-1}\boldsymbol{v}=\boldsymbol{S}_1\boldsymbol{A}_{J}^{\top}\boldsymbol{a} \in \mathcal{L}(V_{1,J})$. By assumption, there exists a vector $\boldsymbol{b}$ such that $\boldsymbol{S}_1\boldsymbol{S}_2^{-1}\boldsymbol{v}=\boldsymbol{S}_1\boldsymbol{A}_{J^*}^{\top}\boldsymbol{b}$. Therefore, $\boldsymbol{v}=\boldsymbol{S}_2\boldsymbol{A}_{J^*}^{\top}\boldsymbol{b} \in \mathcal{L}(V_{2,J^*})$. Thus, $\mathcal{L}(V_{2,J}) \subseteq \mathcal{L}(V_{2,J^*})$. Analogously, we can prove that $\mathcal{L}(V_{2,J^*}) = \mathcal{L}(V_{2,J})$ implies $\mathcal{L}(V_{1,J^*}) = \mathcal{L}(V_{1,J})$.	
\end{proof}

\begin{lemma} \label{lem:bounded}
	Under Assumptions A1-A5, then:
	\begin{enumerate}[(i)]
		\item The $N^{-1}\widehat{t}_d$ are uniformly bounded in $s_N$.
		\item The $N^{-1}\widehat{N}_d$ are uniformly bounded above and uniformly bounded away from zero in $s_N$.
		\item var$(N^{-1}\widehat{t}_d)=O(n_N^{-1})$ and var$(N^{-1}\widehat{N}_d)=O(n_N^{-1})$
		\item $\mathbb{E}[(N^{-1}\widehat{t}_d-r_d\mu_d)^2]=O(n_N^{-1})$ and $\mathbb{E}[(N^{-1}\widehat{N}_d-r_d)^2]=O(n_N^{-1})$.
	\end{enumerate}
\end{lemma}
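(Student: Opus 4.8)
The plan is to treat the four claims in order, since each later part reuses the elementary bounds established earlier. For part (i), I would write $N^{-1}\widehat{t}_d = N^{-1}\sum_{k \in s_d} y_k/\pi_k$ and bound it crudely by $|N^{-1}\widehat{t}_d| \leq (N\lambda)^{-1}\sum_{k \in U_d}|y_k| \leq (N\lambda)^{-1}\sum_{k \in U}|y_k|$, using $\pi_k \geq \lambda$ from A5 and enlarging the sum from $s_d$ to all of $U$. The power-mean (Jensen) inequality gives $N^{-1}\sum_{k\in U}|y_k| \leq (N^{-1}\sum_{k\in U}y_k^4)^{1/4}$, which is bounded by A2, so $N^{-1}\widehat{t}_d$ is uniformly bounded in $s_N$. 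For part (ii), the upper bound on $N^{-1}\widehat{N}_d = N^{-1}\sum_{k\in s_d}1/\pi_k$ follows from $1/\pi_k \leq 1/\lambda$ together with $n_{d,N} \leq n_N$ and $\limsup n_N/N < 1$ (A4). For the lower bound I would use $\pi_k \leq 1$, so that $1/\pi_k \geq 1$ and $N^{-1}\widehat{N}_d \geq n_{d,N}/N \geq \epsilon n_N/(DN)$ by A4; since $\liminf n_N/N > 0$, this is bounded away from zero.

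Part (iii) is the technical heart. I would invoke the standard Horvitz-Thompson variance identity to write $\mathrm{var}(N^{-1}\widehat{t}_d) = N^{-2}\sum_{k,l\in U_d}\Delta_{kl}(y_k/\pi_k)(y_l/\pi_l)$, then split into the diagonal ($k=l$) and off-diagonal ($k\neq l$) contributions. On the diagonal, $\Delta_{kk}=\pi_k(1-\pi_k)$ produces terms $N^{-2}\sum_{k\in U_d}(1-\pi_k)y_k^2/\pi_k$, bounded by $(N\lambda)^{-1}\cdot N^{-1}\sum_{k\in U}y_k^2 = O(N^{-1})$ using A5 and the boundedness of $N^{-1}\sum_{k \in U} y_k^2$ (again from A2 via the power-mean inequality). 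The off-diagonal sum is the delicate piece: bounding $|\Delta_{kl}|$ by $\max_{k\neq l}|\Delta_{kl}|$ and $|y_k|/\pi_k$ by $|y_k|/\lambda$ produces a factor $(\sum_{k\in U}|y_k|)^2 = O(N^2)$ multiplying $N^{-2}\lambda^{-2}\max_{k\neq l}|\Delta_{kl}|$, and the A5 condition $\max_{k\neq l}|\Delta_{kl}| = O(n_N^{-1})$ collapses the whole expression to $O(n_N^{-1})$. Since $n_N \asymp N$ by A4, both pieces are $O(n_N^{-1})$. The claim for $\widehat{N}_d$ is the special case $y_k \equiv 1$, whose fourth moment is trivially bounded.

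Finally, for part (iv) I would use the bias-variance decomposition $\mathbb{E}[(N^{-1}\widehat{t}_d - r_d\mu_d)^2] = \mathrm{var}(N^{-1}\widehat{t}_d) + (\mathbb{E}[N^{-1}\widehat{t}_d] - r_d\mu_d)^2$. Design-unbiasedness of the HT estimator gives $\mathbb{E}[N^{-1}\widehat{t}_d] = (N_d/N)\overline{y}_{U_d}$, and the algebraic split $(N_d/N)\overline{y}_{U_d} - r_d\mu_d = (N_d/N - r_d)\overline{y}_{U_d} + r_d(\overline{y}_{U_d} - \mu_d)$ shows the bias is $O(N^{-1/2})$ by A3, since $\overline{y}_{U_d}$ is itself $O(1)$ as it converges to $\mu_d$. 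Squaring yields $O(N^{-1}) = O(n_N^{-1})$, and combining with part (iii) closes the bound; the argument for $\widehat{N}_d$ is identical with $y_k \equiv 1$. The main obstacle throughout is the off-diagonal variance term in part (iii), where the near-cancellation of the $N^{-2}$ prefactor against the $O(N^2)$ double sum of $|y_k|$ must be balanced precisely against the $O(n_N^{-1})$ decay of $\max_{k\neq l}|\Delta_{kl}|$ supplied by A5.
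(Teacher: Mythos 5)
Your proposal is correct and follows essentially the same route as the paper's proof: the same crude bound via $\pi_k\ge\lambda$ and Assumption A2 for (i), the same sandwich $\epsilon n_N/(DN)\le N^{-1}\widehat{N}_d\le\lambda^{-1}$ for (ii), the same diagonal/off-diagonal split of the Horvitz--Thompson variance controlled by $n_N\max_{k\neq l}|\Delta_{kl}|=O(1)$ for (iii), and the same bias--variance decomposition with Assumption A3 for (iv). You merely make explicit a few steps the paper leaves implicit (the power-mean bound on $N^{-1}\sum_{k\in U}|y_k|$ and the Cauchy--Schwarz collapse of the off-diagonal double sum), which is fine.
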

\begin{proof}
	\begin{enumerate}[(i)]
		\item Note that
		\begin{equation*}
		\frac{|\widehat{t}_d|}{N}=\left| \frac{\sum_{k \in s_d} y_k/\pi_k}{N} \right| \leq \frac{\sum_{k \in U} |y_k|}{\lambda N}
		\end{equation*}
		which does not depend on $s_N$, and is bounded independently of $N$ by Assumption A2. 
		\item From Assumptions A4 and A5, note that 
		\begin{equation*}
		\frac{\epsilon n_N}{DN} \leq \frac{n_d}{N} \leq \frac{\widehat{N}_d}{N}= N^{-1}\sum_{k \in s_d} 1/\pi_k  \leq \lambda^{-1} N^{-1}N_d \leq \lambda^{-1},
		\end{equation*}
		where both lower and upper bounds do not depend on $s_N$, and are bounded for all $N$ by Assumptions A1, A2 and A4.
		\item Note that		
		\begin{equation*}
		n_N\text{var}(N^{-1}\widehat{t}_d)=n_N\text{var}\left(N^{-1}\sum_{k \in s_d} y_k/\pi_k\right) \leq \frac{\sum_{k \in U_d}y_k^2}{\lambda^2 N}\left(  \frac{n_N}{N}+n_N\max_{k,l \in U_d: k\neq l } |\Delta_{kl}| \right)
		\end{equation*}
		which is bounded by Assumptions A2, A4 and A5. Setting $y_k \equiv 1$ and following an analogous argument, it can be shown that $n_N\text{var}(N^{-1}\widehat{q}_d )=O(1)$.
		\item Since
		\begin{equation*}
		\mathbb{E}\left[\left(N^{-1}\widehat{t}_d-r_d\mu_d\right)^2\right] = \text{var}\left(N^{-1}\widehat{t}_d\right) + \left(\frac{N_d}{N}\overline{y}_{U_d} - r_d\mu_d\right)^2,
		\end{equation*}
		then Assumption A3 and (iii) lead to the desired conclusion. Analogously, $\mathbb{E}\left[\left(N^{-1}\widehat{q}_d-r_d\right)^2\right]=O(n_N^{-1})$.
	\end{enumerate} 
\end{proof}

\begin{proof}[Proof of Theorem \ref{theo:linspace}.]
	First, suppose that $\Pi(\boldsymbol{z} | \Omega^0)=\Pi(\boldsymbol{z} | \mathcal{L}(V_J))=\boldsymbol{0}$. In that case, any subset $J^* \subset J$
	such that $V_{J}$ is linearly independent will satisfy $\Pi(\boldsymbol{z} | \mathcal{L}(V_{J^*}))=\boldsymbol{0} \in \overline{\mathcal{F}}_{J*}$. Hence, it is enough to choose $J^* \subset J$ such that $V_{J^*}$ is linearly independent and spans $\mathcal{L}(V_J)$. Now, suppose that $\Pi(\boldsymbol{z} | \Omega^0) \neq \boldsymbol{0}$. Since $\Pi(\boldsymbol{z} | \Omega^0)=\Pi(\boldsymbol{z} | \mathcal{L}(V_J)) \in \mathcal{\overline{F}}_{J}$, then $\Pi(\boldsymbol{z}| \mathcal{L}(V_J))$ can be written as the positive linear combination of vectors $\boldsymbol{\gamma}_j$, $j \in J$. Moreover, $\langle \boldsymbol{z}-\Pi(\boldsymbol{z}| \mathcal{L}(V_J)), \boldsymbol{\gamma}_j \rangle = 0$ for $j \in J$, and $\langle \boldsymbol{z}-\Pi(\boldsymbol{z}| \mathcal{L}(V_J)), \boldsymbol{\gamma}_j \rangle \leq 0$, for $j \notin J$. From Lemma \ref{lem:lincomb}, there exists $J_0 \subset J$ such that $V_{J_0}$ is linearly independent and $\Pi(\boldsymbol{z}| \mathcal{L}(V_J))$ can be written as a positive linear combination of the vectors in $V_{J_0}$, which implies that $\Pi(\boldsymbol{z} | \mathcal{L}(V_J)) \in \mathcal{\overline{F}}_{J_0}$. In addition, since $\langle \boldsymbol{z}-\Pi(\boldsymbol{z}| \mathcal{L}(V_J)), \boldsymbol{\gamma}_j \rangle = 0$ for $j\in J_0$, then $\Pi(\boldsymbol{z}| \mathcal{L}(V_{J_0}))=\Pi(\boldsymbol{z}| \mathcal{L}(V_J))$. Thus, $\Pi(\boldsymbol{z}| \Omega^0)=\Pi(\boldsymbol{z}| \mathcal{L}(V_{J_0}))$. If $\mathcal{L}(V_{J_0})=\mathcal{L}(V_J)$ then $J^*=J_0$ satifies all required conditions. Now, assume that $\mathcal{L}(V_{J_0}) \subset \mathcal{L}(V_J)$. The fact that $\Pi(\boldsymbol{z}| \mathcal{L}(V_{J_0}))=\Pi(\boldsymbol{z}| \mathcal{L}(V_{J}))$ implies that $\Pi(\boldsymbol{z}| \mathcal{L}(V_{J_1}))=\Pi(\boldsymbol{z}| \mathcal{L}(V_{J_0}))$ for any set $J_1$ such that $J_0 \subseteq J_1 \subseteq J$. Further, since $\Pi(\boldsymbol{z}| \mathcal{L}(V_{J_0})) \in \mathcal{\overline{F}}_{J_0}$ then  $\Pi(\boldsymbol{z}| \mathcal{L}(V_{J_1})) \in \mathcal{\overline{F}}_{J_1}$. Thus, it is enough to choose the set $J^*$ such that $J_0 \subset J^* \subset J$ and $V_{J^*}$ is a linearly independent set that spans $\mathcal{L}(V_{J})$. This concludes the proof.   
\end{proof}

\begin{proof}[Proof of Theorem \ref{theo:badsets}.]
	Let $\boldsymbol{A}_\mu$, $\boldsymbol{A}_{\mu,J}$ and $\boldsymbol{\gamma}_{\mu_d}$ be the analogous versions of $\boldsymbol{A}_s$, $\boldsymbol{A}_{s,J}$ and $\boldsymbol{\gamma}_{s_d}$ obtained by substituting $\boldsymbol{\tilde{y}}_s$ and $\boldsymbol{W}_s$ by $\boldsymbol{\mu}$ and $\boldsymbol{W}_\mu$, respectively. Further, note that Lemma \ref{lem:irred} assures that both $\boldsymbol{A}_{s}$ and $\boldsymbol{A}_{\mu}$ are irreducible since $\boldsymbol{A}$ is.
	
	First, suppose $\emptyset \notin \mathcal{G}_\mu$ and let $J=\emptyset$. Then, from conditions in  Equation \ref{eq:KKTpolar}, $\emptyset \in \mathcal{\tilde{G}}_s$ if and only if $\langle \boldsymbol{\tilde{z}}_s, \boldsymbol{\gamma}_{s_j}\rangle \leq 0$ for $j=1,2,\dots, m$. In contrast, suppose that $\langle \boldsymbol{z}_{\mu}, \boldsymbol{\gamma}_{\mu_j}\rangle \leq 0$ for $j=1,2,\dots, m$. Hence, $\emptyset \in \mathcal{G}_{\mu}$, which contradicts our choice of $J$. Therefore, there exists $j_0$ such that $\langle \boldsymbol{z}_{\mu}, \boldsymbol{\gamma}_{\mu_{j_0}}\rangle > 0$. Then, we have
	\begin{align*}
	P\left(\emptyset \in \mathcal{\tilde{G}}_s  \right) & \leq P\left(0 \geq \langle \boldsymbol{\tilde{z}}_s, \boldsymbol{\gamma}_{s_{j_0}}\rangle \right) \\
	&= P\left( \langle \boldsymbol{z}_\mu, \boldsymbol{\gamma}_{\mu_{j_0}}\rangle - \langle \boldsymbol{\tilde{z}}_s, \boldsymbol{\gamma}_{s_{j_0}}\rangle \geq \langle \boldsymbol{z}_\mu, \boldsymbol{\gamma}_{\mu_{j_0}}\rangle \right) \\
	&\leq \frac{1}{\langle \boldsymbol{z}_\mu, \boldsymbol{\gamma}_{\mu_{j_0}}\rangle ^2 }  \mathbb{E} \left[ \left(\langle \boldsymbol{\tilde{z}}_s, \boldsymbol{\gamma}_{s_{j_0}}\rangle -  \langle \boldsymbol{z}_\mu,\boldsymbol{\gamma}_{\mu_{j_0}} \rangle \right)^2  \right] 
	\end{align*}
	where the last inequality is obtained by an application of Chebyshev's inequality. We show now that the expected value in the last term is $O(n_N^{-1})$. Note that $\langle \boldsymbol{\tilde{z}}_s, \boldsymbol{\gamma}_{s_{j_0}}\rangle$ is a function of the $N^{-1}\widehat{t}_d$ and the $N^{-1}\widehat{N}_d$. Let $f_1(N^{-1}\widehat{t}_1, \dots,N^{-1}\widehat{t}_D, N^{-1}\widehat{N}_1, \dots, N^{-1}\widehat{N}_D)$ be such a function. An application of the Mean Value Theorem to the continuous function $f_1(\cdot)$ (and to its first and second  derivative functions) along with Lemma $\ref{lem:bounded}$ (i)-(ii), lead to the conclusion that $|f_1(\cdot)|$ and its first and second derivative functions are uniformly bounded for all $N$. Moreover, $f_1(N^{-1}\widehat{t}_1, \dots, N^{-1}\widehat{t}_D,$ $N^{-1}\widehat{N}_1, \dots, N^{-1}\widehat{N}_D)$ and its first and second derivative functions, evaluated at $N^{-1}\widehat{t}_d=r_d\mu_d$ and $N^{-1}\widehat{N}_d=r_d$, are uniformly bounded for all $N$. By defining $g_1(\cdot)$ to the function $g_1(\cdot)=[f_1(\cdot)-f_1(r_1\mu_1,\dots, r_D\mu_D, r_1, \dots,r_D)]^2=[f_1(\cdot)-\langle \boldsymbol{z}_\mu, \boldsymbol{\gamma}_{\mu_{j_0}} \rangle]^2$, we can make use of Lemma \ref{lem:bounded} (iv) to fulfill the assumptions of Theorem 5.4.3 in \cite{fuller96} with $\alpha=1$, $s=2$, and $a_N=O(N^{-1/2})$. Therefore, $\mathbb{E} \left[ \left(\langle \boldsymbol{\tilde{z}}_s, \boldsymbol{\gamma}_{s_{j_0}}\rangle -  \langle \boldsymbol{z}_\mu,\boldsymbol{\gamma}_{\mu_{j_0}} \rangle \right)^2  \right]=O(n_N^{-1})$, since $g_1(\cdot)$ and its first derivative with respect to the $N^{-1}\widehat{t}_d$ and the $N^{-1}\widehat{N}_d$ evaluate to zero when $N^{-1}\widehat{t}_d=r_d\mu_d$, $N^{-1}\widehat{N}_d=r_d$.
	
	Now, take $J \neq \emptyset$ where $J \notin \mathcal{G}_\mu$. Assume that $J \in \mathcal{\tilde{G}}_s$. Theorem \ref{theo:linspace} guarantees that we can always choose a subset $J^* \subseteq J$ such that $J^* \in \mathcal{\tilde{G}}_s$, $V_{s,J^*}$ is linearly independent, and $\mathcal{L}(V_{s,J^*})=\mathcal{L}(V_{s,J})$. Note that $\Pi(\boldsymbol{\tilde{z}}_s | \mathcal{L}(V_{s,J^*}))=\boldsymbol{A}_{s,J^*}^{\top}(\boldsymbol{A}_{s,J^*}\boldsymbol{A}_{s,J^*}^{\top})^{-1}\boldsymbol{A}_{s,J^*}\boldsymbol{\tilde{z}}_s$. Let $\boldsymbol{\tilde{b}}_{s,J^*}=(\boldsymbol{A}_{s,J^*}\boldsymbol{A}_{s,J^*}^{\top})^{-1}\boldsymbol{A}_{s,J^*}\boldsymbol{\tilde{z}}_s$. Hence, from conditions in Equation \ref{eq:KKTpolar}, we have that $J \in \mathcal{\tilde{G}}_s$ implies both $\boldsymbol{\tilde{b}}_{s,J^*} \geq \boldsymbol{0}$, and $\langle \boldsymbol{\tilde{z}}_s - \boldsymbol{A}_{s,J^*}^{\top}\boldsymbol{\tilde{b}}_{s,J^*} , \boldsymbol{\gamma}_{s_j}\rangle \leq 0$ for $j =1,2,\dots, m$. Now, assume that $\boldsymbol{b}_{\mu,J^*}=(\boldsymbol{A}_{\mu,J^*}\boldsymbol{A}_{\mu,J^*}^{\top})^{-1}\boldsymbol{A}_{\mu,J^*}\boldsymbol{z}_\mu \geq \boldsymbol{0}$, and $\langle \boldsymbol{z}_\mu - \boldsymbol{A}_{\mu,J^*}^{\top}\boldsymbol{b}_{\mu,J^*} , \boldsymbol{\gamma}_{\mu_j} \rangle \leq 0$ for $j =1,2,\dots, m$. These conditions imply that $J^* \in \mathcal{G}_{\mu}$ which contradicts the original assumption that $J \notin \mathcal{G}_\mu$, since  $\mathcal{L}(V_{\mu,J^*})=\mathcal{L}(V_{\mu,J})$ by Lemma \ref{lem:linequiv}. Therefore, either there is an element of $\boldsymbol{b}_{\mu,J^*}$ that is strictly negative or there exists $j_0$ such that $\langle \boldsymbol{z}_\mu - \boldsymbol{A}_{\mu,J^*}^{\top}\boldsymbol{b}_{\mu,J^*} , \boldsymbol{\gamma}_{\mu_{j_0}}\rangle > 0$. Hence, proving that $P(J_t \in \mathcal{\tilde{G}}_s)=O(n_N^{-1})$ in any case will conclude the proof.
	 
	First, suppose the $j_0$-th element of $\boldsymbol{b}_{\mu,J^*}$ is strictly negative. That is, $\boldsymbol{e}_{j_0}^{\top}\boldsymbol{b}_{\mu,J^*} <0$, where $\boldsymbol{e}_j$ denotes the indicator vector that is $1$ for entry $j$ and $0$ otherwise. Then, we have
	\begin{align*}
	P\left( J \in \mathcal{\tilde{G}}_s  \right) &\leq P\left(\boldsymbol{e}_{j_0}^{\top}\boldsymbol{\tilde{b}}_{s,J^*} \geq 0 \right)\\
	&= P\left( \boldsymbol{e}_{j_0}^{\top}\boldsymbol{\tilde{b}}_{s,J^*} - \boldsymbol{e}_{j_0}^{\top}\boldsymbol{b}_{\mu,J^*} \geq  -\boldsymbol{e}_{j_0}^{\top}\boldsymbol{b}_{\mu,J^*} \right) \\
	&\leq \frac{1}{(\boldsymbol{e}_{j_0}^{\top}\boldsymbol{b}_{\mu,J^*})^2} \mathbb{E}\left[ (\boldsymbol{e}_{j_0}^{\top}\boldsymbol{\tilde{b}}_{s,J^*} - \boldsymbol{e}_{j_0}^{\top}\boldsymbol{b}_{\mu,J^*})^2 \right]
	\end{align*}
	where the last inequality is obtained by an application of Chebyshev's inequality. Let $f_2(N^{-1}\widehat{t}_1, \dots,$  $N^{-1}\widehat{t}_D, N^{-1}\widehat{N}_1,\dots, N^{-1}\widehat{N}_D)=\boldsymbol{e}_{j_0}^{\top}\boldsymbol{\tilde{b}}_{s,J^*}$ and $g_2(\cdot)=[f_2(\cdot)-\boldsymbol{e}_{j_0}^{\top}\boldsymbol{b}_{\mu,J^*}]^2$. An analogous argument than the one used before to the smooth functions $f_1$ and $g_1$ can be applied to the smooth functions $f_2$ and $g_2$, to conclude that the expected value of the last term of the inequality is $O(n_N^{-1})$. 
	
	Lastly, suppose that there exists $j_0$ such that $
	\kappa_{\boldsymbol{z}_\mu,j_0} = \langle \boldsymbol{z}_\mu - \boldsymbol{A}_{\mu,J_t^*}^{\top}\boldsymbol{b}_{\mu,J^*} , \boldsymbol{\gamma}_{\mu_{j_0}}\rangle >0$. Also, denote $\kappa_{\boldsymbol{\tilde{z}}_s,j_0} = \langle \boldsymbol{\tilde{z}}_s - \boldsymbol{A}_{s,J_t^*}^{\top}\boldsymbol{\tilde{b}}_{s,J^*} , \boldsymbol{\gamma}_{s_{j_0}}\rangle$. Then, we have
	\begin{align*}
	P\left( J \in \mathcal{\tilde{G}}_{s}  \right) &\leq P\left(0 \geq \kappa_{\boldsymbol{\tilde{z}}_s,j_0} \right)\\
	&= P\left( \kappa_{\boldsymbol{z}_\mu,j_0} - \kappa_{\boldsymbol{\tilde{z}}_s,j_0} \geq  \kappa_{\boldsymbol{z}_\mu,j_0} \right) \\
	&\leq \frac{1}{\kappa_{\boldsymbol{z}_\mu,j_0}^2} \mathbb{E}\left[ (\kappa_{\boldsymbol{\tilde{z}}_s,j_0} - \kappa_{\boldsymbol{z}_\mu,j_0} )^2 \right]
	\end{align*}
	where the last inequality is an application of the Chebyshev's inequality. By applying an analogous argument than before to the smooth functions $f_3(N^{-1}\widehat{t}_1, \dots, N^{-1}\widehat{t}_D, N^{-1}\widehat{N}_1, \dots, N^{-1}\widehat{N}_D)=\kappa_{\boldsymbol{\tilde{z}}_s,j_0}$ and $g_3(\cdot)=[f_3(\cdot)- \kappa_{\boldsymbol{z}_\mu,j_0}]^2$, we conclude that $\mathbb{E}\left[ (\kappa_{\boldsymbol{\tilde{z}}_s,j_0} - \kappa_{\boldsymbol{z}_\mu,j_0} )^2 \right]=O(n_N^{-1})$.
\end{proof}

\begin{proof}[Proof of Theorem \ref{theo:normaldist}.]
	Take any $J \in \mathcal{\tilde{G}}_s$ and any domain $d$. Note that the condition $\boldsymbol{A\mu}\geq \boldsymbol{0}$ implies that $\emptyset \in \mathcal{G}_\mu$. Then, we can write $\tilde{\theta}_{s_d}-\overline{y}_{U_d}$ as
	\begin{equation*}
	\tilde{\theta}_{s_d}-\overline{y}_{U_d} = (\tilde{y}_{s_d}-\overline{y}_{U_d})1_{J=\emptyset} + \sum_{J_G \in \mathcal{G}_\mu \setminus \emptyset} (\tilde{\theta}_{s_d,J_G}-\overline{y}_{U_d})1_{J_G=J} + \sum_{J_G \in \mathcal{G}_\mu^c} (\tilde{\theta}_{s_d,J_G}-\overline{y}_{U_d})1_{J_G=J}, 
	\end{equation*}
	where we used that $\tilde{\theta}_{s_d,\emptyset}=\tilde{y}_{s_d}$. Now, note that the unfeasible variance estimator $AV(\tilde{\theta}_{s_d,J})$ can be written as
	\begin{equation*}
	AV(\tilde{\theta}_{s_d,J})=AV(\tilde{y}_{s_d})1_{J=\emptyset} + \sum_{J_G \in \mathcal{G}_\mu \setminus \emptyset} AV(\tilde{\theta}_{s_d,J_G})1_{J = J_G} +\sum_{J_G \in \mathcal{G}_\mu^c } AV(\tilde{\theta}_{s_d,J_G})1_{J = J_G}.
	\end{equation*}
	Hence, 
	\begin{align*}
	& AV(\tilde{\theta}_{s_d,J})^{-1/2}(\tilde{\theta}_{s_d}-\overline{y}_{U_d}) = AV(\tilde{y}_{s_d})^{-1/2}(\tilde{y}_s-\overline{y}_{U_d})1_{J=\emptyset} \\
	&+ \sum_{J_G \in \mathcal{G}_\mu \setminus \emptyset}AV(\tilde{\theta}_{s_d,J_G})^{-1/2}(\tilde{\theta}_{s_d,J_G} - \overline{y}_{U_d})1_{J=J_G} 
	+ \sum_{J_G \in \mathcal{G}_\mu^c} AV(\tilde{\theta}_{s_d,J_G})^{-1/2}(\tilde{\theta}_{s_d,J_G} - \overline{y}_{U_d})1_{J=J_G} \\
	&= \left[ AV(\tilde{y}_{s_d})^{-1/2}(\tilde{y}_s-\overline{y}_{U_d})1_{J=\emptyset} 
	+ \sum_{J_G \in \mathcal{G}_\mu \setminus \emptyset}AV(\tilde{\theta}_{s_d,J_G})^{-1/2}(\tilde{\theta}_{s_d,J_G} - \theta_{U_d,J_G})1_{J=J_G} \right. \\
	&+ \left. \sum_{J_G \in \mathcal{G}_\mu^c} AV(\tilde{\theta}_{s_d,J_G})^{-1/2}(\tilde{\theta}_{s_d,J_G} - \theta_{U_d,J_G})1_{J=J_G} \right]	
	+ \left[ \sum_{J_G \in \mathcal{G}_\mu \setminus \emptyset}AV(\tilde{\theta}_{s_d,J_G})^{-1/2}(\theta_{U_d,J_G} - \overline{y}_{U_d})1_{J=J_G} \right] \\
	&+ \left[ \sum_{J_G \in \mathcal{G}_\mu^c} AV(\tilde{\theta}_{s_d,J_G})^{-1/2}(\theta_{U_d,J_G} - \overline{y}_{U_d})1_{J=J_G} \right] \\
	&=c_{1N}+c_{2N}+c_{3N},
	\end{align*} 
	where $\theta_{U_d,J_G}$ is the population version of $\tilde{\theta}_{s_d,J_G}$. Note that each term of the form $AV(\tilde{\theta}_{s_d,J_G})^{-1/2}(\tilde{\theta}_{s_d,J_G} - \theta_{U_d,J_G} )$ converges in distribution to a standard normal distribution by Assumption A6. Thus, $c_{1N}$ converges in distribution to a standard normal distribution. Now, note that for each $J_G \in \mathcal{G}_\mu^c$, then
	\begin{equation*}
	 AV(\tilde{\theta}_{s_d,J_G})^{-1/2}(\theta_{U_d,J_G} - \overline{y}_{U_d})= [n_N AV(\tilde{\theta}_{s_d,J_G})]^{-1/2}[n_N^{1/2}(\theta_{U_d,J_G} - \overline{y}_{U_d})] = O(n_N^{1/2}).
	\end{equation*}
	In contrast, for $J_G \in \mathcal{G}_{\mu}^{c}$, we have that $1_{J=J_G}=O_p(n_N^{-1})$ by Theorem \ref{theo:badsets} (since $J \in \mathcal{\tilde{G}}_s$). Thus, $c_{3N}=O_p(n_N^{-1/2})$. Now, note that $\theta_{U_d,J_G} - \overline{y}_{U_d}=O(N^{-1/2})$ when $J_G \in \mathcal{G}_\mu \setminus \emptyset$ by Assumption A3. Hence, for any $J_G \in \mathcal{G}_\mu \setminus \emptyset$,
	\begin{equation*}
	AV(\tilde{\theta}_{s_d,J_G})^{-1/2}(\theta_{U_d,J_G} - \overline{y}_{U_d})= [n_N AV(\tilde{\theta}_{s_d,J_G})]^{-1/2}[n_N^{1/2}(\theta_{U_d,J_G} - \overline{y}_{U_d})] = O\left(\sqrt{\frac{n_N}{N}} \right),
	\end{equation*}
	which implies that $c_{2N}=O\left(\sqrt{\frac{n_N}{N}} \right)$ (bias term). Thus, by combining these properties of $c_{1N}$, $c_{2N}$ and $c_{3N}$, we conclude that
	\begin{equation*}
	AV(\tilde{\theta}_{s_d,J})^{-1/2}(\tilde{\theta}_{s_d} - \overline{y}_{U_d}) \overset{\mathcal{L}}{\rightarrow} \mathcal{N}(B, 1),
	\end{equation*}
	where $B=O(\sqrt{\frac{n_N}{N}})$.
	
	Now, write the feasible variance estimator $\widehat{V}(\tilde{\theta}_{s_d,J})$ as 
	\begin{equation*}
	\widehat{V}(\tilde{\theta}_{s_d,J})=\widehat{V}(\tilde{y}_{s_d})1_{J=\emptyset} + \sum_{J_G \in \mathcal{G}_\mu \setminus \emptyset} \widehat{V}(\tilde{\theta}_{s_d,J_G})1_{J = J_G} +\sum_{J_G \in \mathcal{G}_\mu^c } \widehat{V}(\tilde{\theta}_{s_d,J_G})1_{J = J_G}.
	\end{equation*}
	By Assumption A6, we have that $\widehat{V}(\tilde{\theta}_{s_d,J_G}) - AV(\tilde{\theta}_{s_d,J_G})=O(n_N^{-1})$ for any $J_G$. Latter implies that $\widehat{V}(\tilde{\theta}_{s_d,J})^{1/2} - AV(\tilde{\theta}_{s_d,J})^{1/2}=O(n_N^{-1/2})$. Hence, an application of Slutsky's theorem allows to replace $AV(\tilde{\theta}_{s_d,J})^{-1/2}$ by $\widehat{V}(\tilde{\theta}_{s_d,J})^{-1/2}$. 
	
	To prove the last part of this theorem, just note that $\boldsymbol{A\mu}>\boldsymbol{0}$ implies $\mathcal{G}_\mu=\{\emptyset\}$. Thus, the term $c_{2N}$ does not exist, so the bias term vanishes.
\end{proof}

\end{document}